\definecolor{myurlcolor}{rgb}{0,0,0.7}
\definecolor{myrefcolor}{rgb}{0.1,0,0.9}
\definecolor{THc}{rgb}{0.9,0.3,0.2}
\newcommand{\ignore}[1]{}
\newtheorem{lemma}{Lemma}
\newtheorem{theorem}{Theorem}
\newtheorem{proposition}{Proposition}
\newtheorem*{theorem*}{Theorem}
\begin{document}

\title{Quantum Cubature Codes}

\author{Yaoling Yang}
\email{yaoling.space@gmail.com}
\affiliation{Quantum Innovation Centre (Q.InC), Agency for Science, Technology and Research (A*STAR), 2 Fusionopolis Way, Innovis
\#08-03, Singapore 138634, Republic of Singapore}

\author{Andrew Tanggara}
\email{andrew.tanggara@gmail.com}
\affiliation{Centre for Quantum Technologies, National University of Singapore, 3 Science Drive 2, Singapore 117543}
\affiliation{Nanyang Quantum Hub, School of Physical and Mathematical Sciences, Nanyang Technological University, Singapore 639673}

\author{Tobias Haug}
\email{tobias.haug@u.nus.edu}
\affiliation{Quantum Research Center, Technology Innovation Institute, Abu Dhabi, UAE}

\author{Kishor Bharti}
\email{kishor.bharti1@gmail.com}
\affiliation{Institute of High Performance Computing (IHPC), Agency for Science, Technology and Research (A*STAR), 1 Fusionopolis
Way, \#16-16 Connexis, Singapore 138632, Republic of Singapore}
\affiliation{Quantum Innovation Centre (Q.InC), Agency for Science, Technology and Research (A*STAR), 2 Fusionopolis Way, Innovis \#08-03, Singapore 138634, Republic of Singapore}

\date{\today}

\begin{abstract}
Bosonic codes utilize the infinite-dimensional Hilbert space of harmonic oscillators to encode quantum information, offering a hardware-efficient approach to quantum error correction.  Designing these codes requires precise geometric arrangements of quantum states in the phase space. Here, we introduce Quantum Cubature Codes (QCCs), a powerful and generalized framework for constructing bosonic codes based on superpositions of coherent states. This formalism utilizes cubature formulas from multivariate approximation theory, which connect the continuous geometry of the phase space to discrete, weighted point sets, ensuring the conditions for error correction are met. We demonstrate that this framework provides a unifying perspective, revealing that well-established codes, such as cat codes and the recently proposed quantum spherical codes (QSCs), are specific instances of QCCs corresponding to uniform weights on a single energy shell. The QCC formalism unlocks a vast new design space, encompassing non-uniform superpositions and multi-shell configurations. We leverage this framework to discover several new families of codes derived from Euclidean designs, allowing for greater geometric separation between logical states, which correlates with improved performance under photon loss. Numerical simulations under a pure-loss channel show that these novel multi-shell QCCs can outperform their single-shell counterparts by maximizing geometric separation with optimal energy at fixed pure-loss rate.
\end{abstract}

\maketitle

\section{Introduction}
Quantum error correction (QEC) is indispensable for realizing fault-tolerant quantum computation, protecting fragile quantum information from noise and decoherence~\cite{PhysRevA.52.R2493, PhysRevLett.77.793, aharonov1997fault, PhysRevA.86.032324, RevModPhys.87.307}. While traditional QEC focuses on discrete qubit systems, bosonic codes offer a compelling alternative~\cite{albert2022bosonic,Terhal_2020,noh2021quantumcomputationcommunicationbosonic,Cai_2021,joshi2021quantum, cochrane1999macroscopically, Mirrahimi_2014, leghtas2015confining}. By encoding information in the infinite-dimensional Hilbert space of bosonic modes, such as electromagnetic modes or mechanical resonators, bosonic codes can achieve high hardware efficiency~\cite{Niu_2018,lemonde2024hardwareefficientfaulttolerantquantum,mori2024hardwareefficientbosonicquantumcomputing,Putterman_2025}. A single physical oscillator can host a logical qubit and provide mechanisms to correct dominant errors like photon loss. 

The landscape of bosonic codes is rich, featuring diverse strategies like the famous Gottesman-Kitaev-Preskill (GKP) codes~\cite{Gottesman_2001,Conrad_GKP,Brady_2024,fluhmann_encoding_2019,campagne2020quantum} and binomial codes~\cite{michael2016new,hu2019quantum}. Among the most promising are codes based on superpositions of coherent states~\cite{cochrane1999macroscopically, Mirrahimi_2014, leghtas2015confining, leghtas2013hardware, mirrahimi2014dynamically,guillaud2023quantum, PhysRevX.9.041009}. These codes leverage the geometry of phase space, arranging coherent states in specific constellations to ensure robustness against errors. However, the design of these constellations has often relied on geometric intuition and highly symmetric arrangements, rather than a systematic mathematical framework.

In this work, we introduce Quantum Cubature Codes (QCCs), a comprehensive and mathematically powerful formalism for the systematic construction of coherent-state bosonic codes. QCCs are grounded in the classical theory of cubature formulas~\cite{cools1992survey,cools2003encyclopaedia,stroud1971approximate,sawa2019euclidean,bannai2010cubature}. A cubature formula is a method for numerical integration that achieves exact results for polynomials up to a certain degree by using a weighted summation over a discrete set of points. By linking the requirements of QEC to the mathematics of cubature formulas, the QCC formalism provides a rigorous method for designing bosonic codes. Our approach offers a unifying perspective on coherent-state codes. We show that prominent existing codes are, in fact, special cases of the QCC framework. Cat codes~\cite{cochrane1999macroscopically, leghtas2013hardware, mirrahimi2014dynamically,guillaud2023quantum, PhysRevX.9.041009} and the recently introduced Quantum Spherical Codes (QSCs)~\cite{jain2024quantum} emerge as QCCs characterized by uniform weights and constellations confined to a single energy shell (a sphere in the phase space). Crucially, the QCC formalism generalizes these concepts, unlocking a much broader design space. It naturally incorporates codes with non-uniform superposition weights and, significantly, multi-shell configurations where coherent states are superposed across concentric spheres with different energies.

We leverage this new framework to discover several novel families of bosonic codes based on Euclidean designs, which can be viewed as a special class of cubature formulas~\cite{bannai2010cubature, sawa2019euclidean}. We demonstrate that the additional flexibility provided by the multi-shell structure allows for greater geometric separation between logical states at a fixed mean photon number compared to existing QSCs. Through numerical benchmarking under a bosonic pure-loss channel, we show that these newly discovered QCCs can outperform existing single-shell QSCs, in terms of entanglement fidelity. The QCC formalism thus provides a systematic pathway to optimized bosonic error correction.

\section{Coherent-state constellation codes}

For an $n$-mode bosonic system with annihilation operators $\{a_j\}$ and creation operators $\{a_j^\dagger\}$ for $j=1,\ldots,n$, a multimode coherent state is
\begin{equation}
\ket{\boldsymbol{\alpha}}=\bigotimes_{j=1}^n\ket{\alpha_j},
\end{equation}
where $\boldsymbol{\alpha}=(\alpha_1,\ldots,\alpha_n)\in\mathbb{C}^n$. These states are eigenstates of the annihilation operators, satisfying $a_j\ket{\boldsymbol{\alpha}}=\alpha_j\ket{\boldsymbol{\alpha}}$. Typically, coherent states are not orthogonal to each other, with their overlap given by
\begin{equation}\label{eq:coherent-overlap}
|\langle\boldsymbol{\alpha}\mid \boldsymbol{\beta}\rangle|
=\exp\!\Big(-\tfrac12\|\boldsymbol{\alpha}-\boldsymbol{\beta}\|^2\Big)\,,
\end{equation}
which is exponentially suppressed by the squared Euclidean distance between the amplitudes. With sufficiently large separation, distinct coherent states become nearly orthogonal, namely $\langle\boldsymbol{\alpha}| \boldsymbol{\beta}\rangle \approx 0$.

Consider a general coherent-state based bosonic code. The $k$-th logical codeword (constellation) $\ket{\mathcal{C}_k}$ can be defined as
\begin{equation}
\label{eq:general_const}
    \ket{\mathcal{C}_k}
    = \sum_{\boldsymbol{\alpha}\in\mathcal{C}_k} c_{\boldsymbol{\alpha}}\,\ket{\boldsymbol{\alpha}},
\end{equation}
where $\mathcal{C}_k{\subset} \mathbb{C}^n$ denotes a finite set of complex amplitude vectors, and $\{c_{\boldsymbol{\alpha}}\}$ are normalized coefficients. For such a construction to constitute a valid quantum code, it must satisfy the Knill-Laflamme~(KL) condition~\cite{knill1997theory}. Specifically, given a code subspace spanned by an orthonormal basis of logical states $\{\ket{\mathcal{C}_k}\}$, a set of error operators $\{E_\mu\}$ is correctable if and only if
\begin{equation}\label{eq:kl-correction}
\langle \mathcal{C}_k \mid E_{\mu}^\dagger E_{\nu} \mid \mathcal{C}_\ell \rangle = \delta_{k\ell} \lambda_{\mu\nu},
\end{equation}
where $\lambda_{\mu\nu}$ are elements of a Hermitian matrix independent of the logical state indices $k, \ell$, and $\delta_{k\ell}$ is the Kronecker delta. On the other hand, these error operators are detectable if and only if
\begin{equation}
\label{eq:KL-detection}
\langle \mathcal{C}_k \mid E_\mu \mid \mathcal{C}_\ell \rangle = \delta_{k\ell} \lambda_\mu,
\end{equation}
where $\lambda_\mu$ is a constant. The error set $\{E_\mu\}$ for coherent-state constellation codes mainly comprises two significant classes~\cite{turchette2000decoherence,jain2024quantum}. The first is angular dephasing, which arises from passive linear-optical transformations represented by unitary rotations $R\in\mathrm{U}(n)$ that maps $\ket{\boldsymbol{\alpha}}\mapsto\ket{R\boldsymbol{\alpha}}$. These transformations rotate $\boldsymbol{\alpha}$ while preserving its norm $\|\boldsymbol{\alpha}\|$. For small angular perturbations, such errors are effectively suppressed by large geometric separation in the large-energy limit, where distinct coherent states become nearly orthogonal. The second class is excitation-changing errors, which alter the photon number in each mode and are captured by the \emph{ladder operators}:
\begin{equation}L_{\mathbf{p},\mathbf{q}}(\boldsymbol{a}^\dagger,\boldsymbol{a})= \prod_{j=1}^n (a_j^\dagger)^{p_j} a_j^{q_j},
\end{equation}
where $\mathbf{p},\mathbf{q}\in\mathbb{N}^{n}$ are multi-indices specifying the number of photon creation and annihilation events on the $n$ modes, and $\boldsymbol{a}=(a_1,a_2,\ldots,a_n)$ (respectively, $\boldsymbol{a}^\dagger$) denote the annihilation (creation) operators on the $n$ modes. Products of passive linear-optical transformations and ladder operators form a complete operator basis. Consequently, the Kraus operators of any physical noise channel can be expanded as linear combinations of such products, often as an infinite series~\cite{grimsmo2020quantum}. For bosonic codes to protect against such excitation-changing errors, the logical states must satisfy the KL conditions. For example, for pure-loss errors, the error operators $E_\nu$ can be expressed as polynomials in the annihilation operators, so every product $E_\mu^\dagger E_\nu$ appearing in Eq.~\eqref{eq:kl-correction} can be expanded as a linear combination of monomials $L_{\mathbf{p},\mathbf{q}}(\boldsymbol{a}^\dagger, \boldsymbol{a})$, the KL condition reduces to
\begin{equation}
\bra{\mathcal{C}_k}\,  E_{\mu}^\dagger E_{\nu}\,\ket{\mathcal{C}_\ell} \propto  \sum_{\substack{\boldsymbol{\alpha} \in \mathcal{C}_k \\ \boldsymbol{\beta} \in \mathcal{C}_\ell}}c_{\boldsymbol\alpha}^*c_{\boldsymbol\beta}f_{(\mathbf{p},\mathbf{q})}(\boldsymbol{\alpha}^*,\boldsymbol{\beta})\langle{\boldsymbol{\alpha}}|\boldsymbol{\beta}\rangle,
\end{equation}
where we denote $ f_{(\mathbf{p},\mathbf{q})}(\boldsymbol{\alpha}^*,\boldsymbol{\beta}) = \prod_{j=1}^n (\alpha_j^*)^{p_j} \beta_j^{q_j}$, and $c_{\boldsymbol\alpha}^*$, $c_{\boldsymbol\beta}$ are corresponding coefficients of coherent states. In the large-energy limit, the approximate orthogonality of distinct coherent states ensures that off-diagonal KL condition terms ($k\neq\ell$) vanish, while the diagonal terms simplify to
\begin{equation}
\label{eq:moment-matching}
\langle \mathcal{C}_k \mid E_{\mu}^\dagger E_{\nu}  \mid \mathcal{C}_k \rangle 
\propto \sum_{\boldsymbol{\alpha} \in \mathcal{C}_k} |c_{\boldsymbol\alpha}|^2f_{(\mathbf{p},\mathbf{q})}(\boldsymbol\alpha^*,\boldsymbol\alpha).
\end{equation}
Thus, the physical requirement for error correction translates to a mathematical constraint: for every pair of multi-indices $(\mathbf{p},\mathbf{q})$ in the chosen error set, all constellations $\{\mathcal{C}_k\}$ must yield the same weighted average of $f_{(\mathbf{p},\mathbf{q})}(\boldsymbol\alpha^*,\boldsymbol\alpha)$, independent of $k$.

\section{Cubature code formalism}

The problem of finding coherent state constellations satisfying the KL condition has a direct connection to a mathematical framework in multivariate approximation theory: \emph{cubature formulas}. A cubature formula of degree $t$ for a measure $\sigma$ over a domain $\Omega \subset \mathbb{R}^D$ consists of a finite set of $N$ points $\mathcal{V} = \{\boldsymbol{\alpha}_i\}_{i=1}^N \subset \Omega$ and corresponding weights $\mathcal{W}=\{w_i\}_{i=1}^N$ such that
\begin{equation}\label{eq:cubature}
\sum_{i=1}^{N} w_i f(\boldsymbol{\alpha}_i) = \int_{\Omega} f(\boldsymbol{\alpha})\,\mathrm{d}\sigma(\boldsymbol{\alpha})
\end{equation}
holds for all polynomials $f$ of degree at most $t$. In essence, cubature formulas provide exact polynomial integration via weighted discrete summation.
By selecting constellation $\mathcal{C}_k$ from cubature formulas that compute an identical integral, the discrete summation in Eq.~\eqref{eq:moment-matching} equals the domain integration, namely
\begin{align}
    \sum_{\boldsymbol{\alpha} \in \mathcal{C}_k} |c_{\boldsymbol\alpha}|^2f_{(\mathbf{p},\mathbf{q})}(\boldsymbol\alpha^*,\boldsymbol\alpha) &=  \int_{\Omega} f_{(\mathbf{p},\mathbf{q})}(\boldsymbol\alpha^*,\boldsymbol\alpha) \,\mathrm{d}\sigma(\boldsymbol{\alpha}) \nonumber \\ &=   \lambda_{(\mathbf{p},\mathbf{q})},
    \end{align}
where $\lambda_{(\mathbf{p},\mathbf{q})}$ is a constant independent of the logical state $\ket{\mathcal{C}_k}$, thereby providing a sufficient condition for satisfying the corresponding KL conditions.

We define a QCC as a bosonic code whose logical basis states are built from \emph{weighted constellations} $(\mathcal{V}_k,\mathcal{W}_k)$ that realize degree-$t$ cubature formulas on a common domain $\Omega$. Concretely, each logical state index $k$ is associated with a finite point set (cubature points) $\mathcal{V}_k{=}\{\boldsymbol{\alpha}_i\}_{i=1}^{N_k} \subset \Omega$ together with normalized weights $\mathcal{W}_k{=}\{w_{\boldsymbol{\alpha}}^{(k)}\}_{\boldsymbol{\alpha}\in\mathcal{V}_k}$ satisfying $\sum_{\boldsymbol{\alpha}\in\mathcal{V}_k} w_{\boldsymbol{\alpha}}^{(k)}{=}1$, where all weights are taken to be positive. For each logical index $k$, the pair $(\mathcal{V}_k,\mathcal{W}_k)$ is assumed to realize an equivalent degree-$t$ cubature formula, meaning that all such pairs yield the same integral values for every polynomial of total degree at most $t$. In the case of a spherically symmetric integral, different pairs $(\mathcal{V}_k,\mathcal{W}_k)$ can be generated by applying orthogonal rotations to a reference point set $\mathcal{V}_0$. Such rotations may be chosen either to maximize the minimum distances between constellation points belonging to distinct $\mathcal{V}_k$ or to preserve a desired symmetry. More generally, distinct logical constellations can be constructed from different cubature formulas that compute the same integral over a common domain, thereby ensuring that the KL conditions to hold. The corresponding logical codeword is encoded as the coherent-state superposition
\begin{equation}
\ket{\mathcal{C}_k} \propto \sum_{\boldsymbol{\alpha} \in \mathcal{V}_k} \sqrt{w_{\boldsymbol{\alpha}}^{(k)}} |\boldsymbol{\alpha}\rangle,
\end{equation}
where $|\boldsymbol{\alpha}\rangle$ denotes an $n$-mode coherent state with amplitude vector $\boldsymbol{\alpha}$. Notably, although cubature formulas are typically defined on $\mathbb{R}^D$, one can introduce a complex embedding that maps real nodes to complex amplitudes $\boldsymbol{\alpha}$ while preserving the cubature identity: the weighted sum over the embedded points still reproduces the same integral as in the original real-valued cubature formula; see the Supplemental Material (SM) for details~\cite{SM_QCC}. In this work we mainly consider the case where $D$ is even, and we use the standard real-to-complex embedding $\Phi:\mathbb{R}^D \to \mathbb{C}^{D/2}$ defined by
\begin{equation}\label{eq:complex_embedding}
\Phi(x_1,\ldots,x_D) = (x_1 + i x_2, x_3 + i x_4, \ldots, x_{D-1} + i x_D),
\end{equation}
which identifies real coordinates with complex coherent-state amplitudes.

To see that QCCs form a valid bosonic code, one needs to examine two noise models. For angular dephasing noise, which shifts a coherent state $\ket{\boldsymbol{\alpha}}$ to $\ket{R\boldsymbol{\alpha}}$, this effect becomes (approximately) detectable in the large-energy limit as long as $R\boldsymbol{\alpha}$ does not lie in the code constellation $\{\mathcal{V}_k\}_k$. Consequently, the minimum angular separation between amplitude vectors $\boldsymbol{\alpha}$ that belong to the same shell characterizes the code’s protection against angular dephasing noise.  

The second type is excitation-changing errors; the cubature-based structure implies the following error correction capability:
\begin{theorem} \label{QCC_thm} 
In the large-energy limit, where distinct coherent states become approximately orthogonal, any QCC constructed from degree-$t$ cubature formulas can correct all photon-loss errors $L_{\mathbf{0},\mathbf{q}}=\prod_{j=1}^n a_j^{q_j}$
with $|\mathbf{q}| \le \lfloor t/2\rfloor$, and can detect all photon loss-gain errors 
$L_{\mathbf{p},\mathbf{q}}=\prod_{j=1}^n (a_j^\dagger)^{p_j} a_j^{q_j}$ with $|\mathbf{p}+\mathbf{q}| \le t$.
\end{theorem}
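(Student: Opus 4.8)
The plan is to verify both KL conditions—Eq.~\eqref{eq:kl-correction} for correction and Eq.~\eqref{eq:KL-detection} for detection—directly from the cubature identity, working in the large-energy limit where the paper has already argued that off-diagonal overlaps $\langle\boldsymbol{\alpha}|\boldsymbol{\beta}\rangle$ with $\boldsymbol{\alpha}\in\mathcal{V}_k$, $\boldsymbol{\beta}\in\mathcal{V}_\ell$ and $k\neq\ell$ (or with $\boldsymbol{\alpha}\neq\boldsymbol{\beta}$ within a shell) are exponentially suppressed. First I would treat \emph{detection}: the operator $L_{\mathbf{p},\mathbf{q}}$ acts on a coherent state as $L_{\mathbf{p},\mathbf{q}}\ket{\boldsymbol{\beta}} = \prod_j(a_j^\dagger)^{p_j}\beta_j^{q_j}\ket{\boldsymbol{\beta}}$, and sandwiching with $\bra{\boldsymbol{\alpha}}$ one needs $\bra{\boldsymbol{\alpha}}\prod_j(a_j^\dagger)^{p_j}\ket{\boldsymbol{\beta}} = \prod_j(\alpha_j^*)^{p_j}\langle\boldsymbol{\alpha}|\boldsymbol{\beta}\rangle$. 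Hence $\bra{\mathcal{C}_k}L_{\mathbf{p},\mathbf{q}}\ket{\mathcal{C}_\ell} \propto \sum_{\boldsymbol{\alpha}\in\mathcal{V}_k,\boldsymbol{\beta}\in\mathcal{V}_\ell}\sqrt{w_{\boldsymbol{\alpha}}^{(k)}w_{\boldsymbol{\beta}}^{(\ell)}}\,f_{(\mathbf{p},\mathbf{q})}(\boldsymbol{\alpha}^*,\boldsymbol{\beta})\langle\boldsymbol{\alpha}|\boldsymbol{\beta}\rangle$. In the large-energy limit only the diagonal $\boldsymbol{\alpha}=\boldsymbol{\beta}$ survives (and only when $k=\ell$), leaving $\delta_{k\ell}\sum_{\boldsymbol{\alpha}\in\mathcal{V}_k} w_{\boldsymbol{\alpha}}^{(k)} f_{(\mathbf{p},\mathbf{q})}(\boldsymbol{\alpha}^*,\boldsymbol{\alpha})$. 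The key observation is that $f_{(\mathbf{p},\mathbf{q})}(\boldsymbol{\alpha}^*,\boldsymbol{\alpha}) = \prod_j (\alpha_j^*)^{p_j}\alpha_j^{q_j}$, pulled back through the embedding $\Phi$ of Eq.~\eqref{eq:complex_embedding}, is a (real-coefficient) polynomial in the real coordinates $(x_1,\dots,x_D)$ of total degree exactly $|\mathbf{p}+\mathbf{q}| \le t$; so the degree-$t$ cubature property forces this weighted sum to equal $\int_\Omega f_{(\mathbf{p},\mathbf{q})}\,\mathrm{d}\sigma =: \lambda_{(\mathbf{p},\mathbf{q})}$, a constant independent of $k$. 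That is exactly Eq.~\eqref{eq:KL-detection}.

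For \emph{correction}, I would expand $E_\mu^\dagger E_\nu$ with $E_\mu = L_{\mathbf{0},\mathbf{q}^{(\mu)}}$: since $L_{\mathbf{0},\mathbf{q}}^\dagger L_{\mathbf{0},\mathbf{q}'} = \prod_j (a_j^\dagger)^{q_j} a_j^{q_j'}$ is itself (up to normal-ordering corrections, which only lower the degree) a sum of ladder monomials $L_{\mathbf{p},\mathbf{q}}$ with $|\mathbf{p}+\mathbf{q}| \le |\mathbf{q}^{(\mu)}| + |\mathbf{q}^{(\nu)}| \le 2\lfloor t/2\rfloor \le t$, the matrix element $\bra{\mathcal{C}_k}E_\mu^\dagger E_\nu\ket{\mathcal{C}_\ell}$ is a linear combination of the detection-type quantities just analyzed, each of which is $\delta_{k\ell}$ times a $k$-independent constant. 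Collecting terms gives $\bra{\mathcal{C}_k}E_\mu^\dagger E_\nu\ket{\mathcal{C}_\ell} = \delta_{k\ell}\lambda_{\mu\nu}$ with $\lambda_{\mu\nu}$ Hermitian (Hermiticity is automatic since $(E_\mu^\dagger E_\nu)^\dagger = E_\nu^\dagger E_\mu$), which is Eq.~\eqref{eq:kl-correction}. The normalization constants $\langle\mathcal{C}_k|\mathcal{C}_k\rangle$ also become $k$-independent in the large-energy limit (each tends to $\sum_{\boldsymbol{\alpha}\in\mathcal{V}_k} w_{\boldsymbol{\alpha}}^{(k)} = 1$), so dividing through is harmless.

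The main obstacle is making the ``large-energy limit'' rigorous rather than heuristic: one must control the exponentially small corrections from $\langle\boldsymbol{\alpha}|\boldsymbol{\beta}\rangle$ for $\boldsymbol{\alpha}\neq\boldsymbol{\beta}$—both the genuinely off-diagonal terms and, more subtly, the cross terms where $f_{(\mathbf{p},\mathbf{q})}(\boldsymbol{\alpha}^*,\boldsymbol{\beta})$ is a polynomial (hence only polynomially large) multiplied by $\exp(-\tfrac12\|\boldsymbol{\alpha}-\boldsymbol{\beta}\|^2)$. I would scale all constellations by a common factor $r$, so that $\|\boldsymbol{\alpha}-\boldsymbol{\beta}\| = r\,d_{\min}$ for distinct scaled points while $f_{(\mathbf{p},\mathbf{q})}$ grows only like $r^{|\mathbf{p}+\mathbf{q}|}$, and note that $r^{t}e^{-\tfrac12 r^2 d_{\min}^2}\to 0$; then one states Theorem~\ref{QCC_thm} as the exact $r\to\infty$ limit (or as an approximate code with error $O(r^{t}e^{-\Omega(r^2)})$). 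A secondary technical point worth spelling out is the normal-ordering step: rewriting $a_j^{q_j}(a_j^\dagger)^{p_j}$-type products that arise in $E_\mu^\dagger E_\nu$ in normal order produces lower-degree ladder monomials with combinatorial coefficients, but crucially never \emph{raises} $|\mathbf{p}+\mathbf{q}|$, so the degree budget $\le t$ is preserved; this is the only place where one uses that loss operators are polynomials purely in the $a_j$ (no $a_j^\dagger$), which is what caps the combined degree at $2\lfloor t/2\rfloor$.
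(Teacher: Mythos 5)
Your proposal is correct and follows essentially the same route as the paper's Supplemental Material proof: use near-orthogonality of coherent states in the large-energy limit to eliminate the off-diagonal and cross terms, then invoke the degree-$t$ cubature identity to make the surviving diagonal moments $k$-independent, with the degree count $|\mathbf{q}^{(\mu)}|+|\mathbf{q}^{(\nu)}|\le 2\lfloor t/2\rfloor\le t$ for correction and $|\mathbf{p}+\mathbf{q}|\le t$ for detection. One minor imprecision: $f_{(\mathbf{p},\mathbf{q})}(\boldsymbol{\alpha}^*,\boldsymbol{\alpha})$ pulled back through $\Phi$ is in general a \emph{complex}-coefficient polynomial of degree $|\mathbf{p}+\mathbf{q}|$ in the real coordinates (not real-coefficient, unless $\mathbf{p}=\mathbf{q}$), but the cubature identity extends to such polynomials by splitting into real and imaginary parts, which is precisely the content of the complex-extension lemma in the paper's Supplemental Material.
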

The proof can be found in the Supplemental Material (SM)~\cite{SM_QCC}. This theorem shows that high-degree cubature formulas translate directly into protection against higher-order photon-loss errors, while achieving a larger $t$ typically requires more cubature points in the underlying cubature formula and hence more coherent states in each logical superposition. Conversely, any desired level of loss protection imposes a minimum ``constellation size'', namely one cannot realize a degree-$t$ QCC with fewer coherent-state components than allowed by well-known fundamental bounds on cubature formulas. These trade-offs can be summarized as follows:
\begin{proposition}[Bounds on Constellation Size]
\label{prop:const_size} 
For any QCC constructed from degree-$t$ cubature formulas, there exist bounds $N_{\text{min}}$ and $N_{\text{exist}}$ such that:
\begin{itemize}
\item[(i)] 
Each logical state requires at least $N_{\text{min}}$ superposed coherent states.
\item[(ii)] 
There exists a code construction with at most $N_{\text{exist}}$ superposed coherent states per logical state.
\end{itemize}
\end{proposition}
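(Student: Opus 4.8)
The plan is to reduce both halves of the proposition to classical results on cubature formulas. By construction, the $k$-th logical codeword $\ket{\mathcal{C}_k}$ of a degree-$t$ QCC is a superposition of exactly $N_k=|\mathcal{V}_k|$ coherent states, one per node of the underlying positive-weight, degree-$t$ cubature formula $(\mathcal{V}_k,\mathcal{W}_k)$ for the fixed measure $\sigma$ on $\Omega\subset\mathbb{R}^D$. Hence ``constellation size per logical state'' is precisely the number of nodes of such a formula, and Proposition~\ref{prop:const_size} becomes a translation of the well-developed theory of minimal and existence bounds for cubature. Throughout, the real-to-complex embedding $\Phi$ of Eq.~\eqref{eq:complex_embedding} is an $\mathbb{R}$-linear bijection that sends $D$-variate real polynomials of degree $\le t$ to polynomials of the same degree in $(\boldsymbol\alpha,\boldsymbol\alpha^*)$ and back, so both the node count and the effective degree are preserved and any bound stated over $\mathbb{R}^D$ transfers verbatim to the embedded constellations.

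For part~(i), I would invoke the linear-algebra (M\"oller/Stroud-type) lower bound: a degree-$t$ cubature formula with positive weights on $\Omega$ must use at least $\binom{\lfloor t/2\rfloor + D}{D}$ nodes --- the dimension of the space of $D$-variate polynomials of degree at most $\lfloor t/2\rfloor$ --- because the moment functional restricted to that space is a positive-definite bilinear form that the discrete weighted measure has to reproduce, so its support cannot be smaller than the dimension of the space. Taking $N_{\text{min}}=\binom{\lfloor t/2\rfloor + D}{D}$ proves~(i). When $\Omega$ is a single sphere (the QSC regime) one may replace this with the sharper Delsarte--Goethals--Seidel bound for spherical $t$-designs, and for product-Gaussian measures with the corresponding tight bounds for Euclidean designs, so that $N_{\text{min}}$ can be taken as the best available bound for the particular $\Omega$ at hand.

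For part~(ii), I would use Tchakaloff's theorem (and its measure-theoretic refinement by Bayer--Teichmann): for any $\sigma$ on $\Omega$ with finite moments up to order $t$, there exists a positive-weight degree-$t$ cubature formula supported on at most $N_{\text{exist}}=\binom{t+D}{D}$ nodes. Fixing one such formula $(\mathcal{V}_0,\mathcal{W}_0)$ and, in the spherically symmetric case, generating the remaining logical constellations $\mathcal{V}_k$ by orthogonal rotations $O_k\in\mathrm{O}(D)$ of $\mathcal{V}_0$ --- each again a degree-$t$ cubature formula of the same size for the rotation-invariant $\sigma$ --- produces, via Theorem~\ref{QCC_thm}, a valid QCC whose every logical codeword superposes at most $N_{\text{exist}}$ coherent states. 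The main obstacle is exactly this last point: Tchakaloff yields a single formula, whereas a code needs at least two logical constellations realizing the same degree-$t$ integral. Restricting to rotation-invariant measures sidesteps it cleanly; for a general domain one must argue that enough genuinely distinct same-moment formulas exist (for instance by perturbing nodes within the solution variety of the moment equations, which is positive-dimensional once $N_{\text{exist}}$ exceeds the number of moment constraints), but this affects only the value of the constant, not the existence claim.
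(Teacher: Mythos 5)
Your proposal matches the paper's argument: part (i) is the Fisher-type/M\"oller lower bound on the number of nodes of a positive-weight degree-$t$ cubature formula, part (ii) is Tchakaloff's existence theorem giving $N_{\text{exist}}=\dim\mathscr{P}_t(\Omega)$, and the transfer through the real-to-complex embedding is handled exactly as in the paper's lemma on complex extension of cubature formulas. One caveat on your explicit constant: $N_{\min}=\binom{\lfloor t/2\rfloor+D}{D}$ is only valid when $\Omega$ is full-dimensional in $\mathbb{R}^D$; for a proper subvariety such as a sphere the correct bound is $\dim\mathscr{P}_{\lfloor t/2\rfloor}(\Omega)$, i.e.\ polynomials \emph{restricted} to $\Omega$, which is strictly smaller --- the regular pentagon is a degree-$4$ cubature on $\mathbb{S}^1$ with $5<\binom{4}{2}=6$ nodes --- so the Delsarte--Goethals--Seidel bound is not a ``sharper'' replacement but the correct specialization, which is how the paper states it.
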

In general, the exact values of $N_{\text{min}}$ and $N_{\text{exist}}$ are not known in closed form. 
Analytic lower bounds from cubature theory, such as Fisher-type bound~\cite{delsarte_spherical_1977} and the M\"oller bound~\cite{moller1979lower} (both are rarely saturable), provide computable estimates for $N_{\text{min}}$, while Tchakaloff's theorem~\cite{tchakaloff1957formules,sawa2019euclidean} yields an analytic existence upper bound on $N_{\text{exist}}$. 
These bounds are determined by the dimensions of the relevant polynomial spaces on $\Omega$ (see SM for details), and they provide a natural benchmark for assessing cubature formulas in terms of their number of cubature points. Cubature formulas that saturate any of these bounds are called \textit{minimal} cubature formulas, and the associated point sets can be regarded as \emph{tight} designs~\cite{bannai2015survey}. We thus refer to QCCs whose logical constellation sizes saturate the analytic lower bound (Fisher-type or M\"oller bound) as \emph{tight} QCCs. Typically, when the underlying constellations are built from tight designs, saturation of those lower bounds imposes strong algebraic and geometric constraints, including high symmetry, few distinct inner products, and in many cases complete classification~\cite{delsarte_spherical_1977,bannai2009survey}. This makes constellations based on tight designs the natural building blocks for high-quality QCCs. Later, we will make some remarks on how the bounds on the constellation size imposed by Proposition~\ref{prop:const_size} may give rise to trade-offs between the dimension of a QCC and its error-correcting capabilities, as quantified by its parameters.

\textbf{Code parameters.} 
As coherent-state constellation codes, QCCs inherit a similar distance
characterization to QSCs. To quantify protection against excitation-changing errors, we adopt the distance parameters $\langle t_\downarrow, d_\updownarrow, d_\downarrow \rangle$ from \cite{jain2024quantum}: $(t_\downarrow-1)$ denotes the number of correctable photon-loss errors, $(d_\downarrow-1)$ denotes the number of detectable pure-loss errors ($L_{\mathbf{0},\mathbf{q}}$), and $(d_\updownarrow-1)$ denotes the number of a detectable photon-loss-gain errors ($L_{\mathbf{p},\mathbf{q}}$). It follows from Theorem~\ref{QCC_thm} that we have $t_\downarrow \ge \lfloor t/2 \rfloor$ and $d_\updownarrow \ge t+1$.

Beyond protection against excitation-changing errors, the geometric separation between constellation points characterizes the code's robustness against additional error channels. This separation decomposes into two components: angular separation (along a fixed shell), which governs resilience to dephasing errors, and radial separation (between different shells), which provides additional protection against excitation-changing errors. We quantify the overall geometric separation using the global nearest-neighbor squared Euclidean distance (also known as \emph{resolution}):
\begin{equation}
    d_E = \min_{\boldsymbol{\alpha},\boldsymbol{\beta}} \|\boldsymbol{\alpha} - \boldsymbol{\beta}\|^2.
\end{equation}
where the minimization is over all pair of distinct points $\boldsymbol{\alpha},\boldsymbol{\beta}$ in the union of the constellations $S = \bigcup_k \mathcal{V}_k$.
To enable a fair comparison for $d_E$ between single-shell QSCs
and the multi-shell (QCCs) generalization, we fix the mean photon number
of each logical code state. Given a logical constellation 
$\mathcal{C}_k$, 
we define the effective energy as
\begin{equation}
\bar n := \frac{\sum_{\boldsymbol{\alpha} \in \mathcal{C}_k} w^{(k)}_{\boldsymbol{\alpha}} \|\boldsymbol{\alpha}\|^2}{\sum_{\boldsymbol{\alpha} \in \mathcal{C}_k} w_{\boldsymbol{\alpha}}^{(k)}}.
\end{equation}
For single-shell QSCs we choose $\|\boldsymbol{\alpha}\|^2 = 1$ for all
$\boldsymbol{\alpha} \in \mathcal{C}_k$ so that $\bar n_{\mathrm{QSC}} = 1$. For multi-shell constellations, coherent states on shell $s$ possess energy $\mathcal{R}_s^2$. To enable fair comparison across different code structures, we introduce a global scaling factor $\lambda$ such that $r_s = \lambda \mathcal{R}_s$. The effective mean photon number of the resulting QCC becomes \begin{equation} \label{eq:energy_QCC}
\bar n_{\mathrm{QCC}}(\lambda) =  \sum_s M_s r_s^2= \lambda^2 \sum_s M_s \mathcal{R}_s^2, \end{equation} where $M_s = \sum_{\boldsymbol{\alpha} \in \mathcal{C}_k^{(r_s)}} w_{\boldsymbol{\alpha}}$ denotes the total weight assigned to shell $s$, accounting for potentially non-uniform intra-shell weights. We fix $\lambda$ by requiring $\bar n_{\mathrm{QCC}}(\lambda) = 1$, thereby ensuring that all codes are evaluated under the same energy constraint.

We adopt the notation of~\cite{jain2024quantum} and denote a QCC as $(\!(n, K, d_{E} ,\langle t_\downarrow, d_\updownarrow, d_\downarrow \rangle)\!)$, representing an $n$-mode code encoding $K$ logical states with resolution $d_E$ and $\langle t_\downarrow, d_\updownarrow, d_\downarrow\rangle$ against excitation-changing errors. For QCCs, the resolution $d_E$ measures the distance across multiple shells, extending beyond the single-shell structure of QSCs.

Finally, we remark how the bounds on constellation size given in Proposition~\ref{prop:const_size} may impose some restrictions on the parameters of QCCs constructed from degree-$t$ cubature formulas.
While the degree $t$ imposes lower-bounds $t_\downarrow$ and $d_\updownarrow$ (due to Theorem~\ref{QCC_thm}), how the number of modes $n$, codespace dimension $K$, and resolution $d_E$ are related to $t$ are more indirect.
For a degree-$t$ QCC with logical subspace of dimension $K$ over $n$ modes, Proposition~\ref{prop:const_size} forces the number of points in each of the $K$ constellations defining the codewords to be at least $N_{\min}^{(t)}$ (here we write $N_{\min}^{(t)}$ instead of $N_{\min}$ to emphasize its dependence on $t$ explicitly).
Thus the total number of points for this QCC is $KN_{\min}^{(t)}$.
Imposing an energy constraint on this QCC by limiting the amplitude as $|\alpha|\leq \xi$ for each point~\footnote{Otherwise, we can make $K$ and $d_E$ arbitrarily large if we can use the entire phase-space.}, the lower bound on the constellation size $N_{\min}^{(t)}$ now puts some restriction on $n$, $K$, and $d_E$.
Geometrically, this can be viewed as a packing problem of ``fitting'' $KN_{\min}^{(t)}$ many $n$-dimensional balls inside an $(n-1)$-dimensional hypersphere with radius $\xi$.
For example, if $K=1$, $n=2$, $N_{\min}^{(t)}=2$, and $\xi=1$, we can obtain an upper bound on $d_E\leq4$ by placing $KN_{\min}^{(t)}=2$ points on two extreme ends of the unit circle on the $x$-axis.
However, once we increase $K=2$, the resolution upper bound becomes $d_E\le2$ by placing $KN_{\min}^{(t)}=4$ points at the four extreme points of the unit circle on the x and y axes.
Note that this intuition resembles the quantum Hamming bound on discrete-variable QECs~\cite{Gottesman_1996,aly2007notequantumhammingbound} that imposes a tradeoff between the number of physical qubits $n$, logical qubits $k$, and distance $d$.

\section{Examples}\label{sec:examples}

QCCs arise from cubature formulas on various manifolds and encompass cat codes and QSCs as special cases. A family of $2m$-component cat qubits can be obtained based on a class of cubature formulas on $\mathbb{S}^1$~\cite[p.~296]{stroud1971approximate}, where the logical constellations are given by
\begin{equation}\label{eq:QSC8}
    \mathcal{C}_{\text{cat}} = \bigcup_{j=1}^{m} \left\{\left( \cos \frac{2(j-1)\pi}{m}, \sin \frac{2(j-1)\pi}{m} \right) \right\},
\end{equation}
with uniform superposition. For $m=2$, the constellation $\mathcal{C}_{\text{cat}}$ consists of two antipodal points. Two logical codewords are obtained from two such constellations related by a $\pi/2$ rotation, yielding the standard four-component cat code. This code corrects single-photon loss with $d_E{=}2$ between constellation points.

Cubature formulas in higher-dimensional spaces $\mathbb{R}^D$ naturally yield multi-mode bosonic codes. As a special case, cubature formulas with uniformly weighted points distributed on a single sphere correspond to spherical designs~\cite{roy2014complex,bannai2021algebraic}, and the resulting codes are QSCs. For example, consider the vertices of a $D$-dimensional hypercube ($D$-cube) on unit sphere:
\begin{equation}
\mathcal{C}_{D\text{-}\mathrm{cube}}=\tfrac{1}{\sqrt{D}}\Big\{(\epsilon_1,\ldots,\epsilon_D):\ \epsilon_j\in\{\pm1\}\Big\},
\end{equation}
and a $D$-dimensional cross-polytope ($D$-orthoplex):
\begin{equation}
\mathcal{C}_{D\text{-}\mathrm{orth}}=\big\{(\pm1,0,\ldots,0)\cup \text{perms.}\big\}.
\end{equation}
These two configurations define classes of uniform-weight cubature formulas, with points distributed on a single sphere~\cite[Sec.~8.6 Ex.3-1, Ex.3-2]{stroud1971approximate}. Each set of points can serve as a logical constellation and, through appropriate rotation operations, yield QSCs. As a specific example in $\mathbb{R}^4$, consider logical constellations given by the $16$-cell (4-orthoplex) and the $8$-cell (4-cube). Three copies of $\mathcal{C}_{16\text{-}\mathrm{cell}}$ yield a 2-mode qutrit QSC $(\!(2,3,1,\langle 2,4,4\rangle)\!)$ that detects 2 photon losses with $d_E=1$. Alternatively, combining $\mathcal{C}_{16\text{-}\mathrm{cell}}$ with $\mathcal{C}_{8\text{-}\mathrm{cell}}$ gives a 2-mode qubit QSC $(\!(2,2,1,\langle 2,4,4\rangle)\!)$ with distinct logical constellations yet identical error correction performance.

Furthermore, a family of (weighted) QSCs can be constructed from cubature formula~\cite[Sec.~8.6, Ex.~5-2]{stroud1971approximate} by taking the union of two constellations
\begin{equation}\label{eq:weighted_QSC}
\mathcal{C}_{\text{logical}} = \mathcal{C}_{D\text{-}\mathrm{cube}}\cup\mathcal{C}_{D\text{-}\mathrm{orth}},
\end{equation}
and assigning weights to coherent states according to
\begin{equation}
|w_{\boldsymbol{\alpha}}|^2=
\begin{cases}
\dfrac{D}{2^{D}(D+2)}, & \boldsymbol{\alpha}\in\mathcal{C}_{D\text{-}\mathrm{cube}},\\[8pt]
\dfrac{1}{D(D+2)}, & \boldsymbol{\alpha}\in\mathcal{C}_{D\text{-}\mathrm{orth}}.
\end{cases}
\end{equation}
These cubature formulas yield QSCs whose superposition weights are uniform or non-uniform depending on $D$. For even dimensions, the standard real-to-complex embedding $\Phi: \mathbb{R}^{D}{\to}\mathbb{C}^{D/2}$ defined in Eq.~\eqref{eq:complex_embedding} maps cubature points to bosonic constellation codes. Examples include: a $(\!(1,2,4\sin^2(\pi/16),\langle 8,8,8\rangle)\!)$ 16-leg cat code ($D{=}2$) with octagonal logical constellations, a $(\!(2,2,2{-}\sqrt{2},\langle 5,6,12\rangle)\!)$ uniformly weighted QSC ($D{=}4$) with logical constellations forming the 24-cell polytope (union of the 16-cell and 8-cell vertices), and weighted QSCs for $D{\ge}6$ such as $(\!(3,2,2{-}\sqrt{2},\langle 5,6,12\rangle)\!)$ ($D{=}6$).

Beyond single-sphere configurations, cubature formulas naturally accommodate multi-shell QCCs, where coherent states are distributed over concentric spheres corresponding to different energy levels. This multi-shell setting is closely connected to \emph{Euclidean designs}~\cite{bannai2015survey,sawa2019euclidean}, which generalize spherical designs to weighted multi-shell configurations and can be viewed as a special case of cubature formulas~\cite{bannai2010cubature}.

A simple class of multi-shell QCCs can be constructed from tight Euclidean designs in $\mathbb{R}^2$. Tight Euclidean designs on $p$ concentric circles exist for any degree $t$~\cite{bajnok2006euclidean} with $1{\le} p{\le}\lfloor(t+5)/4 \rfloor$. Specifically, with $m{=}t+3-2p$ denoting the number of points in each shell, the logical constellation can be defined as
\begin{equation}\label{eq:QCC8}
    \mathcal{C}_{\text{plane}} = \bigcup_{s=1}^{p} \bigcup_{j=1}^{m} \left\{ r_s \left( \cos \frac{(2j+s)\pi}{m}, \sin \frac{(2j+s)\pi}{m} \right) \right\},
\end{equation}
with weights (up to a normalization factor)
\begin{equation}\label{eq:plane_QCC_weights}
    |w_s|^2 \propto \begin{cases}
\dfrac{1}{r_1^m}, & s = 1, \\[10pt]
\dfrac{(-1)^s}{r_s^m} \prod_{\substack{2 \leq l \leq p \\ l \neq s}} \dfrac{r_1^2 - r_l^2}{r_s^2 - r_l^2}, & s \geq 2.
\end{cases}
\end{equation}
where $s$ denotes the index of the $s$-th shell. The constellations are unions of concentric regular 
$m$-gons (one on each circle), with constant weights on each circle and appropriate relative rotations. As a specific example, consider a QCC built from a two-shell tight Euclidean 7-design~\cite[Sec.~5]{BannaiBannai2009SphericalEuclidean} in $\mathbb{R}^2$. In this case, the logical constellation consists of two hexagons with different radii $r_1$ and $r_2$, and a relative $\pi/6$ rotation:
\begin{equation}
\begin{aligned}
\mathcal{C}_{\mathrm{hexagon}}^{(r_1)} &= r_1\left\{\left(\cos\frac{j\pi}{3}, \sin\frac{j\pi}{3}\right) \mid j\in\mathbb{Z}_6\right\}, \\
\mathcal{C}_{\mathrm{hexagon}}^{(r_2)} &= r_2\left\{\left(\cos\frac{(2j+1)\pi}{6}, \sin\frac{(2j+1)\pi}{6} \right) \mid j\in\mathbb{Z}_6\right\}.
\end{aligned}
\end{equation}
By setting $r_2/r_1=\tau$, the weights are given by
\begin{equation}
\label{eqn:concentric_QSC_weights_hex}
|w_{\boldsymbol{\alpha}}|^2 \propto \begin{cases} 1, &   \boldsymbol{\alpha}\in\mathcal{C}_{\mathrm{hexagon}}^{(r_1)}, \\[4pt] \left(1/\tau\right)^6, & \boldsymbol{\alpha}\in\mathcal{C}_{\mathrm{hexagon}}^{(r_2)}. \end{cases} 
\end{equation} 
When $r_2{=}r_1{=}1$, this results in a QSC whose logical constellation is a 12-gon with parameters $(\!(2,2,d_E\approx0.07,\langle 12,12,12\rangle)\!)$, while for $\tau{=}2$, this yields a QCC with code parameters $(\!(2,2,d_E\approx0.26,\langle 7,8,18\rangle)\!)$, which achieves a larger resolution than the corresponding QSC that
confines all coherent states to a single energy level. This comes at the cost of reduced photon-loss correction capability (smaller $t_\downarrow$), although the number of detectable photon-loss events $(d_\updownarrow - 1)$ is increased.

Similarly, the same number of coherent states can be distributed across 3 spheres, with logical constellation
\begin{equation}
\mathcal C_{\text{square}}^{(r_s)}
 = \Bigl\{\, r_s\bigl(\cos\tfrac{(2j+s)\pi}{4},
                      \sin\tfrac{(2j+s)\pi}{4}\bigr)
      \,\big|\, j\in\mathbb Z_4 \Bigr\},
      \quad s=1,2,3
\end{equation}
associated with shell weights given by Eq.~\eqref{eq:plane_QCC_weights}. Choosing $r_1/r_2{=}1/2$ and $r_2/r_3{=}2/3$ yields a 3-shell QCC with parameters $(\!(2,2,d_E\approx0.44,\langle 6,8,12\rangle)\!)$. The resolution $d_E$ is further enhanced by distributing points across 3 shells, at the cost of reduced excitation-changing error protection.

For the multi-mode case, a $2$-mode multi-shell QCC can be constructed from the union \begin{equation} \label{eqn:8-16_cell_constellation}\mathcal{C}_{\mathrm{logical}} = \mathcal{C}_{\mathrm{8\text{-}cell}}^{(r_1)} \cup \mathcal{C}_{\mathrm{16\text{-}cell}}^{(r_2)}, \end{equation} where we use the notation $\mathcal{C}^{(r)} := r\,\mathcal{C}$ to indicate the two layers are composed of coherent states with amplitudes $r_1$ and $r_2$, respectively. When $r_1{=}r_2$, the logical constellation reduces to the single-shell constellation of Eq.~\eqref{eq:weighted_QSC} with uniform weights. By selecting an appropriate rotation such that the union of two logical constellations forms a 288-cell, this yields a $(\!(2,2,2{-}\sqrt{2},\langle 5,6,12\rangle)\!)$ QCC qubit, as mentioned above. In the case where $r_1{\neq}r_2$, choosing the weight ratio \begin{equation} \frac{w_{\mathrm{16\text{-}cell}}}{w_{\mathrm{8\text{-}cell}}}=\left(\frac{r_1}{r_2}\right)^4 
\end{equation} 
ensures that $\mathcal{C}_{\mathrm{logical}}$ forms a Euclidean 5-design~\cite[Prop.~16]{bajnok2007orbits}. By applying the same rotation to the second logical constellation as the single-shell case, this yields a $(\!(2,2,d_E{=}(2-\sqrt{2})\min\{r_1^2,r_2^2\},\langle 5,6,12\rangle)\!)$ multi-shell QCC with a superposition of coherent states with different amplitudes distributed on two concentric shells: an 8-cell and a 16-cell, as depicted in Fig.~\ref{fig:QCC}. The resolution is determined by the smaller of the two shell radii. Consequently, when restricting the energy to $\bar{n}_{\mathrm{QCC}}{=}1$ and $r_1{\neq}r_2$, we have $d_E{<}2-\sqrt{2}$, indicating that under the current logical constellation orientation (i.e., the rotation applied to the second logical constellation), the multi-shell code constellation achieves lower resolution than the case where all points reside on a single shell. However, as we demonstrate later in the pure-loss channel benchmark section, this multi-shell structure nevertheless offers performance enhancement over the single-shell structure across a wide range of pure-loss rates.

\begin{figure}[tb!]
    \centering
\includegraphics[width=0.9\columnwidth]{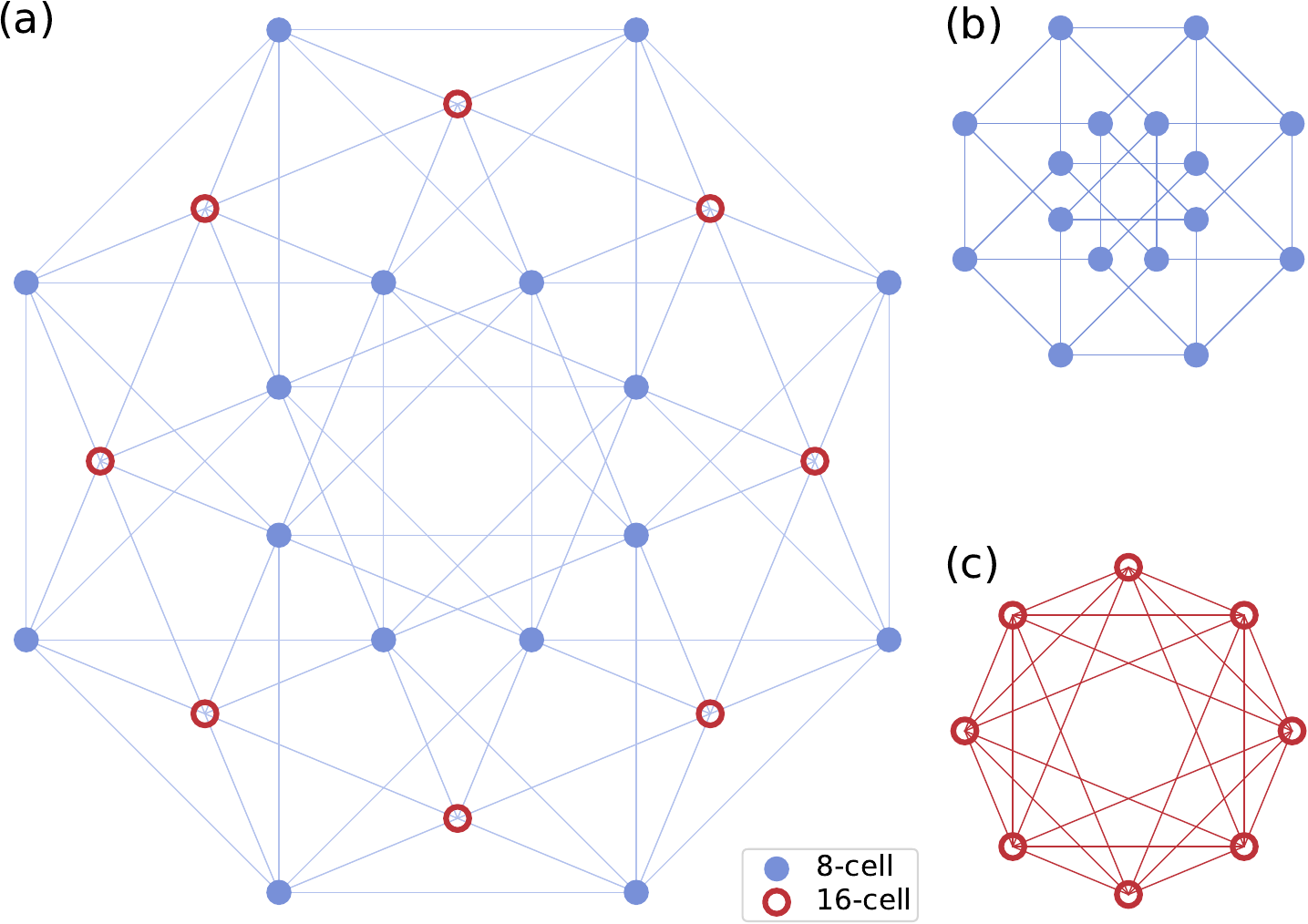}
    \caption{(a) Coherent-state logical constellation of a QCC constructed from the weighted union of (b) 8-cell and (c) 16-cell polytopes (see Eq.~\eqref{eqn:8-16_cell_constellation}). The 8-cell (blue) and 16-cell (red) components occupy coherent states with different mean photon numbers (energy). All panels are orthogonal projections onto the $B_4$ Coxeter plane.}
    \label{fig:QCC}
\end{figure} 

A larger QCC can be constructed by exploiting tight Euclidean 7-designs formed from unions of constellations (see \cite[Sec.~5]{BannaiBannai2009SphericalEuclidean} and \cite[Sec.~8.5, Ex.~7.1]{stroud1971approximate}). The logical constellation is distributed on two shells, with each shell containing 24 points that form a 24-cell (composed of a 16-cell and an 8-cell):
\begin{equation}
\mathcal{C}_{\mathrm{logical}} = \mathcal{C}_{\mathrm{24\text{-}cell}}^{(r_1)} \cup  \mathcal{C}_{\mathrm{24\text{-}cell}}^{(r_2)}.
\end{equation}
By setting the ratio $r_2/r_1 =\tau $, the weights of each constellation are
\begin{equation}\label{eqn:concentric_QSC_weights}
|w_{\boldsymbol{\alpha}}|^2 \propto \begin{cases} 1, & \boldsymbol{\alpha}\in\mathcal{C}_{\mathrm{24\text{-}cell}}^{(r_1)}, \\[4pt] (1/\tau)^6, & \boldsymbol{\alpha}\in\mathcal{C}_{\mathrm{24\text{-}cell}}^{(r_2)}, \end{cases} 
\end{equation} 
Notably, the optimal rotation in $\mathrm{SO}(4)$ that maximizes the resolution $d_E$ depends on the radius ratio $\tau$. When $1{<}\tau{<}1+\sqrt{2-\sqrt{2}}$, the bottleneck distance $d_E$ is determined by inner--outer shell pairs, and the optimal rotation angle depends on $\tau$. Once $\tau \ge 1 + \sqrt{2-\sqrt{2}}$, the bottleneck transitions to inner-shell pairs and $d_E$ saturates at $d_E = (2-\sqrt{2}) r_1^2$, independent of $\tau$. Under the normalized energy constraint, the resolution satisfies $d_E < 2-\sqrt{2}$. For instance, choosing $\tau=2$ yields a tight QCC with parameters $(\!(2,2,d_E\approx0.56,\langle 6,8,12\rangle)\!)$, whose resolution substantially exceeds that of the $r_1=r_2$ case ($d_E\approx0.3$, where two 288-cells lie within a single sphere). Thus, at fixed energy, the multi-shell QCC achieves superior resolution compared to the corresponding single-shell QSC. Moreover, when all points align on a single shell, the constellation forms a non-tight QSC based on a non-tight spherical 7-design, whereas the $r_1\ne r_2$ configuration yields a tight QCC grounded in a tight Euclidean 7-design.

\section{Symmetries and stabilizers}

The coherent-state constellations underlying QCCs induce natural symmetry constraints on the code space. In the single-shell, uniform-weight setting of QSCs, these constraints can be organized into Z-type and X-type stabilizers, and the logical code space can be characterized as the subspace that is invariant under both families of operators within the finite-dimensional span of the coherent-state constellation~\cite{denys20232, jain2024quantum}. The same stabilizer viewpoint extends to general QCCs and provides a convenient route to dissipative realizations via Lindblad dynamics.

\textbf{Z-type stabilizers.}
Z-type stabilizers restrict the support of QCC codewords to the specified constellation.
Let $\mathcal{S} = \bigcup_k \mathcal{V}_k$ denote the union of all constellation points. Choose a finite family of $n$-variable polynomials 
$\{P_i(\boldsymbol{\alpha})\}_{i}$ whose common zero set coincides
with $\mathcal S$, and denote
\begin{equation}
P_i(\boldsymbol{\alpha})
 = \sum_{\boldsymbol{u}} c^{(i)}_{\boldsymbol{u}}
   \prod_{j=1}^n \alpha_j^{u_j},
\end{equation}
where $\boldsymbol{u}=(u_1,\ldots,u_n)$ is a multi-index.
Define the corresponding annihilation-operator polynomials
\begin{equation}
F_i := P_i(\hat{\boldsymbol{a}})
 = \sum_{\boldsymbol{u}} c^{(i)}_{\boldsymbol{u}}
   \prod_{j=1}^n a_j^{u_j}.
\end{equation}
From the coherent-state eigenvalue relation ($a_j|\boldsymbol{\alpha}\rangle
= \alpha_j|\boldsymbol{\alpha}\rangle$), it follows that
\begin{equation}
F_i|\boldsymbol{\alpha}\rangle
 = P_i(\boldsymbol{\alpha})|\boldsymbol{\alpha}\rangle
\end{equation}
for any coherent state $|\boldsymbol{\alpha}\rangle$. Every QCC codeword
\begin{equation}
|\mathcal{C}_k\rangle \propto
 \sum_{\boldsymbol{\alpha}\in\mathcal{V}_k}
 \sqrt{w^{(k)}_{\boldsymbol{\alpha}}}\,|\boldsymbol{\alpha}\rangle
\end{equation}
is supported on $\mathcal{S}$, so $P_i(\boldsymbol{\alpha})=0$ for all
$\boldsymbol{\alpha}\in\mathcal{S}$ and hence $F_i|\mathcal{C}_k\rangle=0$
for all $i$ and $k$. Consequently,
\begin{equation}
\mathcal{H}_{\mathcal{S}}
 := \mathrm{Span}\{\,|\boldsymbol{\alpha}\rangle :
                     \boldsymbol{\alpha}\in\mathcal{S}\,\}
 \;\subseteq\; \bigcap_i \ker F_i,
\label{eq:Zmanifold_support}
\end{equation}
so the finite-dimensional subspace $\mathcal{H}_{\mathcal{S}}$ is contained
in the joint kernel of all Z-type stabilizers and contains the QCC code
space.

A dissipative implementation of Z-type stabilizers uses the $F_i$ as jump operators in a Lindblad master equation,
\begin{equation}
\dot{\rho}
= \sum_i \mathcal{D}[F_i]\rho,
\qquad
\mathcal{D}[L]\rho := L\rho L^\dagger - \tfrac12\{L^\dagger L,\rho\},
\end{equation}
as in reservoir-engineering schemes for bosonic codes~\cite{lescanne_exponential_2020}.
Any density operator supported entirely on $\bigcap_i\ker F_i$ is left invariant by this dynamics, while components violating at least one polynomial constraint are damped.

\textbf{X-type stabilizers.}
X-type stabilizers are symmetries that preserve the relative phases and amplitudes within each logical constellation.
Let $S_X$ denote the group of passive linear-optics unitaries that permute each constellation:
\begin{equation}
S_X :=
\bigl\{\, U \in \mathrm{U}_{\mathrm{pass}}(n) : 
U(\mathcal{V}_k, \mathcal{W}_k) = (\mathcal{V}_k, \mathcal{W}_k)\ \ \forall\, k \,\bigr\},
\end{equation}
where $U(\mathcal{V}_k, \mathcal{W}_k) = (\mathcal{V}_k, \mathcal{W}_k)$ signifies that $U$ maps the point set $\mathcal{V}_k$ onto itself while preserving the associated weights $\mathcal{W}_k$ as a weighted multiset.
By construction, each $U \in S_X$ permutes the coherent states within every $\mathcal{V}_k$ without altering their weights, and therefore
\begin{equation}
U\,|\mathcal{C}_k\rangle = |\mathcal{C}_k\rangle \quad \forall\,U \in S_X,\ \forall\,k,
\end{equation}
as a consequence, including dissipators of the form $\mathcal{D}[U-\mathds{1}]$ enforces invariance under $S_X$ and drives the system toward states satisfying $U\rho U^\dagger = \rho$ for all $U\in S_X$.

\textbf{The Z-/X-type stabilized subspace.}
The combined action of the Z- and X-type stabilizers restricts the code
space to the subspace
\begin{equation}
\mathcal{M}_{ZX} :=
\Bigl(\bigcap_i \ker F_i\Bigr)
\;\cap\;
\Bigl(\bigcap_{U\in S_X} \ker(U-\mathds{1})\Bigr),
\end{equation}
which satisfies
\begin{equation}
\mathrm{Span}\{|\mathcal{C}_k\rangle\}_k
\;\subseteq\; \mathcal{M}_{ZX}.
\label{eq:code_in_MZX_final}
\end{equation}
For single-shell constellations with uniform weights (the QSC setting), the only
states in $\mathcal{H}_{\mathcal{S}}$ that are simultaneously annihilated
by all $F_i$ and invariant under all $U\in S_X$ are linear combinations
of the logical codewords. Equivalently,
\begin{equation}
\mathcal{M}_{ZX}\cap\mathcal{H}_{\mathcal{S}}
 = \mathrm{Span}\{|\mathcal{C}_k\rangle\}_k.
\end{equation}
Thus, within the finite-dimensional subspace spanned by the constellation
coherent states, Z-type and X-type stabilizers uniquely specify the QSC
code space and can be used to stabilize it dissipatively via the jump
operators $\{F_i\}$ and $\{U-\mathds{1}\}_{U\in S_X}$.

For general QCCs with non-uniform weights and multiple shells, the relation between stabilizers and the code space is more subtle. The Z-type constraints confine the coherent-state support to $\mathcal{S}$, and the X-type constraints enforce permutation symmetry within each $S_X$-orbit of $\mathcal{S}$ (for example, a fixed shell and weight class), but the relative amplitudes assigned to different orbits remain free parameters. For a given logical index $k$, the orbit-symmetric part of $\mathcal{M}_{ZX}$ therefore has dimension larger than one: it contains several independent symmetric combinations of coherent states supported on different shells. The logical state $|\mathcal{C}_k\rangle$ corresponds to a particular choice of amplitudes across these orbits, while other choices produce additional states that still satisfy all Z-type and X-type constraints and remain invariant under the corresponding dissipative dynamics. Consequently, whenever such residual amplitude degrees of freedom are present, one has
\begin{equation}
\mathrm{Span}\{|\mathcal{C}_k\rangle\}_k \;\subsetneq\; \mathcal{M}_{ZX},
\end{equation}
and additional constraints acting within $\mathcal{M}_{ZX}$ are required to uniquely specify the QCC code space. A complete characterization of such amplitude-profile constraints in the general multi-shell setting remains an open question for future investigation.

\section{Performance under pure-loss noise}
We benchmark QCC performance under bosonic pure-loss noise using the composite channel~\cite{albert2018performance}
\begin{equation}
\Lambda = \mathcal{R} \circ \mathcal{N}_\gamma \circ \mathcal{E},
\end{equation}
where $\mathcal{E}$ is the encoding map, $\mathcal{N}_\gamma$ is a pure-loss channel with loss probability $\gamma$, and $\mathcal{R}$ is the recovery operation.
For a single bosonic mode, the pure-loss channel admits the Kraus representation
\begin{equation}
\mathcal{N}_\gamma(\rho) = \sum_{\ell=0}^{\infty} E_\ell \rho E_\ell^\dagger, \label{eq:pureloss_eta}
\end{equation}
with Kraus operators~\cite{nielsen2010quantum}
\begin{equation}
E_\ell = \left(\frac{\gamma}{1-\gamma}\right)^{\ell/2} \frac{\hat{a}^\ell}{\sqrt{\ell!}} (1-\gamma)^{\hat{n}/2}, \label{eq:pureloss_gamma_alt}
\end{equation}
where $\hat{n}=\hat{a}^\dagger \hat{a}$ is the photon-number operator, and the multimode case follows by taking tensor products over modes.

\begin{figure*}[tbp]
    \centering
    \includegraphics[width=\textwidth]{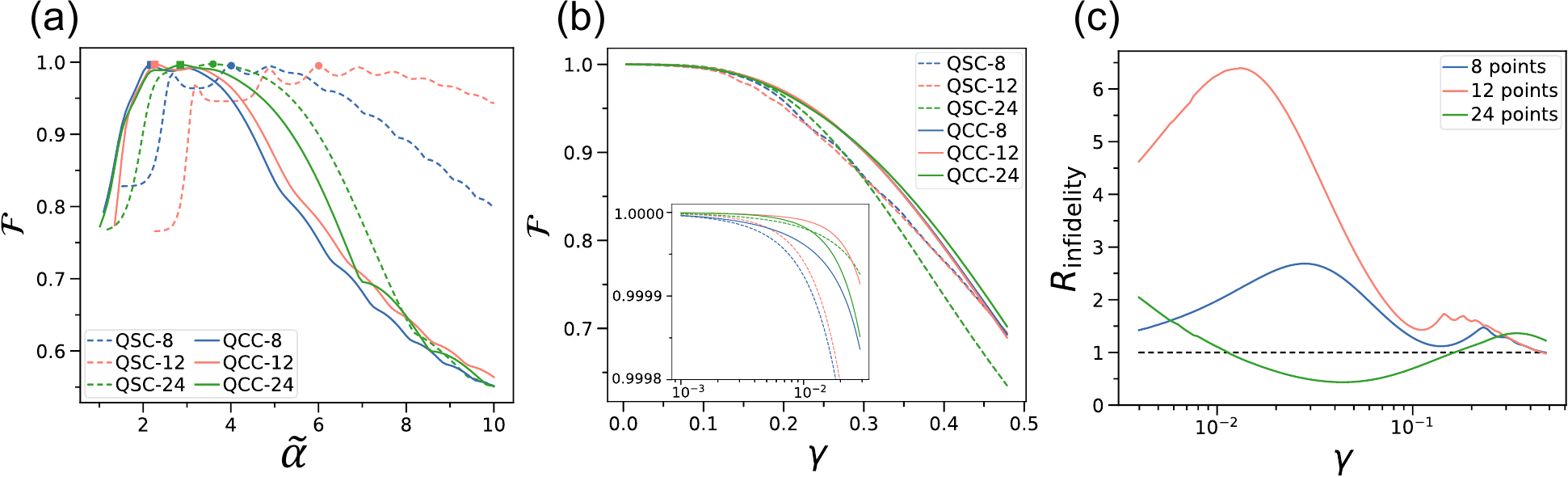}
\caption{(a) Entanglement fidelity $\mathcal{F}$ of QSCs and QCCs as a function of coherent-state amplitude scale $\tilde{\alpha}$ for fixed pure-loss rate $\gamma = 0.1$. Markers indicate the optimal $\tilde{\alpha}_{\mathrm{op}}$ for each code. (b) Entanglement fidelity $\mathcal{F}$ as a function of the pure-loss rate $\gamma$, evaluated at these optimized amplitudes (inset: zoom for small $\gamma$). (c) Relative improvement of entanglement infidelity $R_{\mathrm{infidelity}} = (1-\mathcal{F})/(1-\mathcal{F}')$ of QCCs (primed) over QSCs (unprimed) with the same number of constellation points; the black dashed line $R_{\mathrm{infidelity}} = 1$ denotes equal performance. In all panels, blue, orange, and green curves correspond to constellations with $8$, $12$, and $24$ points, respectively.}
    \label{fig:pure-loss_fid}
\end{figure*}

To quantify the performance, we employ the entanglement fidelity $\mathcal{F}$ as our benchmark metric, following~\cite{albert2018performance,jain2024quantum,denys20232}, given by
\begin{equation}
\mathcal{F} = \langle \Phi_K| (\openone \otimes \Lambda)(|\Phi_K\rangle\langle\Phi_K|) |\Phi_K\rangle \;, \label{eq:Fe_def} 
\end{equation}
where $|\Phi_K\rangle$ is the maximally-entangled state shared between a logical subsystem $A$ of dimension $K$ and an ancillary reference system $B$ of the same dimension: 
\begin{equation} 
|\Phi_K\rangle = \frac{1}{\sqrt{K}} \sum_{m=0}^{K-1} |m_A m_B\rangle. 
\end{equation}
This quantity measures the overlap between the ideal maximally entangled state and the state obtained after the channel $\Lambda$ acts on subsystem $A$, thereby quantifying code robustness under pure photon loss.

To compare QSCs and our newly constructed QCCs in a fair manner, we proceed as follows: 
As discussed in the examples, many cubature formulas admit families of realizations in which the radii of the supporting shells can be varied while preserving a fixed cubature degree. When the constellation points are distributed over several concentric spheres with distinct radii, we obtain a multi-shell configuration that defines a QCC. The same cubature formulas also admit specializations in which all points lie on a single sphere; in this case the cubature reduces to a spherical design and the resulting codes are QSCs. In our comparison, we fix the total number of coherent states in each logical constellation, so that the only distinction between the two families is whether the points occupy a single sphere or multiple concentric spheres. This provides a natural framework for benchmarking QSCs against their multi-shell QCC counterparts. 

In particular, we examine three pairs of logical qubit codes, denoted QCC-$\ell$ and QSC-$\ell$ for $\ell \in \{8,12,24\}$, where $\ell$ represents the number of coherent states per logical constellation. The QSC-8 and QSC-12 codes are single-mode $\ell$-gon cat qubits with logical constellations forming regular $\ell$-gons, while QSC-24 has a logical constellation given by the vertices of the 24-cell embedded in $\mathbb{C}^2$. The corresponding QCC-8 is a single-mode, two-shell code constructed from a tight Euclidean 5-design in $\mathbb{R}^2$ (with $r_1/r_2=1/2$), and QCC-12 is a single-mode, three-shell code constructed from a tight Euclidean 7-design in $\mathbb{R}^2$ (with $r_1/r_2=1/2$ and $r_2/r_3=2/3$), both specified in Eq.~\eqref{eq:QCC8}. The QCC-24 code, given in Eq.~\eqref{eqn:8-16_cell_constellation}, employs a two-shell configuration in which a 16-cell and an 8-cell are placed on distinct radii with $r_{\text{16-cell}}/r_{\text{8-cell}}{=}2$, constituting a two-mode code.

The entanglement fidelity $\mathcal{F}$ depends sensitively on the energy of the logical states. For a given logical constellation $\mathcal{C}_k$, we introduce an overall amplitude scale $\tilde{\alpha}>0$ and consider the rescaled constellation $\tilde{\mathcal{C}}_k = \tilde{\alpha}\,\mathcal{C}_k$. Varying $\tilde{\alpha}$ changes the mean photon number of the code and therefore its entanglement fidelity. To compare QSCs and QCCs, we fix the pure-loss rate to $\gamma = 0.1$ and, for each code, sweep $\tilde{\alpha}$ over a broad range. The resulting entanglement fidelity $\mathcal{F}$ as a function of the amplitude $|\alpha|$ is shown in Fig.~\ref{fig:pure-loss_fid}(a) for the same set of QSCs and QCCs. For each code, we identify the amplitude scale $\tilde{\alpha}_{\text{op}}$ that maximizes $\mathcal{F}$ and find that every QCC reaches its optimal fidelity at a smaller $\tilde{\alpha}_{\text{op}}$ than its QSC counterpart with the same number of constellation points. In other words, QCCs require less energy to achieve their optimal entanglement fidelity than the corresponding QSCs.

Using the optimal scale $\tilde{\alpha}_{\mathrm{op}}$ for each code, Fig.~\ref{fig:pure-loss_fid}(b) shows the entanglement fidelity $\mathcal{F}$ as a function of the pure-loss rate $\gamma$ for the different QSCs and QCCs. Over a broad interval of $\gamma$, every QCC (constructed from a degree-$t$ Euclidean design distributed over concentric shells) attains a markedly higher entanglement fidelity than its QSC counterpart with the same number of constellation points.

To further quantify the performance of QCCs relative to their QSC counterparts, in Fig.~\ref{fig:pure-loss_fid}(c) we plot the relative entanglement infidelity $R_\text{infidelity}=(1-\mathcal{F})/(1-\mathcal{F}')$ as a function of $\gamma$, where $\mathcal{F}$ and $\mathcal{F}'$ are the entanglement fidelities of the QSC and QCC with the same number of constellation points, evaluated at amplitudes optimized at $\gamma=0.1$. A value $R_\text{infidelity}=1$ indicates equal performance, while $R_\text{infidelity}>1$ (respectively $R_\text{infidelity}<1$) means that the QCC performs better (respectively worse) than its QSC counterpart. We find $R_\text{infidelity}>1$ over a wide range of $\gamma$, indicating that in these examples the QCCs generally outperform the QSCs. Interestingly, as discussed in Sec.~\ref{sec:examples}, even though QCC-24 has a smaller resolution $d_E$ than QSC-24 and both codes share the same loss-protection order $\langle 5,6,12\rangle$, the two-mode multi-shell QCC-24 still yields a higher entanglement fidelity than QSC-24 over most of the $\gamma$ range (green curves); QSC-24 only slightly surpasses QCC-24 in a narrow window at small $\gamma$ (see the inset).

 \section{Discussion}
We have introduced Quantum Cubature Codes (QCCs), a comprehensive and mathematically rigorous  framework for constructing bosonic codes based on superpositions of coherent states. By connecting the requirements of quantum error correction to the theory of cubature formulas~\cite{cools1992survey,cools2003encyclopaedia,stroud1971approximate,sawa2019euclidean}, our formalism provides a systematic methodology for designing codes that satisfy the Knill-Laflamme condition~\cite{knill1997theory}.

The QCC framework offers a unifying perspective, revealing that established codes such as cat codes and QSCs are specialized cases corresponding to uniform superposition of coherent states on a single energy shell. Crucially, QCCs expand the design space significantly, enabling the systematic exploration of codes with non-uniform superpositions and multi-shell constellations. We leveraged this formalism to discover several novel codes derived from Euclidean designs~\cite{bannai2015survey,sawa2019euclidean}. Numerical benchmarking demonstrates that these multi-shell QCCs can outperform their single-shell counterparts. The key insight is that utilizing the radial degrees of freedom in the phase space allows for enhanced geometric separation between logical states at fixed energy, leading to superior error correction performance.

The introduction of QCCs opens several promising avenues for future research. The design space enabled by cubature formulas is vast. Further investigation into different classes of cubature formulas and tight Euclidean designs may yield codes optimized for specific hardware constraints and biased noise models~\cite{li2023correctingbiasednoiseusing,hanggli2020enhanced,Stafford_2023,Zhang_2023}. 
Furthermore, investigating how QCCs can be integrated into a larger fault-tolerant architecture, including concatenation with outer codes~\cite{Berent_2024,Chamberland_2022,guillaud2023quantum,Fukui_2023,le2023high,xu2023autonomous,Zhang_2023} and the implementation of logical gates, is essential for their long-term applicability. 
Exploring how cubature formulas can be used to obtain codes with superposition of squeezed states~\cite{Korolev_2024,gutman2025squeezedvacuumbosoniccodes} is also interesting.
The QCC formalism provides the necessary mathematical tools to harness the full potential of the bosonic phase space, paving the way for the development of more efficient and robust quantum technologies.

 \section*{Acknowledgments}
We thank R. Cools for providing access to the online encyclopedia of cubature formulas.
AT is supported by the CQT PhD scholarship, the Google PhD fellowship program, and the CQT Young Researcher Career Development Grant.
KB thanks HQCC WP 2.0 for financial support.

\bibliography{QCC}

\begin{thebibliography}{67}%
\makeatletter
\providecommand \@ifxundefined [1]{%
 \@ifx{#1\undefined}
}%
\providecommand \@ifnum [1]{%
 \ifnum #1\expandafter \@firstoftwo
 \else \expandafter \@secondoftwo
 \fi
}%
\providecommand \@ifx [1]{%
 \ifx #1\expandafter \@firstoftwo
 \else \expandafter \@secondoftwo
 \fi
}%
\providecommand \natexlab [1]{#1}%
\providecommand \enquote  [1]{``#1''}%
\providecommand \bibnamefont  [1]{#1}%
\providecommand \bibfnamefont [1]{#1}%
\providecommand \citenamefont [1]{#1}%
\providecommand \href@noop [0]{\@secondoftwo}%
\providecommand \href [0]{\begingroup \@sanitize@url \@href}%
\providecommand \@href[1]{\@@startlink{#1}\@@href}%
\providecommand \@@href[1]{\endgroup#1\@@endlink}%
\providecommand \@sanitize@url [0]{\catcode `\\12\catcode `\$12\catcode
  `\&12\catcode `\#12\catcode `\^12\catcode `\_12\catcode `\%12\relax}%
\providecommand \@@startlink[1]{}%
\providecommand \@@endlink[0]{}%
\providecommand \url  [0]{\begingroup\@sanitize@url \@url }%
\providecommand \@url [1]{\endgroup\@href {#1}{\urlprefix }}%
\providecommand \urlprefix  [0]{URL }%
\providecommand \Eprint [0]{\href }%
\providecommand \doibase [0]{https://doi.org/}%
\providecommand \selectlanguage [0]{\@gobble}%
\providecommand \bibinfo  [0]{\@secondoftwo}%
\providecommand \bibfield  [0]{\@secondoftwo}%
\providecommand \translation [1]{[#1]}%
\providecommand \BibitemOpen [0]{}%
\providecommand \bibitemStop [0]{}%
\providecommand \bibitemNoStop [0]{.\EOS\space}%
\providecommand \EOS [0]{\spacefactor3000\relax}%
\providecommand \BibitemShut  [1]{\csname bibitem#1\endcsname}%
\let\auto@bib@innerbib\@empty
\bibitem [{\citenamefont {Shor}(1995)}]{PhysRevA.52.R2493}%
  \BibitemOpen
  \bibfield  {author} {\bibinfo {author} {\bibfnamefont {P.~W.}\ \bibnamefont
  {Shor}},\ }\bibfield  {title} {\bibinfo {title} {Scheme for reducing
  decoherence in quantum computer memory},\ }\href
  {https://doi.org/10.1103/PhysRevA.52.R2493} {\bibfield  {journal} {\bibinfo
  {journal} {Phys. Rev. A}\ }\textbf {\bibinfo {volume} {52}},\ \bibinfo
  {pages} {R2493} (\bibinfo {year} {1995})}\BibitemShut {NoStop}%
\bibitem [{\citenamefont {Steane}(1996)}]{PhysRevLett.77.793}%
  \BibitemOpen
  \bibfield  {author} {\bibinfo {author} {\bibfnamefont {A.~M.}\ \bibnamefont
  {Steane}},\ }\bibfield  {title} {\bibinfo {title} {Error correcting codes in
  quantum theory},\ }\href {https://doi.org/10.1103/PhysRevLett.77.793}
  {\bibfield  {journal} {\bibinfo  {journal} {Phys. Rev. Lett.}\ }\textbf
  {\bibinfo {volume} {77}},\ \bibinfo {pages} {793} (\bibinfo {year}
  {1996})}\BibitemShut {NoStop}%
\bibitem [{\citenamefont {Aharonov}\ and\ \citenamefont
  {Ben-Or}(1997)}]{aharonov1997fault}%
  \BibitemOpen
  \bibfield  {author} {\bibinfo {author} {\bibfnamefont {D.}~\bibnamefont
  {Aharonov}}\ and\ \bibinfo {author} {\bibfnamefont {M.}~\bibnamefont
  {Ben-Or}},\ }\bibfield  {title} {\bibinfo {title} {Fault-tolerant quantum
  computation with constant error},\ }in\ \href@noop {} {\emph {\bibinfo
  {booktitle} {Proceedings of the twenty-ninth annual ACM symposium on Theory
  of computing}}}\ (\bibinfo {year} {1997})\ pp.\ \bibinfo {pages}
  {176--188}\BibitemShut {NoStop}%
\bibitem [{\citenamefont {Fowler}\ \emph {et~al.}(2012)\citenamefont {Fowler},
  \citenamefont {Mariantoni}, \citenamefont {Martinis},\ and\ \citenamefont
  {Cleland}}]{PhysRevA.86.032324}%
  \BibitemOpen
  \bibfield  {author} {\bibinfo {author} {\bibfnamefont {A.~G.}\ \bibnamefont
  {Fowler}}, \bibinfo {author} {\bibfnamefont {M.}~\bibnamefont {Mariantoni}},
  \bibinfo {author} {\bibfnamefont {J.~M.}\ \bibnamefont {Martinis}},\ and\
  \bibinfo {author} {\bibfnamefont {A.~N.}\ \bibnamefont {Cleland}},\
  }\bibfield  {title} {\bibinfo {title} {Surface codes: Towards practical
  large-scale quantum computation},\ }\href
  {https://doi.org/10.1103/PhysRevA.86.032324} {\bibfield  {journal} {\bibinfo
  {journal} {Phys. Rev. A}\ }\textbf {\bibinfo {volume} {86}},\ \bibinfo
  {pages} {032324} (\bibinfo {year} {2012})}\BibitemShut {NoStop}%
\bibitem [{\citenamefont {Terhal}(2015)}]{RevModPhys.87.307}%
  \BibitemOpen
  \bibfield  {author} {\bibinfo {author} {\bibfnamefont {B.~M.}\ \bibnamefont
  {Terhal}},\ }\bibfield  {title} {\bibinfo {title} {Quantum error correction
  for quantum memories},\ }\href {https://doi.org/10.1103/RevModPhys.87.307}
  {\bibfield  {journal} {\bibinfo  {journal} {Rev. Mod. Phys.}\ }\textbf
  {\bibinfo {volume} {87}},\ \bibinfo {pages} {307} (\bibinfo {year}
  {2015})}\BibitemShut {NoStop}%
\bibitem [{\citenamefont {Albert}(2022)}]{albert2022bosonic}%
  \BibitemOpen
  \bibfield  {author} {\bibinfo {author} {\bibfnamefont {V.~V.}\ \bibnamefont
  {Albert}},\ }\bibfield  {title} {\bibinfo {title} {Bosonic coding:
  introduction and use cases},\ }\href@noop {} {\bibfield  {journal} {\bibinfo
  {journal} {arXiv preprint arXiv:2211.05714}\ } (\bibinfo {year}
  {2022})}\BibitemShut {NoStop}%
\bibitem [{\citenamefont {Terhal}\ \emph {et~al.}(2020)\citenamefont {Terhal},
  \citenamefont {Conrad},\ and\ \citenamefont {Vuillot}}]{Terhal_2020}%
  \BibitemOpen
  \bibfield  {author} {\bibinfo {author} {\bibfnamefont {B.~M.}\ \bibnamefont
  {Terhal}}, \bibinfo {author} {\bibfnamefont {J.}~\bibnamefont {Conrad}},\
  and\ \bibinfo {author} {\bibfnamefont {C.}~\bibnamefont {Vuillot}},\
  }\bibfield  {title} {\bibinfo {title} {Towards scalable bosonic quantum error
  correction},\ }\href {https://doi.org/10.1088/2058-9565/ab98a5} {\bibfield
  {journal} {\bibinfo  {journal} {Quantum Science and Technology}\ }\textbf
  {\bibinfo {volume} {5}},\ \bibinfo {pages} {043001} (\bibinfo {year}
  {2020})}\BibitemShut {NoStop}%
\bibitem [{\citenamefont
  {Noh}(2021)}]{noh2021quantumcomputationcommunicationbosonic}%
  \BibitemOpen
  \bibfield  {author} {\bibinfo {author} {\bibfnamefont {K.}~\bibnamefont
  {Noh}},\ }\href {https://arxiv.org/abs/2103.09445} {\bibinfo {title} {Quantum
  computation and communication in bosonic systems}} (\bibinfo {year} {2021}),\
  \Eprint {https://arxiv.org/abs/2103.09445} {arXiv:2103.09445 [quant-ph]}
  \BibitemShut {NoStop}%
\bibitem [{\citenamefont {Cai}\ \emph {et~al.}(2021)\citenamefont {Cai},
  \citenamefont {Ma}, \citenamefont {Wang}, \citenamefont {Zou},\ and\
  \citenamefont {Sun}}]{Cai_2021}%
  \BibitemOpen
  \bibfield  {author} {\bibinfo {author} {\bibfnamefont {W.}~\bibnamefont
  {Cai}}, \bibinfo {author} {\bibfnamefont {Y.}~\bibnamefont {Ma}}, \bibinfo
  {author} {\bibfnamefont {W.}~\bibnamefont {Wang}}, \bibinfo {author}
  {\bibfnamefont {C.-L.}\ \bibnamefont {Zou}},\ and\ \bibinfo {author}
  {\bibfnamefont {L.}~\bibnamefont {Sun}},\ }\bibfield  {title} {\bibinfo
  {title} {Bosonic quantum error correction codes in superconducting quantum
  circuits},\ }\href {https://doi.org/10.1016/j.fmre.2020.12.006} {\bibfield
  {journal} {\bibinfo  {journal} {Fundamental Research}\ }\textbf {\bibinfo
  {volume} {1}},\ \bibinfo {pages} {50–67} (\bibinfo {year}
  {2021})}\BibitemShut {NoStop}%
\bibitem [{\citenamefont {Joshi}\ \emph {et~al.}(2021)\citenamefont {Joshi},
  \citenamefont {Noh},\ and\ \citenamefont {Gao}}]{joshi2021quantum}%
  \BibitemOpen
  \bibfield  {author} {\bibinfo {author} {\bibfnamefont {A.}~\bibnamefont
  {Joshi}}, \bibinfo {author} {\bibfnamefont {K.}~\bibnamefont {Noh}},\ and\
  \bibinfo {author} {\bibfnamefont {Y.~Y.}\ \bibnamefont {Gao}},\ }\bibfield
  {title} {\bibinfo {title} {Quantum information processing with bosonic qubits
  in circuit qed},\ }\href@noop {} {\bibfield  {journal} {\bibinfo  {journal}
  {Quantum Science and Technology}\ }\textbf {\bibinfo {volume} {6}},\ \bibinfo
  {pages} {033001} (\bibinfo {year} {2021})}\BibitemShut {NoStop}%
\bibitem [{\citenamefont {Cochrane}\ \emph {et~al.}(1999)\citenamefont
  {Cochrane}, \citenamefont {Milburn},\ and\ \citenamefont
  {Munro}}]{cochrane1999macroscopically}%
  \BibitemOpen
  \bibfield  {author} {\bibinfo {author} {\bibfnamefont {P.~T.}\ \bibnamefont
  {Cochrane}}, \bibinfo {author} {\bibfnamefont {G.~J.}\ \bibnamefont
  {Milburn}},\ and\ \bibinfo {author} {\bibfnamefont {W.~J.}\ \bibnamefont
  {Munro}},\ }\bibfield  {title} {\bibinfo {title} {Macroscopically distinct
  quantum-superposition states as a bosonic code for amplitude damping},\
  }\href {https://doi.org/10.1103/PhysRevA.59.2631} {\bibfield  {journal}
  {\bibinfo  {journal} {Phys. Rev. A}\ }\textbf {\bibinfo {volume} {59}},\
  \bibinfo {pages} {2631} (\bibinfo {year} {1999})}\BibitemShut {NoStop}%
\bibitem [{\citenamefont {Mirrahimi}\ \emph
  {et~al.}(2014{\natexlab{a}})\citenamefont {Mirrahimi}, \citenamefont
  {Leghtas}, \citenamefont {Albert}, \citenamefont {Touzard}, \citenamefont
  {Schoelkopf}, \citenamefont {Jiang},\ and\ \citenamefont
  {Devoret}}]{Mirrahimi_2014}%
  \BibitemOpen
  \bibfield  {author} {\bibinfo {author} {\bibfnamefont {M.}~\bibnamefont
  {Mirrahimi}}, \bibinfo {author} {\bibfnamefont {Z.}~\bibnamefont {Leghtas}},
  \bibinfo {author} {\bibfnamefont {V.~V.}\ \bibnamefont {Albert}}, \bibinfo
  {author} {\bibfnamefont {S.}~\bibnamefont {Touzard}}, \bibinfo {author}
  {\bibfnamefont {R.~J.}\ \bibnamefont {Schoelkopf}}, \bibinfo {author}
  {\bibfnamefont {L.}~\bibnamefont {Jiang}},\ and\ \bibinfo {author}
  {\bibfnamefont {M.~H.}\ \bibnamefont {Devoret}},\ }\bibfield  {title}
  {\bibinfo {title} {Dynamically protected cat-qubits: a new paradigm for
  universal quantum computation},\ }\href
  {https://doi.org/10.1088/1367-2630/16/4/045014} {\bibfield  {journal}
  {\bibinfo  {journal} {New Journal of Physics}\ }\textbf {\bibinfo {volume}
  {16}},\ \bibinfo {pages} {045014} (\bibinfo {year}
  {2014}{\natexlab{a}})}\BibitemShut {NoStop}%
\bibitem [{\citenamefont {Leghtas}\ \emph {et~al.}(2015)\citenamefont
  {Leghtas}, \citenamefont {Touzard}, \citenamefont {Pop}, \citenamefont {Kou},
  \citenamefont {Vlastakis}, \citenamefont {Petrenko}, \citenamefont {Sliwa},
  \citenamefont {Narla}, \citenamefont {Shankar}, \citenamefont {Hatridge},
  \citenamefont {Reagor}, \citenamefont {Frunzio}, \citenamefont {Schoelkopf},
  \citenamefont {Mirrahimi},\ and\ \citenamefont
  {Devoret}}]{leghtas2015confining}%
  \BibitemOpen
  \bibfield  {author} {\bibinfo {author} {\bibfnamefont {Z.}~\bibnamefont
  {Leghtas}}, \bibinfo {author} {\bibfnamefont {S.}~\bibnamefont {Touzard}},
  \bibinfo {author} {\bibfnamefont {I.~M.}\ \bibnamefont {Pop}}, \bibinfo
  {author} {\bibfnamefont {A.}~\bibnamefont {Kou}}, \bibinfo {author}
  {\bibfnamefont {B.}~\bibnamefont {Vlastakis}}, \bibinfo {author}
  {\bibfnamefont {A.}~\bibnamefont {Petrenko}}, \bibinfo {author}
  {\bibfnamefont {K.~M.}\ \bibnamefont {Sliwa}}, \bibinfo {author}
  {\bibfnamefont {A.}~\bibnamefont {Narla}}, \bibinfo {author} {\bibfnamefont
  {S.}~\bibnamefont {Shankar}}, \bibinfo {author} {\bibfnamefont {M.~J.}\
  \bibnamefont {Hatridge}}, \bibinfo {author} {\bibfnamefont {M.}~\bibnamefont
  {Reagor}}, \bibinfo {author} {\bibfnamefont {L.}~\bibnamefont {Frunzio}},
  \bibinfo {author} {\bibfnamefont {R.~J.}\ \bibnamefont {Schoelkopf}},
  \bibinfo {author} {\bibfnamefont {M.}~\bibnamefont {Mirrahimi}},\ and\
  \bibinfo {author} {\bibfnamefont {M.~H.}\ \bibnamefont {Devoret}},\
  }\bibfield  {title} {\bibinfo {title} {Confining the state of light to a
  quantum manifold by engineered two-photon loss},\ }\href
  {https://doi.org/10.1126/science.aaa2085} {\bibfield  {journal} {\bibinfo
  {journal} {Science}\ }\textbf {\bibinfo {volume} {347}},\ \bibinfo {pages}
  {853} (\bibinfo {year} {2015})},\ \Eprint
  {https://arxiv.org/abs/https://www.science.org/doi/pdf/10.1126/science.aaa2085}
  {https://www.science.org/doi/pdf/10.1126/science.aaa2085} \BibitemShut
  {NoStop}%
\bibitem [{\citenamefont {Niu}\ \emph {et~al.}(2018)\citenamefont {Niu},
  \citenamefont {Chuang},\ and\ \citenamefont {Shapiro}}]{Niu_2018}%
  \BibitemOpen
  \bibfield  {author} {\bibinfo {author} {\bibfnamefont {M.~Y.}\ \bibnamefont
  {Niu}}, \bibinfo {author} {\bibfnamefont {I.~L.}\ \bibnamefont {Chuang}},\
  and\ \bibinfo {author} {\bibfnamefont {J.~H.}\ \bibnamefont {Shapiro}},\
  }\bibfield  {title} {\bibinfo {title} {Hardware-efficient bosonic quantum
  error-correcting codes based on symmetry operators},\ }\bibfield  {journal}
  {\bibinfo  {journal} {Physical Review A}\ }\textbf {\bibinfo {volume} {97}},\
  \href {https://doi.org/10.1103/physreva.97.032323}
  {10.1103/physreva.97.032323} (\bibinfo {year} {2018})\BibitemShut {NoStop}%
\bibitem [{\citenamefont {Lemonde}\ \emph {et~al.}(2024)\citenamefont
  {Lemonde}, \citenamefont {Lachance-Quirion}, \citenamefont {Duclos-Cianci},
  \citenamefont {Frattini}, \citenamefont {Hopfmueller}, \citenamefont
  {Gauvin-Ndiaye}, \citenamefont {Camirand-Lemyre},\ and\ \citenamefont
  {St-Jean}}]{lemonde2024hardwareefficientfaulttolerantquantum}%
  \BibitemOpen
  \bibfield  {author} {\bibinfo {author} {\bibfnamefont {M.-A.}\ \bibnamefont
  {Lemonde}}, \bibinfo {author} {\bibfnamefont {D.}~\bibnamefont
  {Lachance-Quirion}}, \bibinfo {author} {\bibfnamefont {G.}~\bibnamefont
  {Duclos-Cianci}}, \bibinfo {author} {\bibfnamefont {N.~E.}\ \bibnamefont
  {Frattini}}, \bibinfo {author} {\bibfnamefont {F.}~\bibnamefont
  {Hopfmueller}}, \bibinfo {author} {\bibfnamefont {C.}~\bibnamefont
  {Gauvin-Ndiaye}}, \bibinfo {author} {\bibfnamefont {J.}~\bibnamefont
  {Camirand-Lemyre}},\ and\ \bibinfo {author} {\bibfnamefont {P.}~\bibnamefont
  {St-Jean}},\ }\href {https://arxiv.org/abs/2409.05813} {\bibinfo {title}
  {Hardware-efficient fault tolerant quantum computing with bosonic grid states
  in superconducting circuits}} (\bibinfo {year} {2024}),\ \Eprint
  {https://arxiv.org/abs/2409.05813} {arXiv:2409.05813 [quant-ph]} \BibitemShut
  {NoStop}%
\bibitem [{\citenamefont {Mori}\ \emph {et~al.}(2024)\citenamefont {Mori},
  \citenamefont {Matsuzaki}, \citenamefont {Endo},\ and\ \citenamefont
  {Kawabata}}]{mori2024hardwareefficientbosonicquantumcomputing}%
  \BibitemOpen
  \bibfield  {author} {\bibinfo {author} {\bibfnamefont {Y.}~\bibnamefont
  {Mori}}, \bibinfo {author} {\bibfnamefont {Y.}~\bibnamefont {Matsuzaki}},
  \bibinfo {author} {\bibfnamefont {S.}~\bibnamefont {Endo}},\ and\ \bibinfo
  {author} {\bibfnamefont {S.}~\bibnamefont {Kawabata}},\ }\href
  {https://arxiv.org/abs/2403.00291} {\bibinfo {title} {Hardware-efficient
  bosonic quantum computing with photon-loss detection capability}} (\bibinfo
  {year} {2024}),\ \Eprint {https://arxiv.org/abs/2403.00291} {arXiv:2403.00291
  [quant-ph]} \BibitemShut {NoStop}%
\bibitem [{\citenamefont {Putterman}\ \emph {et~al.}(2025)\citenamefont
  {Putterman}, \citenamefont {Noh}, \citenamefont {Hann}, \citenamefont
  {MacCabe}, \citenamefont {Aghaeimeibodi}, \citenamefont {Patel},
  \citenamefont {Lee}, \citenamefont {Jones}, \citenamefont {Moradinejad},
  \citenamefont {Rodriguez}, \citenamefont {Mahuli}, \citenamefont {Rose},
  \citenamefont {Owens}, \citenamefont {Levine}, \citenamefont {Rosenfeld},
  \citenamefont {Reinhold}, \citenamefont {Moncelsi}, \citenamefont {Alcid},
  \citenamefont {Alidoust}, \citenamefont {Arrangoiz-Arriola}, \citenamefont
  {Barnett}, \citenamefont {Bienias}, \citenamefont {Carson}, \citenamefont
  {Chen}, \citenamefont {Chen}, \citenamefont {Chinkezian}, \citenamefont
  {Chisholm}, \citenamefont {Chou}, \citenamefont {Clerk}, \citenamefont
  {Clifford}, \citenamefont {Cosmic}, \citenamefont {Curiel}, \citenamefont
  {Davis}, \citenamefont {DeLorenzo}, \citenamefont {D’Ewart}, \citenamefont
  {Diky}, \citenamefont {D’Souza}, \citenamefont {Dumitrescu}, \citenamefont
  {Eisenmann}, \citenamefont {Elkhouly}, \citenamefont {Evenbly}, \citenamefont
  {Fang}, \citenamefont {Fang}, \citenamefont {Fling}, \citenamefont {Fon},
  \citenamefont {Garcia}, \citenamefont {Gorshkov}, \citenamefont {Grant},
  \citenamefont {Gray}, \citenamefont {Grimberg}, \citenamefont {Grimsmo},
  \citenamefont {Haim}, \citenamefont {Hand}, \citenamefont {He}, \citenamefont
  {Hernandez}, \citenamefont {Hover}, \citenamefont {Hung}, \citenamefont
  {Hunt}, \citenamefont {Iverson}, \citenamefont {Jarrige}, \citenamefont
  {Jaskula}, \citenamefont {Jiang}, \citenamefont {Kalaee}, \citenamefont
  {Karabalin}, \citenamefont {Karalekas}, \citenamefont {Keller}, \citenamefont
  {Khalajhedayati}, \citenamefont {Kubica}, \citenamefont {Lee}, \citenamefont
  {Leroux}, \citenamefont {Lieu}, \citenamefont {Ly}, \citenamefont {Madrigal},
  \citenamefont {Marcaud}, \citenamefont {McCabe}, \citenamefont {Miles},
  \citenamefont {Milsted}, \citenamefont {Minguzzi}, \citenamefont {Mishra},
  \citenamefont {Mukherjee}, \citenamefont {Naghiloo}, \citenamefont
  {Oblepias}, \citenamefont {Ortuno}, \citenamefont {Pagdilao}, \citenamefont
  {Pancotti}, \citenamefont {Panduro}, \citenamefont {Paquette}, \citenamefont
  {Park}, \citenamefont {Peairs}, \citenamefont {Perello}, \citenamefont
  {Peterson}, \citenamefont {Ponte}, \citenamefont {Preskill}, \citenamefont
  {Qiao}, \citenamefont {Refael}, \citenamefont {Resnick}, \citenamefont
  {Retzker}, \citenamefont {Reyna}, \citenamefont {Runyan}, \citenamefont
  {Ryan}, \citenamefont {Sahmoud}, \citenamefont {Sanchez}, \citenamefont
  {Sanil}, \citenamefont {Sankar}, \citenamefont {Sato}, \citenamefont
  {Scaffidi}, \citenamefont {Siavoshi}, \citenamefont {Sivarajah},
  \citenamefont {Skogland}, \citenamefont {Su}, \citenamefont {Swenson},
  \citenamefont {Teo}, \citenamefont {Tomada}, \citenamefont {Torlai},
  \citenamefont {Wollack}, \citenamefont {Ye}, \citenamefont {Zerrudo},
  \citenamefont {Zhang}, \citenamefont {Brandão}, \citenamefont {Matheny},\
  and\ \citenamefont {Painter}}]{Putterman_2025}%
  \BibitemOpen
  \bibfield  {author} {\bibinfo {author} {\bibfnamefont {H.}~\bibnamefont
  {Putterman}}, \bibinfo {author} {\bibfnamefont {K.}~\bibnamefont {Noh}},
  \bibinfo {author} {\bibfnamefont {C.~T.}\ \bibnamefont {Hann}}, \bibinfo
  {author} {\bibfnamefont {G.~S.}\ \bibnamefont {MacCabe}}, \bibinfo {author}
  {\bibfnamefont {S.}~\bibnamefont {Aghaeimeibodi}}, \bibinfo {author}
  {\bibfnamefont {R.~N.}\ \bibnamefont {Patel}}, \bibinfo {author}
  {\bibfnamefont {M.}~\bibnamefont {Lee}}, \bibinfo {author} {\bibfnamefont
  {W.~M.}\ \bibnamefont {Jones}}, \bibinfo {author} {\bibfnamefont
  {H.}~\bibnamefont {Moradinejad}}, \bibinfo {author} {\bibfnamefont
  {R.}~\bibnamefont {Rodriguez}}, \bibinfo {author} {\bibfnamefont
  {N.}~\bibnamefont {Mahuli}}, \bibinfo {author} {\bibfnamefont
  {J.}~\bibnamefont {Rose}}, \bibinfo {author} {\bibfnamefont {J.~C.}\
  \bibnamefont {Owens}}, \bibinfo {author} {\bibfnamefont {H.}~\bibnamefont
  {Levine}}, \bibinfo {author} {\bibfnamefont {E.}~\bibnamefont {Rosenfeld}},
  \bibinfo {author} {\bibfnamefont {P.}~\bibnamefont {Reinhold}}, \bibinfo
  {author} {\bibfnamefont {L.}~\bibnamefont {Moncelsi}}, \bibinfo {author}
  {\bibfnamefont {J.~A.}\ \bibnamefont {Alcid}}, \bibinfo {author}
  {\bibfnamefont {N.}~\bibnamefont {Alidoust}}, \bibinfo {author}
  {\bibfnamefont {P.}~\bibnamefont {Arrangoiz-Arriola}}, \bibinfo {author}
  {\bibfnamefont {J.}~\bibnamefont {Barnett}}, \bibinfo {author} {\bibfnamefont
  {P.}~\bibnamefont {Bienias}}, \bibinfo {author} {\bibfnamefont {H.~A.}\
  \bibnamefont {Carson}}, \bibinfo {author} {\bibfnamefont {C.}~\bibnamefont
  {Chen}}, \bibinfo {author} {\bibfnamefont {L.}~\bibnamefont {Chen}}, \bibinfo
  {author} {\bibfnamefont {H.}~\bibnamefont {Chinkezian}}, \bibinfo {author}
  {\bibfnamefont {E.~M.}\ \bibnamefont {Chisholm}}, \bibinfo {author}
  {\bibfnamefont {M.-H.}\ \bibnamefont {Chou}}, \bibinfo {author}
  {\bibfnamefont {A.}~\bibnamefont {Clerk}}, \bibinfo {author} {\bibfnamefont
  {A.}~\bibnamefont {Clifford}}, \bibinfo {author} {\bibfnamefont
  {R.}~\bibnamefont {Cosmic}}, \bibinfo {author} {\bibfnamefont {A.~V.}\
  \bibnamefont {Curiel}}, \bibinfo {author} {\bibfnamefont {E.}~\bibnamefont
  {Davis}}, \bibinfo {author} {\bibfnamefont {L.}~\bibnamefont {DeLorenzo}},
  \bibinfo {author} {\bibfnamefont {J.~M.}\ \bibnamefont {D’Ewart}}, \bibinfo
  {author} {\bibfnamefont {A.}~\bibnamefont {Diky}}, \bibinfo {author}
  {\bibfnamefont {N.}~\bibnamefont {D’Souza}}, \bibinfo {author}
  {\bibfnamefont {P.~T.}\ \bibnamefont {Dumitrescu}}, \bibinfo {author}
  {\bibfnamefont {S.}~\bibnamefont {Eisenmann}}, \bibinfo {author}
  {\bibfnamefont {E.}~\bibnamefont {Elkhouly}}, \bibinfo {author}
  {\bibfnamefont {G.}~\bibnamefont {Evenbly}}, \bibinfo {author} {\bibfnamefont
  {M.~T.}\ \bibnamefont {Fang}}, \bibinfo {author} {\bibfnamefont
  {Y.}~\bibnamefont {Fang}}, \bibinfo {author} {\bibfnamefont {M.~J.}\
  \bibnamefont {Fling}}, \bibinfo {author} {\bibfnamefont {W.}~\bibnamefont
  {Fon}}, \bibinfo {author} {\bibfnamefont {G.}~\bibnamefont {Garcia}},
  \bibinfo {author} {\bibfnamefont {A.~V.}\ \bibnamefont {Gorshkov}}, \bibinfo
  {author} {\bibfnamefont {J.~A.}\ \bibnamefont {Grant}}, \bibinfo {author}
  {\bibfnamefont {M.~J.}\ \bibnamefont {Gray}}, \bibinfo {author}
  {\bibfnamefont {S.}~\bibnamefont {Grimberg}}, \bibinfo {author}
  {\bibfnamefont {A.~L.}\ \bibnamefont {Grimsmo}}, \bibinfo {author}
  {\bibfnamefont {A.}~\bibnamefont {Haim}}, \bibinfo {author} {\bibfnamefont
  {J.}~\bibnamefont {Hand}}, \bibinfo {author} {\bibfnamefont {Y.}~\bibnamefont
  {He}}, \bibinfo {author} {\bibfnamefont {M.}~\bibnamefont {Hernandez}},
  \bibinfo {author} {\bibfnamefont {D.}~\bibnamefont {Hover}}, \bibinfo
  {author} {\bibfnamefont {J.~S.~C.}\ \bibnamefont {Hung}}, \bibinfo {author}
  {\bibfnamefont {M.}~\bibnamefont {Hunt}}, \bibinfo {author} {\bibfnamefont
  {J.}~\bibnamefont {Iverson}}, \bibinfo {author} {\bibfnamefont
  {I.}~\bibnamefont {Jarrige}}, \bibinfo {author} {\bibfnamefont {J.-C.}\
  \bibnamefont {Jaskula}}, \bibinfo {author} {\bibfnamefont {L.}~\bibnamefont
  {Jiang}}, \bibinfo {author} {\bibfnamefont {M.}~\bibnamefont {Kalaee}},
  \bibinfo {author} {\bibfnamefont {R.}~\bibnamefont {Karabalin}}, \bibinfo
  {author} {\bibfnamefont {P.~J.}\ \bibnamefont {Karalekas}}, \bibinfo {author}
  {\bibfnamefont {A.~J.}\ \bibnamefont {Keller}}, \bibinfo {author}
  {\bibfnamefont {A.}~\bibnamefont {Khalajhedayati}}, \bibinfo {author}
  {\bibfnamefont {A.}~\bibnamefont {Kubica}}, \bibinfo {author} {\bibfnamefont
  {H.}~\bibnamefont {Lee}}, \bibinfo {author} {\bibfnamefont {C.}~\bibnamefont
  {Leroux}}, \bibinfo {author} {\bibfnamefont {S.}~\bibnamefont {Lieu}},
  \bibinfo {author} {\bibfnamefont {V.}~\bibnamefont {Ly}}, \bibinfo {author}
  {\bibfnamefont {K.~V.}\ \bibnamefont {Madrigal}}, \bibinfo {author}
  {\bibfnamefont {G.}~\bibnamefont {Marcaud}}, \bibinfo {author} {\bibfnamefont
  {G.}~\bibnamefont {McCabe}}, \bibinfo {author} {\bibfnamefont
  {C.}~\bibnamefont {Miles}}, \bibinfo {author} {\bibfnamefont
  {A.}~\bibnamefont {Milsted}}, \bibinfo {author} {\bibfnamefont
  {J.}~\bibnamefont {Minguzzi}}, \bibinfo {author} {\bibfnamefont
  {A.}~\bibnamefont {Mishra}}, \bibinfo {author} {\bibfnamefont
  {B.}~\bibnamefont {Mukherjee}}, \bibinfo {author} {\bibfnamefont
  {M.}~\bibnamefont {Naghiloo}}, \bibinfo {author} {\bibfnamefont
  {E.}~\bibnamefont {Oblepias}}, \bibinfo {author} {\bibfnamefont
  {G.}~\bibnamefont {Ortuno}}, \bibinfo {author} {\bibfnamefont
  {J.}~\bibnamefont {Pagdilao}}, \bibinfo {author} {\bibfnamefont
  {N.}~\bibnamefont {Pancotti}}, \bibinfo {author} {\bibfnamefont
  {A.}~\bibnamefont {Panduro}}, \bibinfo {author} {\bibfnamefont
  {J.}~\bibnamefont {Paquette}}, \bibinfo {author} {\bibfnamefont
  {M.}~\bibnamefont {Park}}, \bibinfo {author} {\bibfnamefont {G.~A.}\
  \bibnamefont {Peairs}}, \bibinfo {author} {\bibfnamefont {D.}~\bibnamefont
  {Perello}}, \bibinfo {author} {\bibfnamefont {E.~C.}\ \bibnamefont
  {Peterson}}, \bibinfo {author} {\bibfnamefont {S.}~\bibnamefont {Ponte}},
  \bibinfo {author} {\bibfnamefont {J.}~\bibnamefont {Preskill}}, \bibinfo
  {author} {\bibfnamefont {J.}~\bibnamefont {Qiao}}, \bibinfo {author}
  {\bibfnamefont {G.}~\bibnamefont {Refael}}, \bibinfo {author} {\bibfnamefont
  {R.}~\bibnamefont {Resnick}}, \bibinfo {author} {\bibfnamefont
  {A.}~\bibnamefont {Retzker}}, \bibinfo {author} {\bibfnamefont {O.~A.}\
  \bibnamefont {Reyna}}, \bibinfo {author} {\bibfnamefont {M.}~\bibnamefont
  {Runyan}}, \bibinfo {author} {\bibfnamefont {C.~A.}\ \bibnamefont {Ryan}},
  \bibinfo {author} {\bibfnamefont {A.}~\bibnamefont {Sahmoud}}, \bibinfo
  {author} {\bibfnamefont {E.}~\bibnamefont {Sanchez}}, \bibinfo {author}
  {\bibfnamefont {R.}~\bibnamefont {Sanil}}, \bibinfo {author} {\bibfnamefont
  {K.}~\bibnamefont {Sankar}}, \bibinfo {author} {\bibfnamefont
  {Y.}~\bibnamefont {Sato}}, \bibinfo {author} {\bibfnamefont {T.}~\bibnamefont
  {Scaffidi}}, \bibinfo {author} {\bibfnamefont {S.}~\bibnamefont {Siavoshi}},
  \bibinfo {author} {\bibfnamefont {P.}~\bibnamefont {Sivarajah}}, \bibinfo
  {author} {\bibfnamefont {T.}~\bibnamefont {Skogland}}, \bibinfo {author}
  {\bibfnamefont {C.-J.}\ \bibnamefont {Su}}, \bibinfo {author} {\bibfnamefont
  {L.~J.}\ \bibnamefont {Swenson}}, \bibinfo {author} {\bibfnamefont {S.~M.}\
  \bibnamefont {Teo}}, \bibinfo {author} {\bibfnamefont {A.}~\bibnamefont
  {Tomada}}, \bibinfo {author} {\bibfnamefont {G.}~\bibnamefont {Torlai}},
  \bibinfo {author} {\bibfnamefont {E.~A.}\ \bibnamefont {Wollack}}, \bibinfo
  {author} {\bibfnamefont {Y.}~\bibnamefont {Ye}}, \bibinfo {author}
  {\bibfnamefont {J.~A.}\ \bibnamefont {Zerrudo}}, \bibinfo {author}
  {\bibfnamefont {K.}~\bibnamefont {Zhang}}, \bibinfo {author} {\bibfnamefont
  {F.~G. S.~L.}\ \bibnamefont {Brandão}}, \bibinfo {author} {\bibfnamefont
  {M.~H.}\ \bibnamefont {Matheny}},\ and\ \bibinfo {author} {\bibfnamefont
  {O.}~\bibnamefont {Painter}},\ }\bibfield  {title} {\bibinfo {title}
  {Hardware-efficient quantum error correction via concatenated bosonic
  qubits},\ }\href {https://doi.org/10.1038/s41586-025-08642-7} {\bibfield
  {journal} {\bibinfo  {journal} {Nature}\ }\textbf {\bibinfo {volume} {638}},\
  \bibinfo {pages} {927–934} (\bibinfo {year} {2025})}\BibitemShut {NoStop}%
\bibitem [{\citenamefont {Gottesman}\ \emph {et~al.}(2001)\citenamefont
  {Gottesman}, \citenamefont {Kitaev},\ and\ \citenamefont
  {Preskill}}]{Gottesman_2001}%
  \BibitemOpen
  \bibfield  {author} {\bibinfo {author} {\bibfnamefont {D.}~\bibnamefont
  {Gottesman}}, \bibinfo {author} {\bibfnamefont {A.}~\bibnamefont {Kitaev}},\
  and\ \bibinfo {author} {\bibfnamefont {J.}~\bibnamefont {Preskill}},\
  }\bibfield  {title} {\bibinfo {title} {Encoding a qubit in an oscillator},\
  }\bibfield  {journal} {\bibinfo  {journal} {Physical Review A}\ }\textbf
  {\bibinfo {volume} {64}},\ \href {https://doi.org/10.1103/physreva.64.012310}
  {10.1103/physreva.64.012310} (\bibinfo {year} {2001})\BibitemShut {NoStop}%
\bibitem [{\citenamefont {Conrad}(2024)}]{Conrad_GKP}%
  \BibitemOpen
  \bibfield  {author} {\bibinfo {author} {\bibfnamefont {J.}~\bibnamefont
  {Conrad}},\ }\emph {\bibinfo {title} {The fabulous world of GKP codes}},\
  \href {https://doi.org/10.17169/REFUBIUM-45505} {Ph.D. thesis} (\bibinfo
  {year} {2024})\BibitemShut {NoStop}%
\bibitem [{\citenamefont {Brady}\ \emph {et~al.}(2024)\citenamefont {Brady},
  \citenamefont {Eickbusch}, \citenamefont {Singh}, \citenamefont {Wu},\ and\
  \citenamefont {Zhuang}}]{Brady_2024}%
  \BibitemOpen
  \bibfield  {author} {\bibinfo {author} {\bibfnamefont {A.~J.}\ \bibnamefont
  {Brady}}, \bibinfo {author} {\bibfnamefont {A.}~\bibnamefont {Eickbusch}},
  \bibinfo {author} {\bibfnamefont {S.}~\bibnamefont {Singh}}, \bibinfo
  {author} {\bibfnamefont {J.}~\bibnamefont {Wu}},\ and\ \bibinfo {author}
  {\bibfnamefont {Q.}~\bibnamefont {Zhuang}},\ }\bibfield  {title} {\bibinfo
  {title} {Advances in bosonic quantum error correction with
  gottesman–kitaev–preskill codes: Theory, engineering and applications},\
  }\href {https://doi.org/10.1016/j.pquantelec.2023.100496} {\bibfield
  {journal} {\bibinfo  {journal} {Progress in Quantum Electronics}\ }\textbf
  {\bibinfo {volume} {93}},\ \bibinfo {pages} {100496} (\bibinfo {year}
  {2024})}\BibitemShut {NoStop}%
\bibitem [{\citenamefont {Flühmann}\ \emph {et~al.}(2019)\citenamefont
  {Flühmann}, \citenamefont {Nguyen}, \citenamefont {Marinelli}, \citenamefont
  {Negnevitsky}, \citenamefont {Mehta},\ and\ \citenamefont
  {Home}}]{fluhmann_encoding_2019}%
  \BibitemOpen
  \bibfield  {author} {\bibinfo {author} {\bibfnamefont {C.}~\bibnamefont
  {Flühmann}}, \bibinfo {author} {\bibfnamefont {T.~L.}\ \bibnamefont
  {Nguyen}}, \bibinfo {author} {\bibfnamefont {M.}~\bibnamefont {Marinelli}},
  \bibinfo {author} {\bibfnamefont {V.}~\bibnamefont {Negnevitsky}}, \bibinfo
  {author} {\bibfnamefont {K.}~\bibnamefont {Mehta}},\ and\ \bibinfo {author}
  {\bibfnamefont {J.~P.}\ \bibnamefont {Home}},\ }\bibfield  {title} {\bibinfo
  {title} {Encoding a qubit in a trapped-ion mechanical oscillator},\ }\href
  {https://doi.org/10.1038/s41586-019-0960-6} {\bibfield  {journal} {\bibinfo
  {journal} {Nature}\ }\textbf {\bibinfo {volume} {566}},\ \bibinfo {pages}
  {513} (\bibinfo {year} {2019})}\BibitemShut {NoStop}%
\bibitem [{\citenamefont {Campagne-Ibarcq}\ \emph {et~al.}(2020)\citenamefont
  {Campagne-Ibarcq}, \citenamefont {Eickbusch}, \citenamefont {Touzard},
  \citenamefont {Zalys-Geller}, \citenamefont {Frattini}, \citenamefont
  {Sivak}, \citenamefont {Reinhold}, \citenamefont {Puri}, \citenamefont
  {Shankar}, \citenamefont {Schoelkopf} \emph {et~al.}}]{campagne2020quantum}%
  \BibitemOpen
  \bibfield  {author} {\bibinfo {author} {\bibfnamefont {P.}~\bibnamefont
  {Campagne-Ibarcq}}, \bibinfo {author} {\bibfnamefont {A.}~\bibnamefont
  {Eickbusch}}, \bibinfo {author} {\bibfnamefont {S.}~\bibnamefont {Touzard}},
  \bibinfo {author} {\bibfnamefont {E.}~\bibnamefont {Zalys-Geller}}, \bibinfo
  {author} {\bibfnamefont {N.~E.}\ \bibnamefont {Frattini}}, \bibinfo {author}
  {\bibfnamefont {V.~V.}\ \bibnamefont {Sivak}}, \bibinfo {author}
  {\bibfnamefont {P.}~\bibnamefont {Reinhold}}, \bibinfo {author}
  {\bibfnamefont {S.}~\bibnamefont {Puri}}, \bibinfo {author} {\bibfnamefont
  {S.}~\bibnamefont {Shankar}}, \bibinfo {author} {\bibfnamefont {R.~J.}\
  \bibnamefont {Schoelkopf}}, \emph {et~al.},\ }\bibfield  {title} {\bibinfo
  {title} {Quantum error correction of a qubit encoded in grid states of an
  oscillator},\ }\href@noop {} {\bibfield  {journal} {\bibinfo  {journal}
  {Nature}\ }\textbf {\bibinfo {volume} {584}},\ \bibinfo {pages} {368}
  (\bibinfo {year} {2020})}\BibitemShut {NoStop}%
\bibitem [{\citenamefont {Michael}\ \emph {et~al.}(2016)\citenamefont
  {Michael}, \citenamefont {Silveri}, \citenamefont {Brierley}, \citenamefont
  {Albert}, \citenamefont {Salmilehto}, \citenamefont {Jiang},\ and\
  \citenamefont {Girvin}}]{michael2016new}%
  \BibitemOpen
  \bibfield  {author} {\bibinfo {author} {\bibfnamefont {M.~H.}\ \bibnamefont
  {Michael}}, \bibinfo {author} {\bibfnamefont {M.}~\bibnamefont {Silveri}},
  \bibinfo {author} {\bibfnamefont {R.~T.}\ \bibnamefont {Brierley}}, \bibinfo
  {author} {\bibfnamefont {V.~V.}\ \bibnamefont {Albert}}, \bibinfo {author}
  {\bibfnamefont {J.}~\bibnamefont {Salmilehto}}, \bibinfo {author}
  {\bibfnamefont {L.}~\bibnamefont {Jiang}},\ and\ \bibinfo {author}
  {\bibfnamefont {S.~M.}\ \bibnamefont {Girvin}},\ }\bibfield  {title}
  {\bibinfo {title} {New class of quantum error-correcting codes for a bosonic
  mode},\ }\href {https://doi.org/10.1103/PhysRevX.6.031006} {\bibfield
  {journal} {\bibinfo  {journal} {Phys. Rev. X}\ }\textbf {\bibinfo {volume}
  {6}},\ \bibinfo {pages} {031006} (\bibinfo {year} {2016})}\BibitemShut
  {NoStop}%
\bibitem [{\citenamefont {Hu}\ \emph {et~al.}(2019)\citenamefont {Hu},
  \citenamefont {Ma}, \citenamefont {Cai}, \citenamefont {Mu}, \citenamefont
  {Xu}, \citenamefont {Wang}, \citenamefont {Wu}, \citenamefont {Wang},
  \citenamefont {Song}, \citenamefont {Zou} \emph {et~al.}}]{hu2019quantum}%
  \BibitemOpen
  \bibfield  {author} {\bibinfo {author} {\bibfnamefont {L.}~\bibnamefont
  {Hu}}, \bibinfo {author} {\bibfnamefont {Y.}~\bibnamefont {Ma}}, \bibinfo
  {author} {\bibfnamefont {W.}~\bibnamefont {Cai}}, \bibinfo {author}
  {\bibfnamefont {X.}~\bibnamefont {Mu}}, \bibinfo {author} {\bibfnamefont
  {Y.}~\bibnamefont {Xu}}, \bibinfo {author} {\bibfnamefont {W.}~\bibnamefont
  {Wang}}, \bibinfo {author} {\bibfnamefont {Y.}~\bibnamefont {Wu}}, \bibinfo
  {author} {\bibfnamefont {H.}~\bibnamefont {Wang}}, \bibinfo {author}
  {\bibfnamefont {Y.}~\bibnamefont {Song}}, \bibinfo {author} {\bibfnamefont
  {C.-L.}\ \bibnamefont {Zou}}, \emph {et~al.},\ }\bibfield  {title} {\bibinfo
  {title} {Quantum error correction and universal gate set operation on a
  binomial bosonic logical qubit},\ }\href@noop {} {\bibfield  {journal}
  {\bibinfo  {journal} {Nature Physics}\ }\textbf {\bibinfo {volume} {15}},\
  \bibinfo {pages} {503} (\bibinfo {year} {2019})}\BibitemShut {NoStop}%
\bibitem [{\citenamefont {Leghtas}\ \emph {et~al.}(2013)\citenamefont
  {Leghtas}, \citenamefont {Kirchmair}, \citenamefont {Vlastakis},
  \citenamefont {Schoelkopf}, \citenamefont {Devoret},\ and\ \citenamefont
  {Mirrahimi}}]{leghtas2013hardware}%
  \BibitemOpen
  \bibfield  {author} {\bibinfo {author} {\bibfnamefont {Z.}~\bibnamefont
  {Leghtas}}, \bibinfo {author} {\bibfnamefont {G.}~\bibnamefont {Kirchmair}},
  \bibinfo {author} {\bibfnamefont {B.}~\bibnamefont {Vlastakis}}, \bibinfo
  {author} {\bibfnamefont {R.~J.}\ \bibnamefont {Schoelkopf}}, \bibinfo
  {author} {\bibfnamefont {M.~H.}\ \bibnamefont {Devoret}},\ and\ \bibinfo
  {author} {\bibfnamefont {M.}~\bibnamefont {Mirrahimi}},\ }\bibfield  {title}
  {\bibinfo {title} {Hardware-efficient autonomous quantum memory protection},\
  }\href@noop {} {\bibfield  {journal} {\bibinfo  {journal} {Physical Review
  Letters}\ }\textbf {\bibinfo {volume} {111}},\ \bibinfo {pages} {120501}
  (\bibinfo {year} {2013})}\BibitemShut {NoStop}%
\bibitem [{\citenamefont {Mirrahimi}\ \emph
  {et~al.}(2014{\natexlab{b}})\citenamefont {Mirrahimi}, \citenamefont
  {Leghtas}, \citenamefont {Albert}, \citenamefont {Touzard}, \citenamefont
  {Schoelkopf}, \citenamefont {Jiang},\ and\ \citenamefont
  {Devoret}}]{mirrahimi2014dynamically}%
  \BibitemOpen
  \bibfield  {author} {\bibinfo {author} {\bibfnamefont {M.}~\bibnamefont
  {Mirrahimi}}, \bibinfo {author} {\bibfnamefont {Z.}~\bibnamefont {Leghtas}},
  \bibinfo {author} {\bibfnamefont {V.~V.}\ \bibnamefont {Albert}}, \bibinfo
  {author} {\bibfnamefont {S.}~\bibnamefont {Touzard}}, \bibinfo {author}
  {\bibfnamefont {R.~J.}\ \bibnamefont {Schoelkopf}}, \bibinfo {author}
  {\bibfnamefont {L.}~\bibnamefont {Jiang}},\ and\ \bibinfo {author}
  {\bibfnamefont {M.~H.}\ \bibnamefont {Devoret}},\ }\bibfield  {title}
  {\bibinfo {title} {Dynamically protected cat-qubits: a new paradigm for
  universal quantum computation},\ }\href@noop {} {\bibfield  {journal}
  {\bibinfo  {journal} {New Journal of Physics}\ }\textbf {\bibinfo {volume}
  {16}},\ \bibinfo {pages} {045014} (\bibinfo {year}
  {2014}{\natexlab{b}})}\BibitemShut {NoStop}%
\bibitem [{\citenamefont {Guillaud}\ \emph {et~al.}(2023)\citenamefont
  {Guillaud}, \citenamefont {Cohen},\ and\ \citenamefont
  {Mirrahimi}}]{guillaud2023quantum}%
  \BibitemOpen
  \bibfield  {author} {\bibinfo {author} {\bibfnamefont {J.}~\bibnamefont
  {Guillaud}}, \bibinfo {author} {\bibfnamefont {J.}~\bibnamefont {Cohen}},\
  and\ \bibinfo {author} {\bibfnamefont {M.}~\bibnamefont {Mirrahimi}},\
  }\bibfield  {title} {\bibinfo {title} {Quantum computation with cat qubits},\
  }\href@noop {} {\bibfield  {journal} {\bibinfo  {journal} {SciPost Physics
  Lecture Notes}\ ,\ \bibinfo {pages} {072}} (\bibinfo {year}
  {2023})}\BibitemShut {NoStop}%
\bibitem [{\citenamefont {Puri}\ \emph {et~al.}(2019)\citenamefont {Puri},
  \citenamefont {Grimm}, \citenamefont {Campagne-Ibarcq}, \citenamefont
  {Eickbusch}, \citenamefont {Noh}, \citenamefont {Roberts}, \citenamefont
  {Jiang}, \citenamefont {Mirrahimi}, \citenamefont {Devoret},\ and\
  \citenamefont {Girvin}}]{PhysRevX.9.041009}%
  \BibitemOpen
  \bibfield  {author} {\bibinfo {author} {\bibfnamefont {S.}~\bibnamefont
  {Puri}}, \bibinfo {author} {\bibfnamefont {A.}~\bibnamefont {Grimm}},
  \bibinfo {author} {\bibfnamefont {P.}~\bibnamefont {Campagne-Ibarcq}},
  \bibinfo {author} {\bibfnamefont {A.}~\bibnamefont {Eickbusch}}, \bibinfo
  {author} {\bibfnamefont {K.}~\bibnamefont {Noh}}, \bibinfo {author}
  {\bibfnamefont {G.}~\bibnamefont {Roberts}}, \bibinfo {author} {\bibfnamefont
  {L.}~\bibnamefont {Jiang}}, \bibinfo {author} {\bibfnamefont
  {M.}~\bibnamefont {Mirrahimi}}, \bibinfo {author} {\bibfnamefont {M.~H.}\
  \bibnamefont {Devoret}},\ and\ \bibinfo {author} {\bibfnamefont {S.~M.}\
  \bibnamefont {Girvin}},\ }\bibfield  {title} {\bibinfo {title} {Stabilized
  cat in a driven nonlinear cavity: A fault-tolerant error syndrome detector},\
  }\href {https://doi.org/10.1103/PhysRevX.9.041009} {\bibfield  {journal}
  {\bibinfo  {journal} {Phys. Rev. X}\ }\textbf {\bibinfo {volume} {9}},\
  \bibinfo {pages} {041009} (\bibinfo {year} {2019})}\BibitemShut {NoStop}%
\bibitem [{\citenamefont {Cools}(1992)}]{cools1992survey}%
  \BibitemOpen
  \bibfield  {author} {\bibinfo {author} {\bibfnamefont {R.}~\bibnamefont
  {Cools}},\ }\bibfield  {title} {\bibinfo {title} {A survey of methods for
  constructing cubature formulae},\ }in\ \href@noop {} {\emph {\bibinfo
  {booktitle} {Numerical Integration: Recent Developments, Software and
  Applications}}}\ (\bibinfo  {publisher} {Springer},\ \bibinfo {year} {1992})\
  pp.\ \bibinfo {pages} {1--24}\BibitemShut {NoStop}%
\bibitem [{\citenamefont {Cools}(2003)}]{cools2003encyclopaedia}%
  \BibitemOpen
  \bibfield  {author} {\bibinfo {author} {\bibfnamefont {R.}~\bibnamefont
  {Cools}},\ }\bibfield  {title} {\bibinfo {title} {An encyclopaedia of
  cubature formulas},\ }\href@noop {} {\bibfield  {journal} {\bibinfo
  {journal} {Journal of complexity}\ }\textbf {\bibinfo {volume} {19}},\
  \bibinfo {pages} {445} (\bibinfo {year} {2003})}\BibitemShut {NoStop}%
\bibitem [{\citenamefont {Stroud}(1971)}]{stroud1971approximate}%
  \BibitemOpen
  \bibfield  {author} {\bibinfo {author} {\bibfnamefont {A.~H.}\ \bibnamefont
  {Stroud}},\ }\href@noop {} {\emph {\bibinfo {title} {Approximate calculation
  of multiple integrals}}}\ (\bibinfo  {publisher} {Prentice Hall},\ \bibinfo
  {year} {1971})\BibitemShut {NoStop}%
\bibitem [{\citenamefont {Sawa}\ \emph {et~al.}(2019)\citenamefont {Sawa},
  \citenamefont {Hirao}, \citenamefont {Kageyama}, \citenamefont {Sawa},
  \citenamefont {Hirao},\ and\ \citenamefont {Kageyama}}]{sawa2019euclidean}%
  \BibitemOpen
  \bibfield  {author} {\bibinfo {author} {\bibfnamefont {M.}~\bibnamefont
  {Sawa}}, \bibinfo {author} {\bibfnamefont {M.}~\bibnamefont {Hirao}},
  \bibinfo {author} {\bibfnamefont {S.}~\bibnamefont {Kageyama}}, \bibinfo
  {author} {\bibfnamefont {M.}~\bibnamefont {Sawa}}, \bibinfo {author}
  {\bibfnamefont {M.}~\bibnamefont {Hirao}},\ and\ \bibinfo {author}
  {\bibfnamefont {S.}~\bibnamefont {Kageyama}},\ }\href@noop {} {\emph
  {\bibinfo {title} {Euclidean design theory}}}\ (\bibinfo  {publisher}
  {Springer},\ \bibinfo {year} {2019})\BibitemShut {NoStop}%
\bibitem [{\citenamefont {Bannai}\ \emph {et~al.}(2010)\citenamefont {Bannai},
  \citenamefont {Bannai}, \citenamefont {Hirao},\ and\ \citenamefont
  {Sawa}}]{bannai2010cubature}%
  \BibitemOpen
  \bibfield  {author} {\bibinfo {author} {\bibfnamefont {E.}~\bibnamefont
  {Bannai}}, \bibinfo {author} {\bibfnamefont {E.}~\bibnamefont {Bannai}},
  \bibinfo {author} {\bibfnamefont {M.}~\bibnamefont {Hirao}},\ and\ \bibinfo
  {author} {\bibfnamefont {M.}~\bibnamefont {Sawa}},\ }\bibfield  {title}
  {\bibinfo {title} {Cubature formulas in numerical analysis and euclidean
  tight designs},\ }\href@noop {} {\bibfield  {journal} {\bibinfo  {journal}
  {European Journal of Combinatorics}\ }\textbf {\bibinfo {volume} {31}},\
  \bibinfo {pages} {423} (\bibinfo {year} {2010})}\BibitemShut {NoStop}%
\bibitem [{\citenamefont {Jain}\ \emph {et~al.}(2024)\citenamefont {Jain},
  \citenamefont {Iosue}, \citenamefont {Barg},\ and\ \citenamefont
  {Albert}}]{jain2024quantum}%
  \BibitemOpen
  \bibfield  {author} {\bibinfo {author} {\bibfnamefont {S.~P.}\ \bibnamefont
  {Jain}}, \bibinfo {author} {\bibfnamefont {J.~T.}\ \bibnamefont {Iosue}},
  \bibinfo {author} {\bibfnamefont {A.}~\bibnamefont {Barg}},\ and\ \bibinfo
  {author} {\bibfnamefont {V.~V.}\ \bibnamefont {Albert}},\ }\bibfield  {title}
  {\bibinfo {title} {Quantum spherical codes},\ }\href@noop {} {\bibfield
  {journal} {\bibinfo  {journal} {Nature Physics}\ }\textbf {\bibinfo {volume}
  {20}},\ \bibinfo {pages} {1300} (\bibinfo {year} {2024})}\BibitemShut
  {NoStop}%
\bibitem [{\citenamefont {Knill}\ and\ \citenamefont
  {Laflamme}(1997)}]{knill1997theory}%
  \BibitemOpen
  \bibfield  {author} {\bibinfo {author} {\bibfnamefont {E.}~\bibnamefont
  {Knill}}\ and\ \bibinfo {author} {\bibfnamefont {R.}~\bibnamefont
  {Laflamme}},\ }\bibfield  {title} {\bibinfo {title} {Theory of quantum
  error-correcting codes},\ }\href@noop {} {\bibfield  {journal} {\bibinfo
  {journal} {Physical Review A}\ }\textbf {\bibinfo {volume} {55}},\ \bibinfo
  {pages} {900} (\bibinfo {year} {1997})}\BibitemShut {NoStop}%
\bibitem [{\citenamefont {Turchette}\ \emph {et~al.}(2000)\citenamefont
  {Turchette}, \citenamefont {Myatt}, \citenamefont {King}, \citenamefont
  {Sackett}, \citenamefont {Kielpinski}, \citenamefont {Itano}, \citenamefont
  {Monroe},\ and\ \citenamefont {Wineland}}]{turchette2000decoherence}%
  \BibitemOpen
  \bibfield  {author} {\bibinfo {author} {\bibfnamefont {Q.}~\bibnamefont
  {Turchette}}, \bibinfo {author} {\bibfnamefont {C.}~\bibnamefont {Myatt}},
  \bibinfo {author} {\bibfnamefont {B.}~\bibnamefont {King}}, \bibinfo {author}
  {\bibfnamefont {C.}~\bibnamefont {Sackett}}, \bibinfo {author} {\bibfnamefont
  {D.}~\bibnamefont {Kielpinski}}, \bibinfo {author} {\bibfnamefont
  {W.}~\bibnamefont {Itano}}, \bibinfo {author} {\bibfnamefont
  {C.}~\bibnamefont {Monroe}},\ and\ \bibinfo {author} {\bibfnamefont
  {D.}~\bibnamefont {Wineland}},\ }\bibfield  {title} {\bibinfo {title}
  {Decoherence and decay of motional quantum states of a trapped atom coupled
  to engineered reservoirs},\ }\href@noop {} {\bibfield  {journal} {\bibinfo
  {journal} {Physical Review A}\ }\textbf {\bibinfo {volume} {62}},\ \bibinfo
  {pages} {053807} (\bibinfo {year} {2000})}\BibitemShut {NoStop}%
\bibitem [{\citenamefont {Grimsmo}\ \emph {et~al.}(2020)\citenamefont
  {Grimsmo}, \citenamefont {Combes},\ and\ \citenamefont
  {Baragiola}}]{grimsmo2020quantum}%
  \BibitemOpen
  \bibfield  {author} {\bibinfo {author} {\bibfnamefont {A.~L.}\ \bibnamefont
  {Grimsmo}}, \bibinfo {author} {\bibfnamefont {J.}~\bibnamefont {Combes}},\
  and\ \bibinfo {author} {\bibfnamefont {B.~Q.}\ \bibnamefont {Baragiola}},\
  }\bibfield  {title} {\bibinfo {title} {Quantum computing with
  rotation-symmetric bosonic codes},\ }\href@noop {} {\bibfield  {journal}
  {\bibinfo  {journal} {Physical Review X}\ }\textbf {\bibinfo {volume} {10}},\
  \bibinfo {pages} {011058} (\bibinfo {year} {2020})}\BibitemShut {NoStop}%
\bibitem [{SM_()}]{SM_QCC}%
  \BibitemOpen
  \href@noop {} {}\bibinfo {note} {Supplemental Material contains a proof that
  QCCs satisfy the Knill-Laflamme conditions and a discussion of lower bounds
  on constellation size.}\BibitemShut {Stop}%
\bibitem [{\citenamefont {Delsarte}\ \emph {et~al.}(1977)\citenamefont
  {Delsarte}, \citenamefont {Goethals},\ and\ \citenamefont
  {Seidel}}]{delsarte_spherical_1977}%
  \BibitemOpen
  \bibfield  {author} {\bibinfo {author} {\bibfnamefont {P.}~\bibnamefont
  {Delsarte}}, \bibinfo {author} {\bibfnamefont {J.~M.}\ \bibnamefont
  {Goethals}},\ and\ \bibinfo {author} {\bibfnamefont {J.~J.}\ \bibnamefont
  {Seidel}},\ }\bibfield  {title} {\bibinfo {title} {Spherical codes and
  designs},\ }\href {https://doi.org/10.1007/BF03187604} {\bibfield  {journal}
  {\bibinfo  {journal} {Geometriae Dedicata}\ }\textbf {\bibinfo {volume}
  {6}},\ \bibinfo {pages} {363} (\bibinfo {year} {1977})}\BibitemShut {NoStop}%
\bibitem [{\citenamefont {M{\"o}ller}(1979)}]{moller1979lower}%
  \BibitemOpen
  \bibfield  {author} {\bibinfo {author} {\bibfnamefont {H.~M.}\ \bibnamefont
  {M{\"o}ller}},\ }\bibfield  {title} {\bibinfo {title} {Lower bounds for the
  number of nodes in cubature formulae},\ }\href@noop {} {\bibfield  {journal}
  {\bibinfo  {journal} {Numerische Integration: Tagung im Mathematischen
  Forschungsinstitut Oberwolfach vom 1. bis 7. Oktober 1978}\ ,\ \bibinfo
  {pages} {221}} (\bibinfo {year} {1979})}\BibitemShut {NoStop}%
\bibitem [{\citenamefont {Tchakaloff}(1957)}]{tchakaloff1957formules}%
  \BibitemOpen
  \bibfield  {author} {\bibinfo {author} {\bibfnamefont {V.}~\bibnamefont
  {Tchakaloff}},\ }\bibfield  {title} {\bibinfo {title} {Formules de cubatures
  m{\'e}caniques {\`a} coefficients non n{\'e}gatifs},\ }\href@noop {}
  {\bibfield  {journal} {\bibinfo  {journal} {Bull. Sci. Math}\ }\textbf
  {\bibinfo {volume} {81}},\ \bibinfo {pages} {123} (\bibinfo {year}
  {1957})}\BibitemShut {NoStop}%
\bibitem [{\citenamefont {Bannai}\ \emph {et~al.}(2015)\citenamefont {Bannai},
  \citenamefont {Bannai},\ and\ \citenamefont {Zhu}}]{bannai2015survey}%
  \BibitemOpen
  \bibfield  {author} {\bibinfo {author} {\bibfnamefont {E.}~\bibnamefont
  {Bannai}}, \bibinfo {author} {\bibfnamefont {E.}~\bibnamefont {Bannai}},\
  and\ \bibinfo {author} {\bibfnamefont {Y.}~\bibnamefont {Zhu}},\ }\bibfield
  {title} {\bibinfo {title} {A survey on tight euclidean t-designs and tight
  relative t-designs in certain association schemes},\ }\href@noop {}
  {\bibfield  {journal} {\bibinfo  {journal} {Proceedings of the Steklov
  Institute of Mathematics}\ }\textbf {\bibinfo {volume} {288}},\ \bibinfo
  {pages} {189} (\bibinfo {year} {2015})}\BibitemShut {NoStop}%
\bibitem [{\citenamefont {Bannai}\ and\ \citenamefont
  {Bannai}(2009{\natexlab{a}})}]{bannai2009survey}%
  \BibitemOpen
  \bibfield  {author} {\bibinfo {author} {\bibfnamefont {E.}~\bibnamefont
  {Bannai}}\ and\ \bibinfo {author} {\bibfnamefont {E.}~\bibnamefont
  {Bannai}},\ }\bibfield  {title} {\bibinfo {title} {A survey on spherical
  designs and algebraic combinatorics on spheres},\ }\href@noop {} {\bibfield
  {journal} {\bibinfo  {journal} {European Journal of Combinatorics}\ }\textbf
  {\bibinfo {volume} {30}},\ \bibinfo {pages} {1392} (\bibinfo {year}
  {2009}{\natexlab{a}})}\BibitemShut {NoStop}%
\bibitem [{Note1()}]{Note1}%
  \BibitemOpen
  \bibinfo {note} {Otherwise, we can make $K$ and $d_E$ arbitrarily large if we
  can use the entire phase-space.}\BibitemShut {Stop}%
\bibitem [{\citenamefont {Gottesman}(1996)}]{Gottesman_1996}%
  \BibitemOpen
  \bibfield  {author} {\bibinfo {author} {\bibfnamefont {D.}~\bibnamefont
  {Gottesman}},\ }\bibfield  {title} {\bibinfo {title} {Class of quantum
  error-correcting codes saturating the quantum hamming bound},\ }\href
  {https://doi.org/10.1103/physreva.54.1862} {\bibfield  {journal} {\bibinfo
  {journal} {Physical Review A}\ }\textbf {\bibinfo {volume} {54}},\ \bibinfo
  {pages} {1862–1868} (\bibinfo {year} {1996})}\BibitemShut {NoStop}%
\bibitem [{\citenamefont {Aly}(2007)}]{aly2007notequantumhammingbound}%
  \BibitemOpen
  \bibfield  {author} {\bibinfo {author} {\bibfnamefont {S.~A.}\ \bibnamefont
  {Aly}},\ }\href {https://arxiv.org/abs/0711.4603} {\bibinfo {title} {A note
  on quantum hamming bound}} (\bibinfo {year} {2007}),\ \Eprint
  {https://arxiv.org/abs/0711.4603} {arXiv:0711.4603 [quant-ph]} \BibitemShut
  {NoStop}%
\bibitem [{\citenamefont {Roy}\ and\ \citenamefont
  {Suda}(2014)}]{roy2014complex}%
  \BibitemOpen
  \bibfield  {author} {\bibinfo {author} {\bibfnamefont {A.}~\bibnamefont
  {Roy}}\ and\ \bibinfo {author} {\bibfnamefont {S.}~\bibnamefont {Suda}},\
  }\bibfield  {title} {\bibinfo {title} {Complex spherical designs and codes},\
  }\href@noop {} {\bibfield  {journal} {\bibinfo  {journal} {Journal of
  Combinatorial Designs}\ }\textbf {\bibinfo {volume} {22}},\ \bibinfo {pages}
  {105} (\bibinfo {year} {2014})}\BibitemShut {NoStop}%
\bibitem [{\citenamefont {Bannai}\ \emph {et~al.}(2021)\citenamefont {Bannai},
  \citenamefont {Bannai}, \citenamefont {Ito},\ and\ \citenamefont
  {Tanaka}}]{bannai2021algebraic}%
  \BibitemOpen
  \bibfield  {author} {\bibinfo {author} {\bibfnamefont {E.}~\bibnamefont
  {Bannai}}, \bibinfo {author} {\bibfnamefont {E.}~\bibnamefont {Bannai}},
  \bibinfo {author} {\bibfnamefont {T.}~\bibnamefont {Ito}},\ and\ \bibinfo
  {author} {\bibfnamefont {R.}~\bibnamefont {Tanaka}},\ }\href@noop {} {\emph
  {\bibinfo {title} {Algebraic combinatorics}}},\ Vol.~\bibinfo {volume} {5}\
  (\bibinfo  {publisher} {Walter de Gruyter GmbH \& Co KG},\ \bibinfo {year}
  {2021})\BibitemShut {NoStop}%
\bibitem [{\citenamefont {Bajnok}(2006)}]{bajnok2006euclidean}%
  \BibitemOpen
  \bibfield  {author} {\bibinfo {author} {\bibfnamefont {B.}~\bibnamefont
  {Bajnok}},\ }\bibfield  {title} {\bibinfo {title} {On euclidean designs},\
  }\href {https://doi.org/doi:10.1515/ADVGEOM.2006.026} {\bibfield  {journal}
  {\bibinfo  {journal} {Advances in Geometry}\ }\textbf {\bibinfo {volume}
  {6}},\ \bibinfo {pages} {423} (\bibinfo {year} {2006})}\BibitemShut {NoStop}%
\bibitem [{\citenamefont {Bannai}\ and\ \citenamefont
  {Bannai}(2009{\natexlab{b}})}]{BannaiBannai2009SphericalEuclidean}%
  \BibitemOpen
  \bibfield  {author} {\bibinfo {author} {\bibfnamefont {E.}~\bibnamefont
  {Bannai}}\ and\ \bibinfo {author} {\bibfnamefont {E.}~\bibnamefont
  {Bannai}},\ }\bibfield  {title} {\bibinfo {title} {Spherical designs and
  euclidean designs},\ }in\ \href@noop {} {\emph {\bibinfo {booktitle} {Recent
  Developments in Algebra and Related Areas}}},\ \bibinfo {series} {Advanced
  Lectures in Mathematics}, Vol.~\bibinfo {volume} {8},\ \bibinfo {editor}
  {edited by\ \bibinfo {editor} {\bibfnamefont {C.}~\bibnamefont {Dong}}\ and\
  \bibinfo {editor} {\bibfnamefont {F.-a.}\ \bibnamefont {Li}}}\ (\bibinfo
  {publisher} {Higher Education Press and International Press},\ \bibinfo
  {address} {Beijing and Somerville, MA},\ \bibinfo {year} {2009})\ pp.\
  \bibinfo {pages} {1--37}\BibitemShut {NoStop}%
\bibitem [{\citenamefont {Bajnok}(2007)}]{bajnok2007orbits}%
  \BibitemOpen
  \bibfield  {author} {\bibinfo {author} {\bibfnamefont {B.}~\bibnamefont
  {Bajnok}},\ }\bibfield  {title} {\bibinfo {title} {Orbits of the
  hyperoctahedral group as euclidean designs},\ }\href@noop {} {\bibfield
  {journal} {\bibinfo  {journal} {Journal of Algebraic Combinatorics}\ }\textbf
  {\bibinfo {volume} {25}},\ \bibinfo {pages} {375} (\bibinfo {year}
  {2007})}\BibitemShut {NoStop}%
\bibitem [{\citenamefont {Denys}\ and\ \citenamefont
  {Leverrier}(2023)}]{denys20232}%
  \BibitemOpen
  \bibfield  {author} {\bibinfo {author} {\bibfnamefont {A.}~\bibnamefont
  {Denys}}\ and\ \bibinfo {author} {\bibfnamefont {A.}~\bibnamefont
  {Leverrier}},\ }\bibfield  {title} {\bibinfo {title} {The $2 t $-qutrit, a
  two-mode bosonic qutrit},\ }\href@noop {} {\bibfield  {journal} {\bibinfo
  {journal} {Quantum}\ }\textbf {\bibinfo {volume} {7}},\ \bibinfo {pages}
  {1032} (\bibinfo {year} {2023})}\BibitemShut {NoStop}%
\bibitem [{\citenamefont {Lescanne}\ \emph {et~al.}(2020)\citenamefont
  {Lescanne}, \citenamefont {Villiers}, \citenamefont {Peronnin}, \citenamefont
  {Sarlette}, \citenamefont {Delbecq}, \citenamefont {Huard}, \citenamefont
  {Kontos}, \citenamefont {Mirrahimi},\ and\ \citenamefont
  {Leghtas}}]{lescanne_exponential_2020}%
  \BibitemOpen
  \bibfield  {author} {\bibinfo {author} {\bibfnamefont {R.}~\bibnamefont
  {Lescanne}}, \bibinfo {author} {\bibfnamefont {M.}~\bibnamefont {Villiers}},
  \bibinfo {author} {\bibfnamefont {T.}~\bibnamefont {Peronnin}}, \bibinfo
  {author} {\bibfnamefont {A.}~\bibnamefont {Sarlette}}, \bibinfo {author}
  {\bibfnamefont {M.}~\bibnamefont {Delbecq}}, \bibinfo {author} {\bibfnamefont
  {B.}~\bibnamefont {Huard}}, \bibinfo {author} {\bibfnamefont
  {T.}~\bibnamefont {Kontos}}, \bibinfo {author} {\bibfnamefont
  {M.}~\bibnamefont {Mirrahimi}},\ and\ \bibinfo {author} {\bibfnamefont
  {Z.}~\bibnamefont {Leghtas}},\ }\bibfield  {title} {\bibinfo {title}
  {Exponential suppression of bit-flips in a qubit encoded in an oscillator},\
  }\href {https://doi.org/10.1038/s41567-020-0824-x} {\bibfield  {journal}
  {\bibinfo  {journal} {Nature Physics}\ }\textbf {\bibinfo {volume} {16}},\
  \bibinfo {pages} {509} (\bibinfo {year} {2020})}\BibitemShut {NoStop}%
\bibitem [{\citenamefont {Albert}\ \emph {et~al.}(2018)\citenamefont {Albert},
  \citenamefont {Noh}, \citenamefont {Duivenvoorden}, \citenamefont {Young},
  \citenamefont {Brierley}, \citenamefont {Reinhold}, \citenamefont {Vuillot},
  \citenamefont {Li}, \citenamefont {Shen}, \citenamefont {Girvin} \emph
  {et~al.}}]{albert2018performance}%
  \BibitemOpen
  \bibfield  {author} {\bibinfo {author} {\bibfnamefont {V.~V.}\ \bibnamefont
  {Albert}}, \bibinfo {author} {\bibfnamefont {K.}~\bibnamefont {Noh}},
  \bibinfo {author} {\bibfnamefont {K.}~\bibnamefont {Duivenvoorden}}, \bibinfo
  {author} {\bibfnamefont {D.~J.}\ \bibnamefont {Young}}, \bibinfo {author}
  {\bibfnamefont {R.}~\bibnamefont {Brierley}}, \bibinfo {author}
  {\bibfnamefont {P.}~\bibnamefont {Reinhold}}, \bibinfo {author}
  {\bibfnamefont {C.}~\bibnamefont {Vuillot}}, \bibinfo {author} {\bibfnamefont
  {L.}~\bibnamefont {Li}}, \bibinfo {author} {\bibfnamefont {C.}~\bibnamefont
  {Shen}}, \bibinfo {author} {\bibfnamefont {S.~M.}\ \bibnamefont {Girvin}},
  \emph {et~al.},\ }\bibfield  {title} {\bibinfo {title} {Performance and
  structure of single-mode bosonic codes},\ }\href@noop {} {\bibfield
  {journal} {\bibinfo  {journal} {Physical Review A}\ }\textbf {\bibinfo
  {volume} {97}},\ \bibinfo {pages} {032346} (\bibinfo {year}
  {2018})}\BibitemShut {NoStop}%
\bibitem [{\citenamefont {Nielsen}\ and\ \citenamefont
  {Chuang}(2010)}]{nielsen2010quantum}%
  \BibitemOpen
  \bibfield  {author} {\bibinfo {author} {\bibfnamefont {M.~A.}\ \bibnamefont
  {Nielsen}}\ and\ \bibinfo {author} {\bibfnamefont {I.~L.}\ \bibnamefont
  {Chuang}},\ }\href@noop {} {\emph {\bibinfo {title} {Quantum computation and
  quantum information}}}\ (\bibinfo  {publisher} {Cambridge university press},\
  \bibinfo {year} {2010})\BibitemShut {NoStop}%
\bibitem [{\citenamefont {Li}\ and\ \citenamefont
  {Su}(2023)}]{li2023correctingbiasednoiseusing}%
  \BibitemOpen
  \bibfield  {author} {\bibinfo {author} {\bibfnamefont {Z.}~\bibnamefont
  {Li}}\ and\ \bibinfo {author} {\bibfnamefont {D.}~\bibnamefont {Su}},\ }\href
  {https://arxiv.org/abs/2308.01549} {\bibinfo {title} {Correcting biased noise
  using gottesman-kitaev-preskill repetition code with noisy ancilla}}
  (\bibinfo {year} {2023}),\ \Eprint {https://arxiv.org/abs/2308.01549}
  {arXiv:2308.01549 [quant-ph]} \BibitemShut {NoStop}%
\bibitem [{\citenamefont {H{\"a}nggli}\ \emph {et~al.}(2020)\citenamefont
  {H{\"a}nggli}, \citenamefont {Heinze},\ and\ \citenamefont
  {K{\"o}nig}}]{hanggli2020enhanced}%
  \BibitemOpen
  \bibfield  {author} {\bibinfo {author} {\bibfnamefont {L.}~\bibnamefont
  {H{\"a}nggli}}, \bibinfo {author} {\bibfnamefont {M.}~\bibnamefont
  {Heinze}},\ and\ \bibinfo {author} {\bibfnamefont {R.}~\bibnamefont
  {K{\"o}nig}},\ }\bibfield  {title} {\bibinfo {title} {Enhanced noise
  resilience of the surface--gottesman-kitaev-preskill code via designed
  bias},\ }\href@noop {} {\bibfield  {journal} {\bibinfo  {journal} {Physical
  Review A}\ }\textbf {\bibinfo {volume} {102}},\ \bibinfo {pages} {052408}
  (\bibinfo {year} {2020})}\BibitemShut {NoStop}%
\bibitem [{\citenamefont {Stafford}\ and\ \citenamefont
  {Menicucci}(2023)}]{Stafford_2023}%
  \BibitemOpen
  \bibfield  {author} {\bibinfo {author} {\bibfnamefont {M.~P.}\ \bibnamefont
  {Stafford}}\ and\ \bibinfo {author} {\bibfnamefont {N.~C.}\ \bibnamefont
  {Menicucci}},\ }\bibfield  {title} {\bibinfo {title} {Biased
  gottesman-kitaev-preskill repetition code},\ }\bibfield  {journal} {\bibinfo
  {journal} {Physical Review A}\ }\textbf {\bibinfo {volume} {108}},\ \href
  {https://doi.org/10.1103/physreva.108.052428} {10.1103/physreva.108.052428}
  (\bibinfo {year} {2023})\BibitemShut {NoStop}%
\bibitem [{\citenamefont {Zhang}\ \emph {et~al.}(2023)\citenamefont {Zhang},
  \citenamefont {Wu},\ and\ \citenamefont {Guo}}]{Zhang_2023}%
  \BibitemOpen
  \bibfield  {author} {\bibinfo {author} {\bibfnamefont {J.}~\bibnamefont
  {Zhang}}, \bibinfo {author} {\bibfnamefont {Y.-C.}\ \bibnamefont {Wu}},\ and\
  \bibinfo {author} {\bibfnamefont {G.-P.}\ \bibnamefont {Guo}},\ }\bibfield
  {title} {\bibinfo {title} {Concatenation of the gottesman-kitaev-preskill
  code with the xzzx surface code},\ }\bibfield  {journal} {\bibinfo  {journal}
  {Physical Review A}\ }\textbf {\bibinfo {volume} {107}},\ \href
  {https://doi.org/10.1103/physreva.107.062408} {10.1103/physreva.107.062408}
  (\bibinfo {year} {2023})\BibitemShut {NoStop}%
\bibitem [{\citenamefont {Berent}\ \emph {et~al.}(2024)\citenamefont {Berent},
  \citenamefont {Hillmann}, \citenamefont {Eisert}, \citenamefont {Wille},\
  and\ \citenamefont {Roffe}}]{Berent_2024}%
  \BibitemOpen
  \bibfield  {author} {\bibinfo {author} {\bibfnamefont {L.}~\bibnamefont
  {Berent}}, \bibinfo {author} {\bibfnamefont {T.}~\bibnamefont {Hillmann}},
  \bibinfo {author} {\bibfnamefont {J.}~\bibnamefont {Eisert}}, \bibinfo
  {author} {\bibfnamefont {R.}~\bibnamefont {Wille}},\ and\ \bibinfo {author}
  {\bibfnamefont {J.}~\bibnamefont {Roffe}},\ }\bibfield  {title} {\bibinfo
  {title} {Analog information decoding of bosonic quantum low-density
  parity-check codes},\ }\bibfield  {journal} {\bibinfo  {journal} {PRX
  Quantum}\ }\textbf {\bibinfo {volume} {5}},\ \href
  {https://doi.org/10.1103/prxquantum.5.020349} {10.1103/prxquantum.5.020349}
  (\bibinfo {year} {2024})\BibitemShut {NoStop}%
\bibitem [{\citenamefont {Chamberland}\ \emph {et~al.}(2022)\citenamefont
  {Chamberland}, \citenamefont {Noh}, \citenamefont {Arrangoiz-Arriola},
  \citenamefont {Campbell}, \citenamefont {Hann}, \citenamefont {Iverson},
  \citenamefont {Putterman}, \citenamefont {Bohdanowicz}, \citenamefont
  {Flammia}, \citenamefont {Keller}, \citenamefont {Refael}, \citenamefont
  {Preskill}, \citenamefont {Jiang}, \citenamefont {Safavi-Naeini},
  \citenamefont {Painter},\ and\ \citenamefont {Brandão}}]{Chamberland_2022}%
  \BibitemOpen
  \bibfield  {author} {\bibinfo {author} {\bibfnamefont {C.}~\bibnamefont
  {Chamberland}}, \bibinfo {author} {\bibfnamefont {K.}~\bibnamefont {Noh}},
  \bibinfo {author} {\bibfnamefont {P.}~\bibnamefont {Arrangoiz-Arriola}},
  \bibinfo {author} {\bibfnamefont {E.~T.}\ \bibnamefont {Campbell}}, \bibinfo
  {author} {\bibfnamefont {C.~T.}\ \bibnamefont {Hann}}, \bibinfo {author}
  {\bibfnamefont {J.}~\bibnamefont {Iverson}}, \bibinfo {author} {\bibfnamefont
  {H.}~\bibnamefont {Putterman}}, \bibinfo {author} {\bibfnamefont {T.~C.}\
  \bibnamefont {Bohdanowicz}}, \bibinfo {author} {\bibfnamefont {S.~T.}\
  \bibnamefont {Flammia}}, \bibinfo {author} {\bibfnamefont {A.}~\bibnamefont
  {Keller}}, \bibinfo {author} {\bibfnamefont {G.}~\bibnamefont {Refael}},
  \bibinfo {author} {\bibfnamefont {J.}~\bibnamefont {Preskill}}, \bibinfo
  {author} {\bibfnamefont {L.}~\bibnamefont {Jiang}}, \bibinfo {author}
  {\bibfnamefont {A.~H.}\ \bibnamefont {Safavi-Naeini}}, \bibinfo {author}
  {\bibfnamefont {O.}~\bibnamefont {Painter}},\ and\ \bibinfo {author}
  {\bibfnamefont {F.~G.}\ \bibnamefont {Brandão}},\ }\bibfield  {title}
  {\bibinfo {title} {Building a fault-tolerant quantum computer using
  concatenated cat codes},\ }\bibfield  {journal} {\bibinfo  {journal} {PRX
  Quantum}\ }\textbf {\bibinfo {volume} {3}},\ \href
  {https://doi.org/10.1103/prxquantum.3.010329} {10.1103/prxquantum.3.010329}
  (\bibinfo {year} {2022})\BibitemShut {NoStop}%
\bibitem [{\citenamefont {Fukui}\ \emph {et~al.}(2023)\citenamefont {Fukui},
  \citenamefont {Matsuura},\ and\ \citenamefont {Menicucci}}]{Fukui_2023}%
  \BibitemOpen
  \bibfield  {author} {\bibinfo {author} {\bibfnamefont {K.}~\bibnamefont
  {Fukui}}, \bibinfo {author} {\bibfnamefont {T.}~\bibnamefont {Matsuura}},\
  and\ \bibinfo {author} {\bibfnamefont {N.~C.}\ \bibnamefont {Menicucci}},\
  }\bibfield  {title} {\bibinfo {title} {Efficient concatenated bosonic code
  for additive gaussian noise},\ }\bibfield  {journal} {\bibinfo  {journal}
  {Physical Review Letters}\ }\textbf {\bibinfo {volume} {131}},\ \href
  {https://doi.org/10.1103/physrevlett.131.170603}
  {10.1103/physrevlett.131.170603} (\bibinfo {year} {2023})\BibitemShut
  {NoStop}%
\bibitem [{\citenamefont {Le~R{\'e}gent}\ \emph {et~al.}(2023)\citenamefont
  {Le~R{\'e}gent}, \citenamefont {Berdou}, \citenamefont {Leghtas},
  \citenamefont {Guillaud},\ and\ \citenamefont {Mirrahimi}}]{le2023high}%
  \BibitemOpen
  \bibfield  {author} {\bibinfo {author} {\bibfnamefont {F.-M.}\ \bibnamefont
  {Le~R{\'e}gent}}, \bibinfo {author} {\bibfnamefont {C.}~\bibnamefont
  {Berdou}}, \bibinfo {author} {\bibfnamefont {Z.}~\bibnamefont {Leghtas}},
  \bibinfo {author} {\bibfnamefont {J.}~\bibnamefont {Guillaud}},\ and\
  \bibinfo {author} {\bibfnamefont {M.}~\bibnamefont {Mirrahimi}},\ }\bibfield
  {title} {\bibinfo {title} {High-performance repetition cat code using fast
  noisy operations},\ }\href@noop {} {\bibfield  {journal} {\bibinfo  {journal}
  {Quantum}\ }\textbf {\bibinfo {volume} {7}},\ \bibinfo {pages} {1198}
  (\bibinfo {year} {2023})}\BibitemShut {NoStop}%
\bibitem [{\citenamefont {Xu}\ \emph {et~al.}(2023)\citenamefont {Xu},
  \citenamefont {Zheng}, \citenamefont {Wang}, \citenamefont {Zoller},
  \citenamefont {Clerk},\ and\ \citenamefont {Jiang}}]{xu2023autonomous}%
  \BibitemOpen
  \bibfield  {author} {\bibinfo {author} {\bibfnamefont {Q.}~\bibnamefont
  {Xu}}, \bibinfo {author} {\bibfnamefont {G.}~\bibnamefont {Zheng}}, \bibinfo
  {author} {\bibfnamefont {Y.-X.}\ \bibnamefont {Wang}}, \bibinfo {author}
  {\bibfnamefont {P.}~\bibnamefont {Zoller}}, \bibinfo {author} {\bibfnamefont
  {A.~A.}\ \bibnamefont {Clerk}},\ and\ \bibinfo {author} {\bibfnamefont
  {L.}~\bibnamefont {Jiang}},\ }\bibfield  {title} {\bibinfo {title}
  {Autonomous quantum error correction and fault-tolerant quantum computation
  with squeezed cat qubits},\ }\href@noop {} {\bibfield  {journal} {\bibinfo
  {journal} {npj Quantum Information}\ }\textbf {\bibinfo {volume} {9}},\
  \bibinfo {pages} {78} (\bibinfo {year} {2023})}\BibitemShut {NoStop}%
\bibitem [{\citenamefont {Korolev}\ \emph {et~al.}(2024)\citenamefont
  {Korolev}, \citenamefont {Bashmakova},\ and\ \citenamefont
  {Golubeva}}]{Korolev_2024}%
  \BibitemOpen
  \bibfield  {author} {\bibinfo {author} {\bibfnamefont {S.~B.}\ \bibnamefont
  {Korolev}}, \bibinfo {author} {\bibfnamefont {E.~N.}\ \bibnamefont
  {Bashmakova}},\ and\ \bibinfo {author} {\bibfnamefont {T.~Y.}\ \bibnamefont
  {Golubeva}},\ }\bibfield  {title} {\bibinfo {title} {Error correction using
  squeezed fock states},\ }\bibfield  {journal} {\bibinfo  {journal} {Quantum
  Information Processing}\ }\textbf {\bibinfo {volume} {23}},\ \href
  {https://doi.org/10.1007/s11128-024-04549-w} {10.1007/s11128-024-04549-w}
  (\bibinfo {year} {2024})\BibitemShut {NoStop}%
\bibitem [{\citenamefont {Gutman}\ \emph {et~al.}(2025)\citenamefont {Gutman},
  \citenamefont {Blumenthal}, \citenamefont {Hacohen-Gourgy}, \citenamefont
  {Orda},\ and\ \citenamefont
  {Kaminer}}]{gutman2025squeezedvacuumbosoniccodes}%
  \BibitemOpen
  \bibfield  {author} {\bibinfo {author} {\bibfnamefont {N.}~\bibnamefont
  {Gutman}}, \bibinfo {author} {\bibfnamefont {E.}~\bibnamefont {Blumenthal}},
  \bibinfo {author} {\bibfnamefont {S.}~\bibnamefont {Hacohen-Gourgy}},
  \bibinfo {author} {\bibfnamefont {A.}~\bibnamefont {Orda}},\ and\ \bibinfo
  {author} {\bibfnamefont {I.}~\bibnamefont {Kaminer}},\ }\href
  {https://arxiv.org/abs/2511.06108} {\bibinfo {title} {Squeezed-vacuum bosonic
  codes}} (\bibinfo {year} {2025}),\ \Eprint {https://arxiv.org/abs/2511.06108}
  {arXiv:2511.06108 [quant-ph]} \BibitemShut {NoStop}%
\bibitem [{\citenamefont {Cahill}\ and\ \citenamefont
  {Glauber}(1969)}]{cahill1969ordered}%
  \BibitemOpen
  \bibfield  {author} {\bibinfo {author} {\bibfnamefont {K.~E.}\ \bibnamefont
  {Cahill}}\ and\ \bibinfo {author} {\bibfnamefont {R.~J.}\ \bibnamefont
  {Glauber}},\ }\bibfield  {title} {\bibinfo {title} {Ordered expansions in
  boson amplitude operators},\ }\href@noop {} {\bibfield  {journal} {\bibinfo
  {journal} {Physical Review}\ }\textbf {\bibinfo {volume} {177}},\ \bibinfo
  {pages} {1857} (\bibinfo {year} {1969})}\BibitemShut {NoStop}%
\end{thebibliography}%

\clearpage
\onecolumngrid
\pagebreak
\widetext

\begin{center}
\textbf{\large Supplemental Materials: Quantum Cubature Codes}

\vspace{0.25cm}

Yaoling Yang$^{1}$,  Andrew Tanggara$^{2,3}$, Tobias Haug$^{4}$, Kishor Bharti$^{5,1}$

\vspace{0.25cm}

$^{1}${\small \em Quantum Innovation Centre (Q.InC), Agency for Science, Technology and Research (A*STAR), 2 Fusionopolis Way, Innovis
\#08-03, Singapore 138634, Republic of Singapore}\\

$^{2}${\small \em Centre for Quantum Technologies, National University of Singapore, 3 Science Drive 2, Singapore 117543}\\

$^{3}${\small \em Nanyang Quantum Hub, School of Physical and Mathematical Sciences, Nanyang Technological 
University, Singapore 639673}\\

$^{4}${\small \em Quantum Research Center, Technology Innovation Institute, Abu Dhabi, UAE}\\

$^{5}${\small \em Institute of High Performance Computing (IHPC), Agency for Science, Technology and Research (A*STAR), 1 Fusionopolis
Way, \#16-16 Connexis, Singapore 138632, Republic of Singapore}

\vspace{0.25cm}

\end{center}

\date{\today}
\setcounter{equation}{0}
\setcounter{figure}{0}
\setcounter{table}{0}

\setcounter{section}{0}  
\renewcommand{\thesection}{S\arabic{section}} 

\setcounter{theorem}{0}
\renewcommand{\thetheorem}{S\arabic{theorem}}

\makeatletter

\renewcommand{\theequation}{S\arabic{equation}}
\renewcommand{\thefigure}{S\arabic{figure}}

\section{Proof of Theorem~\ref{QCC_thm}: Knill-Laflamme Conditions for Quantum cubature Codes}
\label{SM:section2}

This section provides the proof of Theorem~\ref{QCC_thm}. We demonstrate that quantum cubature codes (QCCs) constructed from cubature formulas of degree $t$ can correct $\lfloor t/2 \rfloor$ photon loss errors and detect up to $t$ photon gain and loss errors. We will first discuss the mapping between the cubature points and the coherent amplitudes $\boldsymbol{\alpha}$. Then, we verify that these QCCs satisfy the Knill-Laflamme (KL) conditions in the asymptotic regime, where the energy is sufficiently large such that distinct coherent states become approximately orthogonal. 

Specifically, we consider QCCs subject to noise-induced excitation changes in each mode, which are described by general ladder operators
\begin{equation} \label{SM:ladder_operator}
L_{\mathbf{p},\mathbf{q}}(\boldsymbol{a}^\dagger,\boldsymbol{a})= \prod_{j=1}^n (a_j^\dagger)^{p_j} a_j^{q_j}, 
\end{equation}
where $a_j^\dagger$ and $a_j$ are the creation and annihilation operators for the $j$-th mode, and $\mathbf{p},\mathbf{q}\in\mathbb{N}^{n}$ specify the number of photon creation and annihilation events in each of the $n$ modes, respectively. We analyze both the KL correction condition and the KL detection condition.

\subsection{Mapping cubature points to coherent amplitudes}
Cubature formulas are primarily studied for the real domain $\mathbb{R}^n$, where the cubature points are taken from Euclidean space. However, an $n$-mode coherent state $\ket{\boldsymbol{\alpha}}$ takes values in the complex space $\mathbb{C}^n$. Therefore, a mapping between real and complex spaces is needed to specify points in the complex domain while preserving the cubature formula property in Eq.~\eqref{eq:cubature} in the main text.

We mainly focus on two mapping methods. The simplest mapping method is natural embedding. One can directly identify $\mathbb{R}^d$ with the real subspace of $\mathbb{C}^d$; this naturally maintains the cubature formula property. When the real coordinates come from an even-dimensional space $\mathbb{R}^{2d}$, a standard mapping method pairs consecutive real coordinates $(x_1,x_2,\ldots,x_{2d})$ as 
\begin{equation}
(x_{2j-1}, x_{2j}) \mapsto x_{2j-1} + ix_{2j}
\end{equation}
to yield points in $\mathbb{C}^d$. To show that the coordinates after mapping still form a cubature formula that approximates certain integrals with a certain degree of polynomial precision, we introduce the following lemma:

\begin{lemma}[Complex extension of cubature formulas] \label{SM:Complexification_cubature}
Let $(\mathcal V,\mathcal W)$ be a cubature formula of degree $t$ on $\Omega\subset\mathbb R^{n}$ with measure $\sigma$, such that 
\begin{equation}  
\sum_{x\in\mathcal V} w_x\, g(x) \;=\; \int_{\Omega} g(x)\, d\sigma(x)
\quad\text{for all } g\in \mathscr{P}_t(\mathbb R^{n}),
\end{equation}
where $\mathscr{P}_t(\mathbb R^{n})$ denotes real-coefficient polynomials on $\mathbb R^{n}$ of total degree $\le t$.
Then the same equality holds for every complex-coefficient polynomial $h$ of total degree $\le t$
, namely
\begin{equation}
\sum_{x\in\mathcal V} w_x\, h(x) \;=\; \int_{\Omega} h(x)\, d\sigma(x).
\end{equation}
\end{lemma}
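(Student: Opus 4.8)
The plan is to reduce the complex-coefficient statement to the given real-coefficient one by splitting each coefficient into its real and imaginary parts. Write a generic $h \in \mathscr{P}_t(\mathbb{C}^{n})$ (complex-coefficient polynomials of total degree $\le t$) as $h(x) = \sum_{\boldsymbol{u}:\,|\boldsymbol{u}|\le t} c_{\boldsymbol{u}} \prod_{j=1}^{n} x_j^{u_j}$ with $c_{\boldsymbol{u}} \in \mathbb{C}$, and decompose $c_{\boldsymbol{u}} = a_{\boldsymbol{u}} + i\, b_{\boldsymbol{u}}$ with $a_{\boldsymbol{u}}, b_{\boldsymbol{u}} \in \mathbb{R}$. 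Setting $h_1(x) = \sum_{\boldsymbol{u}} a_{\boldsymbol{u}} \prod_j x_j^{u_j}$ and $h_2(x) = \sum_{\boldsymbol{u}} b_{\boldsymbol{u}} \prod_j x_j^{u_j}$, we obtain two genuine elements of $\mathscr{P}_t(\mathbb{R}^{n})$ with $h = h_1 + i\, h_2$. The one point worth stating explicitly is that, because the cubature nodes $x \in \mathcal{V}$ and the domain $\Omega$ lie in $\mathbb{R}^{n}$, every monomial $\prod_j x_j^{u_j}$ is real-valued there, so the identity $h = h_1 + i\, h_2$ holds pointwise on $\mathcal{V} \cup \Omega$ and not merely as a formal rearrangement of coefficients.

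Next I would invoke linearity on both sides of the cubature identity. The functional $g \mapsto \sum_{x \in \mathcal{V}} w_x\, g(x)$ is a finite weighted sum, hence $\mathbb{C}$-linear, and $g \mapsto \int_{\Omega} g\, d\sigma$ is likewise $\mathbb{C}$-linear and well-defined on $\mathscr{P}_t$ — integrability of degree-$\le t$ polynomials against $\sigma$ is already presupposed by the hypothesis that $(\mathcal{V},\mathcal{W})$ is a degree-$t$ cubature formula, so one may simply define $\int_{\Omega} h\, d\sigma := \int_{\Omega} h_1\, d\sigma + i \int_{\Omega} h_2\, d\sigma$. Applying the given degree-$t$ identity separately to $h_1, h_2 \in \mathscr{P}_t(\mathbb{R}^{n})$ yields $\sum_x w_x\, h_1(x) = \int_{\Omega} h_1\, d\sigma$ and $\sum_x w_x\, h_2(x) = \int_{\Omega} h_2\, d\sigma$; adding $i$ times the second equation to the first, and using $h = h_1 + i h_2$ on the left together with the definition of $\int_{\Omega} h\, d\sigma$ on the right, gives $\sum_x w_x\, h(x) = \int_{\Omega} h\, d\sigma$, which is the claim.

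There is essentially no analytic obstacle here — the content is purely linear-algebraic — so the ``hard part'' is merely to present it cleanly, keeping the formal coefficient decomposition distinct from pointwise evaluation (hence the emphasis that nodes and domain are real). For the application in the main text I would append one remark: combining this lemma with the real-to-complex embedding $\Phi$ of Eq.~\eqref{eq:complex_embedding}, any function $F$ of the complex amplitudes $\boldsymbol{\alpha} = \Phi(x)$ for which $F \circ \Phi$ is a polynomial in $x \in \mathbb{R}^{D}$ of total degree $\le t$ (possibly with complex coefficients) satisfies $\sum_x w_x\, F(\Phi(x)) = \int_{\Omega} F(\Phi(x))\, d\sigma(x)$. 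In particular $f_{(\mathbf{p},\mathbf{q})}(\boldsymbol{\alpha}^*,\boldsymbol{\alpha}) = \prod_j (\alpha_j^*)^{p_j} \alpha_j^{q_j}$, pulled back through $\Phi$, is a complex-coefficient polynomial in the real coordinates of total degree $|\mathbf{p}|+|\mathbf{q}|$, so whenever $|\mathbf{p}+\mathbf{q}| \le t$ the moment-matching condition of Eq.~\eqref{eq:moment-matching} transfers verbatim from the real cubature formula to the embedded complex constellation — which is exactly how this lemma feeds into the proof of Theorem~\ref{QCC_thm}.
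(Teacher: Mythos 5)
Your proof is correct and follows essentially the same route as the paper's: decompose $h = h_1 + i h_2$ into real-coefficient polynomials and apply the real cubature identity to each part by linearity. The extra care about pointwise evaluation on the real domain and the closing remark on the embedding $\Phi$ are fine additions but do not change the argument.
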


\begin{proof}
Write $h(x)=u(x)+iv(x)$ with $u(x),v(x)\in \mathscr{P}_t(\mathbb R^{n})$ real-coefficient polynomials.
By linearity of the sum and the integral,
\begin{equation}
\sum_{x\in\mathcal V} w_x\, h(x)
= \sum_{x\in\mathcal V} w_x\, u(x) + i \sum_{x\in\mathcal V} w_x\, v(x)
= \int_{\Omega} u(x)\, d\sigma(x) + i \int_{\Omega} v(x)\, d\sigma(x)
= \int_{\Omega} h(x)\, d\sigma(x).
\end{equation}
\end{proof}

Therefore, by using the standard mapping method from even-dimensional $\mathbb{R}^{2D}$ to $\mathbb{C}^{D}$, the resulting polynomial $f(\boldsymbol{\alpha})$ with $\boldsymbol{\alpha} \in \mathbb{C}^{D}$ is a complex-coefficient polynomial with the same degree $t$, and thus the cubature formula relation holds.

\subsection{Verifying the KL Correction Conditions}

 Consider a general QCC whose logical codeword are constructed from $K$ cubature formulas. The $k$-th cubature formula is specified by a point set $\mathcal{V}_k:=\{\boldsymbol{\alpha}_i\}_i$ and corresponding weights $\mathcal{W}_k:=\{w_i^{(k)}\}_i$. These cubature formulas compute identical integration values over the same domain $\Omega$ with dicrete weighted summation over distinct set of points $\mathcal{V}_k$, namely 
\begin{equation}
\sum_{\boldsymbol{\alpha} \in \mathcal{V}_1} w_{\boldsymbol{\alpha}}^{(1)} f(\boldsymbol{\alpha}) =  \sum_{\boldsymbol{\alpha} \in \mathcal{V}_2} w_{\boldsymbol{\alpha}}^{(2)} f(\boldsymbol{\alpha}) = \cdots =\int_{\Omega} f(\boldsymbol{\alpha})\mathrm{d}\sigma(\boldsymbol{\alpha}) = \mathrm{constant},
\end{equation}
holds for all polynomials $f$ of degree at most $t$, where $w^{(k)} \in \mathcal{V}_k$ is the associated weights of the $k$-th cubature formula with the subscript $(k)$.  The $k$-th logical codeword $\ket{\mathcal{C}_k}$ is defined as 
\begin{equation}\label{SM:QCC}
\ket{\mathcal{C}_k} \propto \sum_{\boldsymbol{\alpha} \in \mathcal{V}_k} \sqrt{w_{\boldsymbol{\alpha}}^{(k)}} \,|\boldsymbol{\alpha}\rangle,
\end{equation}
where $w_{\boldsymbol{\alpha}}^{(k)} \in \mathcal{W}_k$, and $\boldsymbol{\alpha} \in \mathcal{V}_k$. Note that we use the proportionality symbol rather than equality since the logical states defined above are properly normalized only in the large energy limit where $\|\boldsymbol{\alpha}\| \rightarrow \infty$. QCCs can correct loss errors or gain errors separately; however, they are not generally guaranteed to correct simultaneous loss and gain errors. For pure loss errors, the ladder operators $L_{\mathbf{p},\mathbf{q}}(\boldsymbol{a}^\dagger,\boldsymbol{a})$ in Eq.~\eqref{SM:ladder_operator} become: 
\begin{equation} \label{ladder_loss_operators}
L_{\mathbf{0},\mathbf{q}}(\boldsymbol{a}^\dagger,\boldsymbol{a})= \prod_{j=1}^n  a_j^{q_j}, 
\end{equation} 
where $\mathbf{p}=0$ indicates there are no gain errors and $\mathbf{q}= (q_1,q_2,\ldots, q_n)$ denote the number of loss events on each mode.

To satisfy the KL correction condition, we require
\begin{equation}\label{SM:eq:kl-correction}
\langle \mathcal{C}_k \mid E_{\mu}^\dagger E_{\nu} \mid \mathcal{C}_\ell \rangle = \delta_{k\ell} c_{\mu\nu},
\end{equation}
where $c_{\mu\nu}$ are constants independent of the logical state indices $k, \ell$, $\delta_{k\ell}$ is the Kronecker delta, and $E_\mu, E_\nu$ are taken from the error set $\{L_{\mathbf{0},\mathbf{q}}(\boldsymbol{a}^\dagger,\boldsymbol{a})\}$, where the error operators are associated with different $\mathbf{q}$. Setting $E_\mu = L_{\mathbf{0},\mathbf{q}^{(\mu)}}$ and $E_\nu = L_{\mathbf{0},\mathbf{q}^{(\nu)}}$, we have
\begin{equation}\label{KL_EE}
E_{\mu}^\dagger E_{\nu} = (L_{\mathbf{0},\mathbf{q}^{(\mu)}})^\dagger L_{\mathbf{0},\mathbf{q}^{(\nu)}} = \prod_{j=1}^n (a_j^\dagger)^{q_{j}^{(\mu)}} a_j^{q_j^{(\nu)}}.
\end{equation}
Substituting Eqs.~\eqref{KL_EE} and~\eqref{SM:QCC} into Eq.~\eqref{eq:kl-correction} yields
\begin{equation} \label{SM:KL_correction_off_diagonal}
\bra{\mathcal{C}_k} E_{\mu}^\dagger E_{\nu} \ket{\mathcal{C}_\ell} = \bra{\mathcal{C}_k} \prod_{j=1}^n (a_j^\dagger)^{q_j^{(\mu)}} a_j^{q_j^{(\nu)}} \ket{\mathcal{C}_\ell} \propto \sum_{\substack{\boldsymbol{\alpha} \in \mathcal{C}_k \\ \boldsymbol{\beta} \in \mathcal{C}_\ell}} \sqrt{w_{\boldsymbol\alpha}^{(k)} w_{\boldsymbol\beta}^{(\ell)}} f_{(\mathbf{q}^{(\mu)},\mathbf{q}^{(\nu)})}(\boldsymbol{\alpha}^*,\boldsymbol{\beta}) \langle{\boldsymbol{\alpha}}|\boldsymbol{\beta}\rangle,
\end{equation}
where $f_{(\mathbf{q}^{(\mu)},\mathbf{q}^{(\nu)})}(\boldsymbol{\alpha}^*,\boldsymbol{\beta}) = \prod_{j=1}^n (\alpha_j^*)^{q_j^{(\mu)}} \beta_j^{q_j^{(\nu)}}$. The inner product between coherent states is
\begin{equation}
|\langle{\boldsymbol{\alpha}}|\boldsymbol{\beta}\rangle| = \exp\Big(-\tfrac{1}{2}\|\boldsymbol{\alpha}-\boldsymbol{\beta}\|^2\Big),
\end{equation}
which depends on the Euclidean distance between constellation points.

In the large energy limit, $|\langle{\boldsymbol{\alpha}}|\boldsymbol{\beta}\rangle| \approx 0$ for $\boldsymbol{\alpha} \neq \boldsymbol{\beta}$. Therefore, for distinct logical codewords $\mathcal{C}_k \neq \mathcal{C}_\ell$, 
\begin{equation}\label{SM:KL_correction_diagonal}
    \bra{\mathcal{C}_k}\, E_{\mu}^\dagger E_{\nu} \,\ket{\mathcal{C}_\ell} \approx 0.
\end{equation}
This gives the satisfaction of the diagonal KL condition in Eq~\eqref{SM:eq:kl-correction} where $k\ne \ell$. 

For the diagonal KL condition where $k=\ell$, we have 
\begin{align}\label{SM:KL_diagonal}
\bra{\mathcal{C}_k} E_{\mu}^\dagger E_{\nu} \ket{\mathcal{C}_k} &\propto \sum_{\boldsymbol{\alpha}_x, \boldsymbol{\alpha}_y \in \mathcal{C}_k} \sqrt{w_{\boldsymbol{\alpha}_x}^{(k)} w_{\boldsymbol{\alpha}_y}^{(k)}} f_{(\mathbf{q}^{(\mu)},\mathbf{q}^{(\nu)})}(\boldsymbol{\alpha}_x^*,\boldsymbol{\alpha}_y) \langle{\boldsymbol{\alpha}_x}|\boldsymbol{\alpha}_y\rangle \notag \\
& \approx \sum_{\boldsymbol{\alpha} \in \mathcal{C}_k} w_{\boldsymbol{\alpha}}^{(k)} f_{(\mathbf{q}^{(\mu)},\mathbf{q}^{(\nu)})}(\boldsymbol{\alpha}^*,\boldsymbol{\alpha}).
\end{align}
where $f_{(\mathbf{q}^{(\mu)},\mathbf{q}^{(\nu)})}(\boldsymbol{\alpha}^*,\boldsymbol{\alpha}) = \prod_{j=1}^n (\alpha_j^*)^{q_j^{(\mu)}} \alpha_j^{q_j^{(\nu)}}$. The approximation in Eq.~\eqref{SM:KL_diagonal} also follows from the near-orthogonality of distinct coherent states in the large energy limit, which ensures $\langle{\boldsymbol{\alpha}_x}|\boldsymbol{\alpha}_y\rangle \approx \delta_{xy}$, eliminating cross terms. Thus, to satisfy the diagonal KL condition, we only need to ensure  
\begin{equation} \label{SM:KL_cubature}
     \sum_{\boldsymbol{\alpha} \in \mathcal{C}_k} w_{\boldsymbol{\alpha}}^{(k)} f_{(\mathbf{q}^{(\mu)},\mathbf{q}^{(\nu)})}(\boldsymbol{\alpha}^*,\boldsymbol{\alpha}) = \mathrm{constant}.
\end{equation}
Note that the left hand side of Eq.~\eqref{SM:KL_cubature} is a averaged monomial on a set of points $\{\boldsymbol{\alpha}\}$, as defined by the cubatue formula, this discrete summation approximate the integration up to certain polynomial degree $t$, namely
\begin{equation}  
 \sum_{\boldsymbol{\alpha} \in \mathcal{C}_k} w_{\boldsymbol{\alpha}}^{(k)} f_{(\mathbf{q}^{(\mu)},\mathbf{q}^{(\nu)})}(\boldsymbol{\alpha}^*,\boldsymbol{\alpha}) = \int_{\Omega} f_{(\mathbf{q}^{(\mu)},\mathbf{q}^{(\nu)})}(\boldsymbol{\alpha}^*,\boldsymbol{\alpha}) \, d\boldsymbol{\alpha}=\textrm{constant},
\quad\text{for all } f\in \mathscr{P}_t(\mathbb R^{n}),
\end{equation}
Thus, the cubature formulas provide a sufficient condition for satisfying Eq.~\eqref{SM:KL_cubature}. 

The polynomial $f_{(\mathbf{q}^{(\mu)},\mathbf{q}^{(\nu)})}$ has degree $|\mathbf{q}^{(\mu)}+\mathbf{q}^{(\nu)}|$. Since the error operators $E_\mu$ and $E_\nu$ are drawn from ${L_{\mathbf{0},\mathbf{q}}}$, the largest polynomial degree is $2|\mathbf{q}|_{\max}$. Therefore, the largest polynomial degree satisfying the KL condition is 
\begin{equation} 
|\mathbf{q}|_{\max}\le \lfloor t/2 \rfloor \le t/2. 
\end{equation}
Consequently, when combined with the KL diagonal condition in Eq.~\eqref{SM:KL_diagonal}, the QCC constructed from cubature formulas of degree $t$ forms a valid quantum error-correcting code and can correct up to $\lfloor t/2 \rfloor$ loss errors.

To see how QCCs handle the case where only gain errors happens
\begin{equation} \label{ladder_gain_operators}
L_{\mathbf{p},\mathbf{0}}(\boldsymbol{a}^\dagger,\boldsymbol{a})= \prod_{j=1}^n  (a_j^{\dagger}) ^{p_j}, 
\end{equation}  
where $\mathbf{p}=(p_1,p_2,...,p_n)$. By using the commutation relation for the $j$-th mode creation and annihilation operators~\cite{cahill1969ordered}, one obtains
\begin{equation}
a^{m}_j(a^\dagger_j)^{n}
=\sum_{k=0}^{\min(m,n)} \binom{m}{k}\binom{n}{k}\,k!\,(a^\dagger_j)^{\,n-k} a_j^{\,m-k}.
\end{equation}
This allows $ E_\mu^\dagger E_\nu$ with error operators $E_\mu$, $E_\nu$ taken from
$\{L_{\mathbf{p},\mathbf{0}}(\boldsymbol{a}^\dagger,\boldsymbol{a})\}$ to be expressed as 
\begin{equation}
   E_\mu^\dagger E_\nu = \prod_j a_j^{p_j^{(\mu)}}(a_j^{p_j^{(\nu)}})^\dagger = \sum_{k=0}^{\min(p_j^{(\mu)},p_j^{(\nu)})} \prod_j\binom{p_j^{(\mu)}}{k}\binom{p_j^{(\nu)}}{k}\,k!\,(a^\dagger_j)^{\,p_j^{(\nu)}-k} a_j^{\,p_j^{(\mu)}-k}.
\end{equation}
The KL diagonal condition in Eq.~\eqref{SM:KL_diagonal} now becomes:
\begin{equation}
\bra{\mathcal{C}_k} E_{\mu}^\dagger E_{\nu} \ket{\mathcal{C}_k}  \approx \sum_{\boldsymbol{\alpha} \in \mathcal{C}_k} w_{\boldsymbol{\alpha}}^{(k)} g_{(\mathbf{p}^{(\mu)},\mathbf{p}^{(\nu)})}(\boldsymbol{\alpha}^*,\boldsymbol{\alpha})
\end{equation}
where 
\begin{equation}
g_{(\mathbf{p}^{(\mu)},\mathbf{p}^{(\nu)})}(\boldsymbol{\alpha}^*,\boldsymbol{\alpha})
     = \sum_{k=0}^{\min(p_j^{(\mu)},p_j^{(\nu)})} \prod_j\binom{p_j^{(\mu)}}{k}\binom{p_j^{(\nu)}}{k}\,k!\,(\alpha^*_j)^{\,p_j^{(\nu)}-k} \alpha_j^{\,p_j^{(\mu)}-k}.
\end{equation}
Note that $g_{(\mathbf{p}^{(\mu)},\mathbf{q}^{(\nu)})}(\boldsymbol{\alpha}^*,\boldsymbol{\alpha})$ is a polynomial with the same highest degree as $f_{(\mathbf{p}^{(\mu)},\mathbf{p}^{(\nu)})}(\boldsymbol{\alpha}^*,\boldsymbol{\alpha})$, thus the same conclusion holds for the gain error cases. 

\subsection{Verifying the KL Detection Conditions}

The KL detection condition can be verified in a similar way, where now QCCs detect gain and loss error which happen at the same time. 
In this case, we need to ensure that for error operator set $\{E_\mu\}$,
\begin{equation}
\label{SM:eq_KL_detection}
\langle \mathcal{C}_k \mid E_\mu \mid \mathcal{C}_\ell \rangle = \delta_{k\ell} \lambda_\mu,
\end{equation}
where $\lambda_\mu$ is a constant. When the gain and loss errors happen simultaneously, the error $E_\mu$ is now described by the ladder operators:
\begin{equation}
L_{\mathbf{p},\mathbf{q}} = \prod_{j=1}^n (a_j^\dagger)^{q_j} a_j^{q_j}.
\end{equation}
Taking this into Eq.~\eqref{SM:eq_KL_detection}, one obtains:
\begin{equation}
\bra{\mathcal{C}_k} E_{\mu}\ket{\mathcal{C}_\ell} = \bra{\mathcal{C}_k} \prod_{j=1}^n (a_j^\dagger)^{p_j} a_j^{q_j} \ket{\mathcal{C}_\ell} \propto \sum_{\substack{\boldsymbol{\alpha} \in \mathcal{C}_k \\ \boldsymbol{\beta} \in \mathcal{C}_\ell}} \sqrt{w_{\boldsymbol\alpha}^{(k)} w_{\boldsymbol\beta}^{(\ell)}} f_{(\mathbf{p},\mathbf{q})}(\boldsymbol{\alpha}^*,\boldsymbol{\beta}) \langle{\boldsymbol{\alpha}}|\boldsymbol{\beta}\rangle,
\end{equation}
where $f_{(\mathbf{p},\mathbf{q})}(\boldsymbol{\alpha}^*,\boldsymbol{\beta}) = \prod_{j=1}^n (\alpha_j^*)^{p_j} \beta_j^{q_j}$. By the approximate orthogonality of different coherent states in the large-energy limit, the diagonal KL condition (for $k \neq \ell$) is satisfied. Moreover, by the properties of cubature formulas, the off-diagonal KL condition is also satisfied for $|\mathbf{p}+\mathbf{q}|\le t$, analogous to Eq.~\eqref{SM:KL_diagonal}. Consequently, the code can correct up to $\lfloor t/2 \rfloor$ gain and loss errors simultaneously.

\section{Discussion of Proposition~\ref{prop:const_size}: Bounds on Constellation Size}
\label{SM:section3}

In this section, we discuss bounds on the number of superposed coherent states per logical codeword in quantum cubature codes. Specifically, we show that there is a necessary lower bound arising from the underlying cubature formulas, as well as an upper bound demonstrating that quantum cubature codes can be constructed with at most a certain number of superposed coherent states per logical codeword.

\subsection{Lower bounds}
Since quantum cubature codes inherit their constellation structure from cubature formulas, classical results from approximation theory provide lower bounds on the number of cubature points, which translate into constraints on constellation size. For instance, the Fisher-type bound establishes a general lower bound for degree-$t$ cubature formulas on a domain $\Omega$: 
\begin{theorem*}[Fisher-type bound~\cite{stroud1971approximate,sawa2019euclidean}]\label{SM:thm_fisher}
    The number of points $\mathcal{N}$ for any degree-$t$ cubature formula satisfies
\begin{equation}\label{SM:fisher_type_bound} \mathcal{N}\ge\dim \mathscr{P}_{\lfloor t/2\rfloor}(\Omega) 
\end{equation} 
where $\mathscr{P}_{\lfloor t/2 \rfloor}(\Omega)$ denotes the polynomials space on $\Omega$ with degree at most $\lfloor t/2 \rfloor$.
\end{theorem*}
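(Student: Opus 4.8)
The plan is a rank (dimension-counting) argument centred on the \emph{evaluation map} at the cubature nodes, in the spirit of the classical Gaussian-quadrature lower bound. Write the degree-$t$ cubature formula as $(\mathcal{V},\mathcal{W})=\{(\boldsymbol{\alpha}_i,w_i)\}_{i=1}^{\mathcal{N}}$ with all weights $w_i>0$, and define the linear map
\begin{equation}
\mathrm{ev}:\ \mathscr{P}_{\lfloor t/2\rfloor}(\Omega)\ \longrightarrow\ \mathbb{R}^{\mathcal{N}},
\qquad
\mathrm{ev}(p)=\bigl(p(\boldsymbol{\alpha}_1),\ldots,p(\boldsymbol{\alpha}_{\mathcal{N}})\bigr).
\end{equation}
If $\mathrm{ev}$ is injective, then $\mathcal{N}=\dim\mathbb{R}^{\mathcal{N}}\ge\dim\,\mathrm{im}(\mathrm{ev})=\dim\mathscr{P}_{\lfloor t/2\rfloor}(\Omega)$, which is exactly the asserted bound; so the whole proof reduces to establishing $\ker(\mathrm{ev})=\{0\}$.

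To do this I would take any $p\in\mathscr{P}_{\lfloor t/2\rfloor}(\Omega)$ with $p(\boldsymbol{\alpha}_i)=0$ for all $i$, and observe that $p^2$ is a nonnegative polynomial of total degree at most $2\lfloor t/2\rfloor\le t$. It therefore lies in the space on which the cubature formula is \emph{exact}, so applying the defining identity~\eqref{eq:cubature} to $g=p^2$ gives $\int_{\Omega} p^2\,\mathrm{d}\sigma=\sum_{i}w_i\,p(\boldsymbol{\alpha}_i)^2=0$. Since $p^2\ge 0$ pointwise and $\sigma$ is a positive measure, this forces $p=0$ $\sigma$-almost everywhere, hence $p$ is the zero element of $\mathscr{P}_{\lfloor t/2\rfloor}(\Omega)$ once one adopts the standard convention that this space consists of polynomials of degree $\le\lfloor t/2\rfloor$ regarded as functions on $\Omega$ (equivalently, modulo those vanishing $\sigma$-a.e., so that vanishing $\sigma$-a.e.\ coincides with vanishing in $\mathscr{P}_{\lfloor t/2\rfloor}(\Omega)$). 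This yields injectivity and hence the bound. For the spherical- and Euclidean-design specializations used in the examples, $\Omega$ is a single sphere or a union of concentric spheres and $\mathscr{P}_{\lfloor t/2\rfloor}(\Omega)$ is the corresponding restricted polynomial space; the identical argument applies verbatim, and a sharper M\"oller-type refinement is obtained by restricting $\mathrm{ev}$ to a symmetry-adapted subspace on which the quadratic form $p\mapsto\sum_i w_i\,p(\boldsymbol{\alpha}_i)^2$ remains definite.

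The main obstacle is not any calculation but the bookkeeping surrounding the polynomial space $\mathscr{P}_{\lfloor t/2\rfloor}(\Omega)$: one must fix the convention that its elements are polynomials \emph{restricted to} $\mathrm{supp}(\sigma)$, and then check that squaring a chosen representative still lands in total degree $\le t$ after this reduction, so that the cubature identity is legitimately applicable. The other indispensable ingredient is positivity of the weights $w_i>0$ --- which is built into the QCC definition --- without which $\sum_i w_i\,p(\boldsymbol{\alpha}_i)^2$ need not be positive semidefinite and the kernel argument collapses. Once these conventions are in place, no further work is needed.
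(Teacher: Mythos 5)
Your proof is correct, and it is the standard argument for the Fisher-type (Stroud) bound; the paper itself does not prove this statement but simply quotes it from the cited literature (Stroud; Sawa), so there is no in-paper proof to compare against. The injectivity-of-the-evaluation-map argument via exactness on $p^2$ (degree $2\lfloor t/2\rfloor\le t$) is exactly how the bound is established in those references, and your care about the convention that $\mathscr{P}_{\lfloor t/2\rfloor}(\Omega)$ consists of polynomials restricted to $\mathrm{supp}(\sigma)$ is the right point to flag. One small correction to your closing remarks: positivity of the weights is \emph{not} actually needed for your argument. Since $p$ vanishes at every node, the sum $\sum_i w_i\,p(\boldsymbol{\alpha}_i)^2$ is zero term by term regardless of the signs of the $w_i$; the conclusion rests only on the positivity of the measure $\sigma$, which forces $\int_\Omega p^2\,\mathrm{d}\sigma>0$ for any $p$ not vanishing $\sigma$-a.e. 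Positive weights become essential for the sharper M\"oller-type refinements (and they do hold in the QCC setting), but the Fisher-type bound itself survives signed cubature formulas.
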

For centrally symmetric domains $\Omega$ where all odd-degree polynomial integrals vanish, the Fisher-type bound can be strengthened to the M\"{o}ller bound:
\begin{theorem*}[M\"{o}ller bound~\cite{moller1979lower,sawa2019euclidean}]\label{SM:thm_moller}
Let $t=2e + 1$ denote the degree of the cubature formula with centrally symmetric integral, then
\begin{equation}\label{SM:moller}
\mathcal{N} \geq \begin{cases}
2 \dim \mathscr{P}_e^*(\Omega) - 1 & \text{if } e \equiv 0 \pmod{2} \text{ and } \text{0 is included in the cubature points}, \\
2 \dim \mathscr{P}_e^*(\Omega) & \text{otherwise}.
\end{cases}
\end{equation}
where $\mathscr{P}_{e}^*(\Omega) = \bigoplus_{i=0}^{\left\lfloor e/2 \right\rfloor} \mathrm{Hom}_{e - 2i}(\Omega)$ and $\mathrm{Hom}_{e - 2i}(\Omega)$ is space of homogenous polynomial of degree $(e - 2i)$ on $\Omega$. 
\end{theorem*}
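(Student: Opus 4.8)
This is M\"oller's classical lower bound~\cite{moller1979lower}; I would follow the argument of~\cite{moller1979lower,sawa2019euclidean}, phrased in the present notation. I would begin from the Fisher-type mechanism. Let a degree-$(2e{+}1)$ cubature for the centrally symmetric measure $\sigma$ have nodes $\boldsymbol\alpha_1,\dots,\boldsymbol\alpha_{\mathcal N}$ with weights $w_i>0$, and let $\mathcal K\subseteq\mathscr P_{e+1}(\Omega)$ be the subspace of polynomials of degree $\le e{+}1$ vanishing at every node. The evaluation map $\mathscr P_{e+1}(\Omega)\to\mathbb R^{\mathcal N}$ has kernel $\mathcal K$, hence $\mathcal N\ge\dim\mathscr P_{e+1}(\Omega)-\dim\mathcal K$; and exactness on degrees $\le 2e{+}1$ forces $\langle P,Q\rangle_\sigma=\sum_i w_i P(\boldsymbol\alpha_i)Q(\boldsymbol\alpha_i)=0$ whenever $P\in\mathcal K$ and $Q\in\mathscr P_e(\Omega)$, so $\mathcal K$ lies in the space $\mathcal O_{e+1}$ of degree-$(e{+}1)$ $\sigma$-orthogonal polynomials, on which the leading-form map $P\mapsto P^{\circ}\in\mathrm{Hom}_{e+1}(\Omega)$ is injective. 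This alone recovers $\mathcal N\ge\dim\mathscr P_e(\Omega)$; the improvement amounts to strengthening $\dim\mathcal K\le\dim\mathrm{Hom}_{e+1}(\Omega)$ to $\dim\mathcal K\le\dim\mathrm{Hom}_{e+1}(\Omega)-\Delta$, with $\Delta:=\dim\mathscr P_e^{*}(\Omega)-\dim\mathscr P_{e-1}^{*}(\Omega)$, because the parity splitting $\mathscr P_e=\mathscr P_e^{*}\oplus\mathscr P_{e-1}^{*}$ turns $\dim\mathscr P_e(\Omega)+\Delta$ into $2\dim\mathscr P_e^{*}(\Omega)$.

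Central symmetry enters through a bilinear form on $\mathrm{Hom}_{e+1}(\Omega)$. Because $\sigma$ is centrally symmetric, every element of $\mathcal O_{e+1}$ has pure parity $(e{+}1)\bmod 2$, so for $P,Q\in\mathcal K$ the product $PQ$ has only even-parity homogeneous components and $PQ-P^{\circ}Q^{\circ}$ has degree $\le 2e$. Cubature exactness on degrees $\le 2e{+}1$ then gives $\langle P,Q\rangle_\sigma=\ell[P^{\circ}Q^{\circ}]$, where $\ell[f]:=\int_{\Omega}f\,\mathrm d\sigma-\sum_i w_i f(\boldsymbol\alpha_i)$ annihilates $\mathscr P_{2e+1}(\Omega)$ and hence descends to a symmetric bilinear form $\Gamma(f,g):=\ell[fg]$ on $\mathrm{Hom}_{e+1}(\Omega)$. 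On the subspace $U:=\{P^{\circ}:P\in\mathcal K\}$ the form $\Gamma$ therefore coincides with the $L^2(\sigma)$ inner product transported along the leading-form isomorphism, so $\Gamma$ is positive definite on $U$ and $\dim\mathcal K=\dim U$ is at most the positive index of $\Gamma$. Combining, $\mathcal N\ge\dim\mathscr P_{e+1}(\Omega)-(\text{positive index of }\Gamma)$, and what remains is to show that the negative index and radical of $\Gamma$ together occupy at least $\Delta$ dimensions. This is where the structure of the orthogonal polynomials of a centrally symmetric $\sigma$ -- equivalently, the vanishing pattern of their recurrence coefficients -- must be used to pin down the signature of $\Gamma$.

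This signature estimate for $\Gamma$ is the step I expect to be the main obstacle: unlike everything preceding it, it is not formal. The form $\Gamma$ is degenerate in general and depends on the cubature through $\ell$, so one must carefully control how $U$ meets the radical of $\Gamma$; this is the heart of M\"oller's original analysis, which I would invoke rather than reconstruct here. The remaining ingredient is routine: a standard separate treatment of the case $\boldsymbol 0\in\mathcal V$ -- where, for $e$ even, a node at the origin constrains only the constant monomial and thus permits $\dim\mathcal K$ to be one larger -- produces the stated $-1$ correction when $e\equiv 0\pmod 2$, and in all other cases one obtains $\mathcal N\ge 2\dim\mathscr P_e^{*}(\Omega)$. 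The full argument, including the computation of the signature of $\Gamma$, is given in~\cite{moller1979lower,sawa2019euclidean}.
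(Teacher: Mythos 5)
The paper does not prove this statement: the M\"oller bound is imported verbatim as a classical result of~\cite{moller1979lower,sawa2019euclidean} and is only \emph{used} (in its reduced form $\mathcal N\ge 2\dim\mathscr P_e^{*}(\Omega)$, since QCC constellations exclude the vacuum), so there is no in-paper argument to compare yours against. Judged on its own terms, your outline is a faithful skeleton of M\"oller's proof, and the steps you make explicit are correct: the identity $\langle P,Q\rangle_\sigma=\ell[P^{\circ}Q^{\circ}]$ for $P,Q\in\mathcal K$ follows exactly as you say (the nodes kill the discrete sum, and $PQ-P^{\circ}Q^{\circ}$ has degree $\le 2e$ because central symmetry forces pure parity on $\mathcal O_{e+1}$, so $\ell$ annihilates it); the injectivity of the leading-form map on $\mathcal K$ and the resulting bound $\dim\mathcal K\le\mathrm{ind}_+(\Gamma)$ are also sound.

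The genuine gap is the one you flag yourself: the estimate $\mathrm{ind}_-(\Gamma)+\dim\mathrm{rad}(\Gamma)\ge\dim\mathscr P_e^{*}(\Omega)-\dim\mathscr P_{e-1}^{*}(\Omega)$ is never argued, and it is not a technicality --- it is the entire content separating M\"oller's bound from the Fisher-type bound, which your formal steps alone already recover. Everything up to that point is measure-independent bookkeeping; the signature claim is where the centrally symmetric orthogonal-polynomial structure must actually be exploited, and deferring it to~\cite{moller1979lower} means the proposal is a correct reduction of the theorem to its hard step rather than a proof. The $-1$ correction for $e\equiv 0\pmod 2$ with $\boldsymbol 0\in\mathcal V$ is likewise only gestured at (the origin evaluation functional is nonzero only on the constant component of the even-parity block, which lets the two orthogonal row spaces share one dimension); this would need to be carried through the rank argument explicitly. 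Since the paper itself treats the result as a citation, the practical recommendation is to do the same and present your outline, if at all, as an expository sketch rather than a proof.
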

Notably, for QCCs case, since the superposition of coherent states does not include vacuum state ($\boldsymbol{\alpha}=0$), thus the lower bound in Eq.~\eqref{SM:moller} reduce to
\begin{equation} \label{SM:moller_reduced}
    \mathcal{N} \ge 2 \dim \mathscr{P}_e^*(\Omega)
\end{equation}

By combining the Fisher-type bound with the M\"{o}ller bound, we establish a necessary lower bound on the number of cubature points in cubature formulas, which directly translates to a lower bound on the constellation size for logical codewords of QCCs. Consequently, to construct a QCC with a specified error-correcting capability, the number of superposed coherent states in each logical constellation must meet or exceed this analytical threshold. This bound serves as a benchmark for evaluating the resource efficiency of quantum cubature codes: constellations that approach this lower bound achieve the desired degree-$t$ error correction with minimal superposed coherent states per logical codeword, thereby maximizing resource efficiency.

As a concrete example, consider cubature formulas on the unit sphere $\mathbb{S}^{D-1}\subset\mathbb{R}^D$ with the rotationally invariant surface measure. In such case the cubature formulas reduce to spherical designs, and the QCC constructed from such designs are QSCs. For even degree $t=2e$, the Fisher-type bound in Eq.~\eqref{SM:fisher_type_bound} implies 
\begin{equation}
\mathcal N \;\ge\; \dim \mathscr P_e(\mathbb{S}^{D-1}) = \binom{D+e-1}{e} + \binom{D+e-2}{e-1}.
\end{equation} 
For odd degree $t=2e+1$, the M\"{o}ller’s bound in Eq.~\eqref{SM:moller_reduced} gives
\begin{equation}
\mathcal N \;\ge\; 2\,\dim \mathscr P_e^{*}(\mathbb{S}^{D-1})
  = 2\binom{D+e-1}{e}.
\end{equation}
Therefore, the lower bound $\mathcal{N}$ for spherical designs of degree $t$ are
\begin{equation}\label{SM:fisher-type}
    \mathcal{N} \geq \begin{cases}  
\binom{D+e-1}{e} + \binom{D+e-2}{e-1} & \text{for\, } t = 2e, \\
2\binom{D+e-1}{e} & \text{for\, } t = 2e+1.
\end{cases}
\end{equation}
This is also known as Fisher-type bound for spherical designs~\cite{bannai2009survey}. Notably, cubature formulas that saturate the lower bound is called minimal cubature formulas, or alternatively, tight spherical design.

\begin{table}[!h]
\centering
\caption{The currently known tight spherical $t$-designs on $\mathbb{S}^{D-1}$ with $D\ge3$.}
\label{tab:tight_design}
\begin{tabular}{ccccl}
\hline
design & dimension & vertices & source  \\
\hline
$1^\star$ & $D$ & 2 & antipodal pair  \\
\hline
$2^\star$ & $D$ & $D+1$ & regular simplex  \\
\hline
$3^\star$ & $D$ & $2D$ & cross-polytope ($D$-orthoplex) \\
\hline
4 & 6 & 27 & $E_6$ root system  \\
4 & 22 & 275 &  Hamming scheme $H(11,3)$\\
\hline
5 & 3 & 12 & icosahedron\\
5 & 7 & 56 & $E_7$ root system \\
5 & 23 & 552 & McLaughlin-type 5-design \\
\hline
7 & 8 & 240 & $E_8$ root system  \\
7 & 23 & 4600 &McLaughlin-type 7-design \\
\hline
$11^\star$ & 24 & 196560 & minimal vectors of the Leech lattice $\Lambda_{24}$ \\
\hline
\end{tabular}
\end{table}

The lower bound derived from the Fisher-type bound and M\"{o}ller bound is generally not tight, namely for different design strength $t$ and dimension $D$, the tight design does not always exist. Specifically, for $\mathbb{S}^1 \subset \mathbb{R}^2$, every regular polygon with $p$ vertices forms a tight spherical $(p-1)$-design. For dimension $D\ge3$, however, only a few classes of tight spherical designs exist~\cite{bannai2009survey,delsarte_spherical_1977}, as summarized in Table~\ref{tab:tight_design}. In this table, ``$\star$'' indicates that the classification is complete for the corresponding design strength and the example shown is unique. For $t=4,5,7$, the classification remains incomplete. 

\begin{table}[htbp!]
\centering
\caption{Examples of QSCs with logical constellations constructed from tight spherical designs.}
\begin{tabular}{cccc
                @{\hspace{1.5em}}c
                @{\hspace{1.5em}}c
                @{\hspace{1.3em}}c
                @{\hspace{1.3em}}c}
\hline
logical constellation 
  & \makecell{vertices\\[-0.6ex](per logical const.)}
   &code constellation 
  & \makecell{vertices\\[-0.6ex](per codeword)} 
  & $n$ & $K$ & $\langle t_{\downarrow}, d_{\updownarrow}, d_{\downarrow}\rangle$ & $d_E$ \\
\hline
line segment ($D=2$) & 2 & $2K$-gon & $2K$  & 1 & $K$ & $\langle 2,2,2\rangle$ & $4\sin^2\frac{\pi}{2K}$ \\
icosahedron ($D=2$) & 12  & 2-icosahedron & 24& 2 & 2& $\langle4,6,6\rangle$ & 0.211 \\
tetrahedron (3-simplex) ($D=3$) & 3 & dodecahedron &20 & 2 & 5 & 3 & 0.509  \\
octahedron (3-orthoplex) ($D=3$) & 6   & 5-octahedron &30 & 2 & 5& 4&0.382 \\
hyper-octahedron (4-orthoplex) ($D=4$) & 8  & 24-cell & 24 & 2 & 3 & $\langle2,4,4\rangle$ & 1.000\\
hyper-tetrahedron (4-simplex) ($D=4$) & 5  & hyper-dodecahedron & 600 & 2 &120 & 3 & 0.073 \\
\hline
\end{tabular}
\label{tab:tight_QSC}
\end{table}

Using tight spherical designs as logical constellations, one can construct QSCs that can be considered ``efficient'' in terms of the number of superposed coherent states in the logical constellation. In Table~\ref{tab:tight_QSC}, we summarize some QSCs whose logical states are constructed from tight spherical designs. These real polytopes are embedded in 
$n$-dimensional complex space to form QSCs via different mapping methods~\cite{jain2024quantum}.

\subsection{Existence upper bounds} We have established analytical lower bounds on the number of superposed coherent states required to construct quantum cubature codes that correct loss errors of a given degree. A natural complementary question arises: does there exist an upper bound? Specifically, can any quantum cubature code with a specified error correction capability be constructed using at most a finite number of coherent states? Remarkably, the existence of such an upper bound follows from a classical result in cubature theory known as the Tchakaloff theorem~\cite{tchakaloff1957formules}, which establishes the existence upper bound $\mathcal{N}\leq \dim \mathscr{P}_t(\Omega)$. 
This indicates that, any quantum cubature code with degree-$t$ error protection capability can be constructed using at most $\dim \mathscr{P}_t(\Omega)$ superposed coherent states per logical codeword.
\end{document}